\newcommand{\blind}{1}
\newtheorem{thm}{Theorem}
\newtheorem{lem}{Lemma}
\newtheorem{prop}{Proposition}
\newtheorem{exm}{Example}
\newtheorem{defi}{Definition}
\DeclareMathOperator*{\argmin}{arg\,min}
\def\bco{\iffalse}
\def\Var{{\rm Var}}
\def\f{Fr\'echet }
\def\T{{ \mathrm{\scriptscriptstyle T} }}
\newcommand{\single}{\renewcommand{\baselinestretch}{1.2}\normalsize}
\newcommand{\double}{\renewcommand{\baselinestretch}{1.9}\normalsize}
\begin{document}
\def\thefootnote{*}\footnotetext{These authors contributed equally to this work}

\def\spacingset#1{\renewcommand{\baselinestretch}%
{#1}\small\normalsize} \spacingset{1}

%%%%%%%%%%%%%%%%%%%%%%%%%%%%%%%%%%%%%%%%%%%%%%%%%%%%%%%%%%%%%%%%%%%%%%%%%%%%%%

\if1\blind
{
  \title{\bf Deep \f Regression}
  \author{Su I Iao$^*$, Yidong Zhou$^*$, Hans-Georg M\"{u}ller\\
  Department of Statistics, University of California, Davis}
  \date{May 9, 2025}
  % Author 1\thanks{
  %   The authors gratefully acknowledge \textit{please remember to list all relevant funding sources in the unblinded version}}\hspace{.2cm}\\
  %   Department of YYY, University of XXX\\
  %   and \\
  %   Author 2 \\
  %   Department of ZZZ, University of WWW
    % }
  \maketitle
} \fi

\if0\blind
{
  \bigskip
  \bigskip
  \bigskip
  \begin{center}
    {\LARGE\bf Deep \f Regression}
\end{center}
  \medskip
} \fi

\bigskip
\begin{abstract}
Advancements in modern science have led to the increasing availability of non-Euclidean data in metric spaces. This paper addresses the challenge of modeling relationships between non-Euclidean responses and multivariate Euclidean predictors. We propose a flexible regression model capable of handling high-dimensional predictors without imposing parametric assumptions. Two primary challenges are addressed: the curse of dimensionality in nonparametric regression and the absence of linear structure in general metric spaces. The former is tackled using deep neural networks, while for the latter we demonstrate the feasibility of mapping the metric space where responses reside to a low-dimensional Euclidean space using manifold learning. We introduce a reverse mapping approach, employing local \f regression, to map the low-dimensional manifold representations back to objects in the original metric space. We develop a theoretical framework, investigating the convergence rate of deep neural networks under dependent sub-Gaussian noise with bias. The convergence rate of the proposed regression model is then obtained by expanding the scope of local \f regression to accommodate multivariate predictors in the presence of errors in predictors. Simulations and case studies show that the proposed model outperforms existing methods for non-Euclidean responses, focusing on the special cases of probability distributions and networks.
\end{abstract}

\noindent%
{\it Keywords:}  curse of dimensionality, deep learning, \f regression, non-Euclidean data, Wasserstein space.

\newpage
\spacingset{1.9} % DON'T change the spacing!
\section{Introduction}
\label{sec:intro}
Regression analysis serves as the cornerstone of statistical methodology, providing a fundamental framework for modeling complex relationships between response and predictor variables. The demand for comprehensive and versatile regression techniques has soared, driven by the prevalence of non-Euclidean responses. These responses encompass a broad spectrum of data types, ranging from networks \citep{mull:22:11}, covariance matrices \citep{dryd:09} and trees \citep{nye:17} to distributions \citep{pete:22} and other complex structures.

Real-world applications across various domains underscore the need for advanced regression methods. For instance, in transportation science, understanding the evolution of transportation networks in response to factors like weather conditions and public health data is of paramount interest \citep{mull:21:1}. These and related scenarios are covered by a general approach where one views the responses as random objects situated in metric spaces \citep{mull:16:7}.  Challenges that we address in this paper include the absence of a linear structure and the potential high dimensionality of multivariate predictors.

Regression with metric space-valued responses is thus of paramount interest. Early explorations include regression for non-Euclidean data using Euclidean embedding with distance matrices \citep{fara:14} and a Nadaraya-Watson kernel regression approach \citep{hein:09}. A \f regression approach was introduced more recently \citep{mull:19:6,mull:22:8,scho:22}. It extends linear and nonparametric regression techniques to accommodate metric space-valued responses through conditional \f means. However, these regression models either rely on strict assumptions akin to classical linear models for scalar responses or suffer from the curse of dimensionality in analogy to traditional nonparametric regression.

In practical applications, the need often arises to flexibly model metric space-valued responses with potentially high-dimensional predictors, so that these limitations prevent application of these methods for real-world challenges. Several recent works considered extending sufficient dimension reduction approaches \citep{ying:22, zhan:21:1}, single index models \citep{mull:23:3}, and principal component regression \citep{han:23} to \f regression to address the high-dimensionality of the predictors. These methods, while capable of reducing predictor dimensionality, are confined to the construction of a universal kernel or strong assumptions similar to those for classical single-index and linear models.

To address these limitations, we harness deep learning \citep{lecu:15}, taking advantage of its capability of handling high-dimensional predictors without compromising model flexibility. Deep learning is known for its capacity to automatically extract features from complex data, but has not been developed so far for non-Euclidean responses. The proposed regression model utilizes manifold learning \citep{ghoj:23} for the observed metric space-valued responses, generating low-dimensional representations. Manifold learning is a nonlinear dimensionality reduction technique that has proven useful for dimensionality reduction of infinite-dimensional functional data \citep{mull:12:1}, where data are assumed to lie on a low-dimensional manifold that is unknown. Indeed, many data residing in metric spaces, including parametric families of probability distributions or stochastic block models for networks, are expected to have such a low-dimensional manifold structure; we exploit such a structure for the proposed statistical modeling. 

We deploy a deep neural network that takes the observed predictors as input and produces the low-dimensional manifold representation as output. One key innovation of the proposed approach is that we demonstrate that local \f regression can be utilized to map the low-dimensional manifold representation to the original metric space so that the final output resides in the metric space where the random objects are situated.  This map, which has not yet been considered in the manifold learning literature for metric space-valued data, is crucial for the implementation of a regression model in statistics, as it is mandatory for interpretability to predict the random objects residing in metric spaces, rather than their low-dimensional representations. This comprehensive framework makes it possible to handle high-dimensional predictors while concurrently accommodating the complexities intrinsic to metric space-valued responses.

Our key contributions are as follows. First, we utilize deep learning for \f regression, highlighting its efficacy as a powerful tool for the regression of metric space-valued responses. Second, our model can handle high-dimensional predictors for metric space-valued data that allow a manifold representation. Third, we investigate the convergence rate of deep neural networks in the presence of dependent sub-Gaussian noise accompanied by bias. This analysis is conducted under a well-established general composition assumption on the regression function \citep{schm:20}, which necessitates an extension of the existing theoretical framework due to the presence of dependence and bias. Although \citet{bos:23} introduced the convergence rate of deep neural networks with dependent and non-central response variables, their findings are limited to local dependence, while the dependence considered here does not rely on any specific form and is not restricted to local dependence. Fourth, we demonstrate how local \f regression can be used to map the low-dimensional representation back to the original metric space, for which we establish convergence in the presence of errors in predictors. Fifth, we examine the asymptotic properties of the proposed regression model, including its rate of convergence. Lastly, we showcase results from various simulation settings and two real-data applications, demonstrating the superior performance of the proposed regression model compared to existing approaches.

The structure of the paper is as follows. In Section~\ref{sec:pre}, we introduce notation and provide background on deep neural networks and local \f regression. The proposed regression model is introduced in Section~\ref{sec:met}, with the asymptotic convergence rate established in Section~\ref{sec:the}. Simulation results for distributional data are presented in Section~\ref{sec:sim}. The proposed framework is illustrated in Section~\ref{sec:app} using networks arising from the New York yellow taxi records. Finally, we conclude with a brief discussion in Section \ref{sec:dis}. Auxiliary results and proofs as well as additional simulations for networks and a second data application for age-at-death distributions of human mortality are provided in the Supplementary Material.

\section{Preliminaries}
\label{sec:pre}
\subsection{Notation} 
Let $\mathbb{R}$ represent the set of real numbers and $\mathbb{R}_{+}$ denote the set of all positive real numbers. The set of all positive natural numbers is denoted as $\mathbb{N}_{+}$. Throughout this work, plain letters refer to (random) scalars and bold letters refer to (random) vectors. For a vector $\bm{v}$, we define $\bm{v}^{\oplus0}=1, \bm{v}^{\oplus1}=\bm{v}, \bm{v}^{\oplus2}=\bm{v}\bm{v}^\T$. We let $\|\cdot\|$ denote a norm and set $\|\cdot\|=\|\cdot\|_2$ (the Euclidean norm for vectors) by default unless stated otherwise. Additionally, $\|\cdot\|_0$ denotes the number of nonzero entries of a matrix or vector, and $\|\cdot\|_{\infty}$ represents the sup-norm for vectors, matrices, or functions. The notation $a \lesssim b$ implies the existence of a positive constant $C$ such that $a \leq Cb$. Moreover, $a\asymp b$ if both $a\lesssim b$ and $b\lesssim a$. Finally, $a \wedge b=\min \{a, b\}, a \vee b=\max \{a, b\}$.

\subsection{Deep neural networks}
\label{subsec:dnn}
We provide an overview of deep neural networks (DNNs) as function approximations. The focus lies on the nonparametric regression model with $p$-dimensional covariates in the unit hypercube, represented by observations $(\bm{X}_i, U_i)\in [0,1]^p\times \mathbb{R}$. Consider the following regression model
\[U_i = g_0(\bm{X}_i) + \epsilon_i, \quad i=1,\ldots,n,\]
where $g_0$ is an unknown function and $\epsilon_i$ is independent and identically distributed (i.i.d.) noise with zero mean and finite variance.

A DNN with $L$ hidden layers and layer width $\mathbf{p}=(p_0, \ldots, p_L, p_{L+1})^\T$ is defined recursively, forming a composite function $g: \mathbb{R}^{p_0} \mapsto \mathbb{R}^{p_{L+1}}$,
\begin{align}\label{eq:DNN}
&g(\bm{x})=W_L g_L(\bm{x})+\bm{b}_L \nonumber,\nonumber\\
&g_L(\bm{x})=\sigma\{W_{L-1} g_{L-1}(\bm{x})+\bm{b}_{L-1}\},\ \ldots,\ g_1(\bm{x})=\sigma(W_0 \bm{x}+\bm{b}_0),
\end{align}
where $W_l\in \mathbb{R}^{p_{l+1}\times p_{l}}$ is a weight matrix and $\bm{b}_l\in\mathbb{R}^{p_{l+1}}$ is a shift vector for $l=0,\ldots,L$. The rectifier linear unit (ReLU) activation function $\sigma=\max\{0, x\}$ \citep{nair:10} operates component-wise, that is, $\sigma\{(x_1,\ldots,x_{p_l})^\T\}=(\sigma(x_1),\ldots,\sigma(x_{p_l}))^\T$. We focus on the ReLU activation function due to its widespread use, empirical success, and theoretical support \citep{lecu:15}. Observe that $p_0=p$ and $p_{L+1}=1$ in the case of nonparametric regression with $p$-dimensional covariates.

Given $L \in \mathbb{N}_{+}$, $\mathbf{p} \in \mathbb{N}_{+}^{L+2}$, $s \in \mathbb{N}_{+}$ and a constant $D>0$, in analogy to \citet{schm:20}, we consider a class of sparse neural networks
\begin{align}\label{eq:DNN Class}
\mathcal{G}(L, s, \mathbf{p}, D)=\{&g \text{ of the form \eqref{eq:DNN}: }\max_{l=0,\ldots,L} \{ \|W_l\|_{\infty},\|\bm{b}_l\|_{\infty}\} \leq 1, \nonumber\\
&\sum_{l=1}^L\|W_l\|_0+\|\bm{b}_l\|_0 \leq s,\|g\|_{\infty} \leq D\}.
\end{align}
We focus on DNNs with bounded weight matrices and shift vectors. This choice is motivated by empirical observations where the learned matrices and vectors tend to remain relatively small, especially when initialized with modest-sized matrices and vectors. Additionally, the consideration of sparse neural networks arises from the practical issue of overfitting deep feedforward neural networks with fully connected layers. This concern is addressed by pruning weights, reducing the total number of nonzero parameters, and sparsely connecting layers \citep{han:15, srin:17}.

To estimate the regression function $g_0$, the basic paradigm is to minimize the empirical risk
\[\hat{g} = \argmin_{g\in \mathcal{G}} \frac{1}{n}\sum_{i=1}^n\{U_i - g(\bm{X}_i)\}^2.\]
Certain restrictions on the regression function $g_{0}$ are necessary to study the asymptotic properties of the estimator $\hat{g}$. Consider the H\"{o}lder class of smooth functions, where the ball of $\beta$-H\"{o}lder functions with radius $M>0$ is
\begin{align*}
    \mathcal{H}_p^\beta(\mathbb{D}, M)=&\{g: \mathbb{D}\subset \mathbb{R}^p \mapsto \mathbb{R}: \nonumber\\
    &\sum_{\bm{\alpha}:|\bm{\alpha}|<\beta}\|\partial^{\bm{\alpha}} g\|_{\infty}+\sum_{\bm{\alpha}:|\bm{\alpha}|=\lfloor\beta\rfloor} \sup _{\bm{x}, \bm{y} \in \mathbb{D}, \bm{x} \neq \bm{y}} \frac{|\partial^{\bm{\alpha}} g(\bm{x})-\partial^{\bm{\alpha}} g(\bm{y})|}{\|\bm{x}-\bm{y}\|_{\infty}^{\beta-\lfloor\beta\rfloor}} \leq M\}.
\end{align*}
Here $\lfloor\beta\rfloor$ denotes the largest integer strictly smaller than $\beta$, $\partial^{\bm{\alpha}}=\partial^{\alpha_1} \cdots \partial^{\alpha_p}$ with $\bm{\alpha}=(\alpha_1, \ldots, \alpha_p)^\T$ and $|\bm{\alpha}|=\sum_{k=1}^p \alpha_k$. Let $q \in \mathbb{N}$, $\bm{\beta}=(\beta_0, \ldots, \beta_q)^\T \in \mathbb{R}_{+}^{q+1}$ and $\mathbf{d}=$ $(d_0, \ldots, d_{q+1})^\T \in \mathbb{N}_{+}^{q+2}, \,\tilde{\mathbf{d}}=(\tilde{d}_0, \ldots, \tilde{d}_q)^\T \in \mathbb{N}_{+}^{q+1}$ with $\tilde{d}_j \leq d_j,\, j=0, \ldots, q$. We assume that the regression function $g_0$ belongs to a composite smoothness function class:
\begin{align}\label{eq:holder}
\mathcal{H}(q, \bm{\beta}, \mathbf{d}, \tilde{\mathbf{d}}, M)= & \{g=g_q \circ \cdots \circ g_0: g_j=(g_{j 1}, \ldots, g_{j d_{j+1}})^\T \text { and } \nonumber \\
& g_{jk} \in \mathcal{H}_{\tilde{d}_j}^{\beta_j}([a_j, b_j]^{\tilde{d}_j}, M), \text { for some }|a_j|,|b_j| \leq M\} .
\end{align}
Functions in this class are characterized by two kinds of dimensions, $\mathbf{d}$ and $\tilde{\mathbf{d}}$, where the latter represents the intrinsic dimension of the function. For example, if
\begin{align}\label{eq:composite}
    g(\bm{x})= g_{31}(&g_{21}[g_{11}\{g_{01}(x_1) + g_{02}(x_2)\} + g_{12}\{g_{03}(x_3) + g_{04}(x_4)\}] \nonumber\\
    + &g_{22}[g_{13}\{g_{05}(x_5)+ g_{06}(x_6)\} + g_{14}\{g_{07}(x_7) + g_{08}(x_8)\}]),\quad\bm{x}\in[0,1]^{8}
\end{align}
and $g_{i j}$ are twice continuously differentiable, then smoothness $\bm{\beta}=(2,2,2,2)^\T$, dimensions $\bm{d}=$ $(8,8,4,2,1)^\T$ and $\tilde{\bm{d}}=(1,1,1,1)^\T$. The composite smoothness class $\mathcal{H}(q, \bm{\beta}, \mathbf{d}, \tilde{\mathbf{d}}, M)$ contains a rich set of classical smoothness classes and has been widely adopted \citep{baue:19, schm:20, kohl:21}.

Writing 
\begin{equation}
\label{eq:kappan}
    \kappa_n=\max _{j=0, \ldots, q} n^{-\tilde{\beta}_j /(2 \tilde{\beta}_j+\tilde{d}_j)}\text{ with }\tilde{\beta}_j=\beta_j \prod_{k=j+1}^q(\beta_k \wedge 1),
\end{equation} 
we require the following common assumption on the structure of the DNN model \citep{schm:20, zhon:22}.
\begin{enumerate}[label=(C\arabic*)]
    \item $D\geq \max\{M,1\}$, $L=O(\log n)$, $s=O(n \kappa_n^2 \log n)$ and $n \kappa_n^2 \lesssim \min_{l=1, \ldots, L}p_l \leq \max_{l=1, \ldots, L}p_l \lesssim$ $n$.\label{itm:d1}
\end{enumerate}
Assumption \ref{itm:d1} characterizes the flexibility of the neural network family $\mathcal{G}(L, s, \bm{p}, D)$ as per \eqref{eq:DNN Class}. While more flexible neural networks (with increased depth and width) are capable of achieving smaller approximation errors \citep{anth:99, schm:20}, they often introduce larger estimation errors. Thus, Assumption \ref{itm:d1} provides a balance between approximation and estimation errors.

\subsection{Local \f regression}
Let $(\Omega, d)$ be a totally bounded metric space with distance $d: \Omega\times\Omega\mapsto [0,\infty)$ and $\mathcal{T}$ be a closed interval in $\mathbb{R}$. Consider a random pair $(\bm{Z}^0, Y)$ with a joint distribution $F$ supported on the product space $\mathcal{T}^r\times\Omega$. We denote the marginal distributions of $\bm{Z}^0$ and $Y$ as $F_{\bm{Z}^0}$ and $F_Y$, respectively, and the conditional distribution of $Y$ given $\bm{Z}^0$ as $F_{Y|\bm{Z}^0}$. The \f mean and \f variance of random objects in metric spaces \citep{frec:48}, as generalizations of usual notions of mean and variance, are defined as
\[y_{\oplus}=\argmin_{y\in\Omega}E\{d^2(Y, y)\},\quad V_{\oplus}=E\{d^2(Y, y_{\oplus})\},\]
where the existence and uniqueness of the minimizer depend on the structural properties of the underlying metric space and are guaranteed for Hadamard spaces.

The conditional \f mean of $Y$ given $\bm{Z}^0=\bm{z}$, corresponding to the regression function, is
\begin{equation}
\label{eq:cfm}
v(\bm{z})=\argmin_{y\in\Omega} Q(y, \bm{z}),\quad Q(\cdot, \bm{z})=E\{d^2(Y, \cdot)|\bm{Z}^0=\bm{z}\}.
\end{equation}
The lack of an algebraic structure in general metric spaces poses challenges in modeling and estimating the regression function $v(\bm{z})$. To address this challenge, previous work \citep{mull:19:6} suggested leveraging the algebraic structure within the predictor space $\mathcal{T}^r$. Let $K$ be an $r$-dimensional kernel corresponding to a symmetric $r$-dimensional probability density function and write $K_h(\bm{z})=h^{-r}K(H^{-1}\bm{z})$ with $H=hI_r$ for a suitably chosen bandwidth $h$. Extending local linear regression to metric space-valued responses, local \f regression (LFR) \citep{mull:19:6} models conditional \f means $v(\bm{z})$ as weighted \f means,
\begin{equation}
\label{eq:lfr}
v_h(\bm{z})=\argmin_{y\in\Omega}Q_h(y, \bm{z}),\quad Q_h(\cdot, \bm{z})=E\{w(\bm{Z}^0, \bm{z}, h)d^2(Y, \cdot)\}.
\end{equation}
Here 
\[w(\bm{Z}^0, \bm{z}, h)=\frac{1}{\mu_0-\mu_1^\T\mu_2^{-1}\mu_1}K_h(\bm{Z}^0-\bm{z})\{1-\mu_1^\T\mu_2^{-1}(\bm{Z}^0-\bm{z})\},\]
where $\mu_j=E\{K_h(\bm{Z}^0-\bm{z})(\bm{Z}^0-\bm{z})^{\oplus j}\}$ for $j=0, 1, 2$. For an i.i.d. sample $\{(\bm{Z}^0_i, Y_i)\}_{i=1}^n$, the corresponding empirical version is
\begin{equation}
    \label{eq:lfrtilde}
    \tilde{v}_h(\bm{z})=\argmin_{y\in\Omega}\tilde{Q}_h(y, \bm{z}),\quad \tilde{Q}_h(\cdot, \bm{z})=\frac{1}{n}\sum_{i=1}^n\tilde{w}(\bm{Z}^0_i, \bm{z}, h)d^2(Y_i, \cdot).
\end{equation}
Here 
\begin{equation}
    \label{eq:wtilde}
    \tilde{w}(\bm{Z}^0_i, \bm{z}, h)=\frac{1}{\tilde{\mu}_0-\tilde{\mu}_1^\T\tilde{\mu}_2^{-1}\tilde{\mu}_1}K_h(\bm{Z}^0_i-\bm{z})\{1-\tilde{\mu}_1^\T\tilde{\mu}_2^{-1}(\bm{Z}^0_i-\bm{z})\},
\end{equation}
where $\tilde{\mu}_j=n^{-1}\sum_{i=1}^nK_h(\bm{Z}^0_i-\bm{z})(\bm{Z}^0_i-\bm{z})^{\oplus j}$ for $j=0, 1, 2$.

\section{Methodology}
\label{sec:met}
Consider a totally bounded metric space $(\Omega, d)$ with metric $d:\Omega\times\Omega\mapsto [0,\infty)$. Let $(\bm{X}, Y)$ be a random pair in $\mathbb{R}^p\times\Omega$, where $\bm{X}$, without loss of generality, is assumed to lie in the unit hypercube $[0, 1]^p$. The conditional \f mean $E_\oplus(Y|\bm{X})=\argmin_{y\in\Omega}E\{d^2(Y, y)|\bm{X}\}$, representing the regression function, is considered to lie in a subset $\mathcal{M}\subset \Omega$, where $\mathcal{M}$ is a manifold isomorphic to a subspace of $\mathbb{R}^r$ with $r$ denoting its intrinsic dimension. Note that the metric space-valued response $Y\in\Omega$, subject to random noise, is not constrained to lie on the manifold. Suppose we observe $n$ independent realizations $\{(\bm{X}_i, Y_i)\}_{i=1}^n$ of $(\bm{X}, Y)$. 

There exists a bijective representation map of the manifold $\mathcal{M}$,
\begin{equation}
    \label{eq:psi}
    \bm{\psi}:\mathcal{M}\mapsto\mathbb{R}^r.
\end{equation}
Algorithms such as ISOMAP \citep{tene:00}, t-SNE \citep{van:08}, UMAP \citep{mcin:18}, Laplacian eigenmaps \citep{belk:03}, or diffusion maps \citep{coif:06} are widely used to estimate the representation map $\bm{\psi}$. In this work, we select the ISOMAP algorithm due to its reliable performance in our simulations; see Section S.11 of the Supplementary Material for detailed comparisons with alternative algorithms. A description of the ISOMAP algorithm is provided in Section S.2 of the Supplementary Material, and additional details can be found in \citet{tene:00}. Manifold representations have proved useful across many scenarios, including latent parametric families of probability distributions or stochastic block models for networks, where a low-dimensional manifold may closely align with parameters of the underlying model \citep{mull:22:12}.

We propose Deep \f regression (DFR) as a unified framework to model the relationship between the metric space-valued response $Y$ and a possibly high-dimensional multivariate predictor $\bm{X}$, as summarized in Figure~\ref{fig:diag}. Here the regression function
\begin{equation}
    \label{eq:m}
    m=v\circ \bm{g}_0:[0, 1]^p\mapsto\Omega
\end{equation}
exhibits a composite structure, 
where $\bm{g}_0=(g_{01}, \ldots, g_{0r})^\T$ and $v$ correspond to a nonparametric regression step using DNNs and local \f regression, respectively. When the intrinsic dimension is $r>1$, we fit one DNN for each coordinate.
\begin{figure}[t]
\single
\centering
    \begin{tikzcd}
	&& {\bm{Z}^0\in\mathbb{R}^r} \\
	\\
	\\
        \\
	{\bm{X}\in[0,1]^p} &&&& {E_\oplus(Y|\bm{X})}
	\arrow["{ \text{ Local \f Regression: } v}", marking, allow upside down, rightharpoonup, shift left=0.25ex, from=1-3, to=5-5, sloped, pos=0.5, overlay]
	\arrow["{\text{ Deep \f Regression: }m = v \circ \bm{g}_0}", marking, allow upside down, no tail, from=5-1, to=5-5, sloped, pos=0.5, overlay]
	\arrow["{\text{ ISOMAP: }\bm{\psi}}", marking, allow upside down, shift right=0.25ex, leftharpoondown, swap, from=1-3, to=5-5, sloped, pos=0.5, overlay]
	\arrow["{\text{ Deep Neural Networks: }\bm{g}_0}", marking, allow upside down, no tail, from=5-1, to=1-3, sloped, pos=0.5, overlay]
    \end{tikzcd}
    \caption{Schematic diagram for the deep \f regression $m= v \circ \bm{g}_0$, where $\bm{Z}^0=\bm{\psi}\{E_\oplus(Y|\bm{X})\}$ denotes the low-dimensional representation of the regression function $E_\oplus(Y|\bm{X})$, $v$ the local \f regression and $\bm{g}_0$ deep neural networks.}
    \label{fig:diag}
\end{figure}

The proposed framework includes the ISOMAP representation $\bm{\psi}$,  which is unknown and must be estimated from the data, yielding $\hat{\bm{\psi}}$. We postulate that the manifold can be well identified at the sample points through the ISOMAP algorithm.
\begin{enumerate}[label=(C\arabic*)]
    \setcounter{enumi}{1}
    \item \label{itm:iso1} Consider $\bm{Z}^0_i=(Z_{i1}^0, \ldots, Z_{ir}^0)^\T=\bm{\psi}\{E_\oplus(Y_i | \bm{X}_i)\}$ and $\bm{Z}_i=(Z_{i1}, \ldots, Z_{ir})^\T=\hat{\bm{\psi}}(Y_i)$. For each $j=1, \ldots, r$, there exists a function $\pi_j:[0,1]^{p\times n}\mapsto\mathbb{R}$ such that
    \begin{equation*}
        Z_{ij}-Z_{ij}^0=\pi_j(\bm{X}_1,\ldots,\bm{X}_n)(\epsilon_{ij}+u_{nj}),\quad i=1,\ldots,n,
    \end{equation*}
    where the bias $u_{nj}\to 0$ as $n\to\infty$, and $\{\epsilon_{ij}\}_{i=1}^n$ are i.i.d. (mean zero) sub-Gaussian random variables with parameter 1 and independent of $\{\bm{X}_i\}_{i=1}^n$, i.e., $P(|\epsilon_{ij}|\geq t)\leq 2\exp\{-t^2\}$. Furthermore, there exists a constant $C_{\pi_j}$ such that 
    \begin{equation*}
        C_{\pi_j} = \sup_{\{\bm{x}_1,\ldots,\bm{x}_n\}}|\pi_j (\bm{x}_1,\ldots,\bm{x}_n) | < \infty.
    \end{equation*}
\end{enumerate}
Assumption \ref{itm:iso1} is designed to be flexible, as it accounts for both the random error in $Y_i$ and the estimation error induced by the ISOMAP algorithm. Specifically, it allows for dependence between the estimated low-dimensional representations $Z_{ij}$ and includes potential bias $u_{nj}$ introduced by ISOMAP. The function $\pi_j$ quantifies the dependence among $Z_{ij}$ across different $i$, with an upper bound $C_{\pi_j}$ that is specific to each coordinate. This assumption is consistent with similar considerations in \citet{kuri:22}, where deep neural networks are employed for analyzing time series data. The sub-Gaussian noise $\epsilon_{ij}$ reflects the randomness arising from both the inherent variability in $Y_i$ and the approximation error in the manifold estimation process.

We employ DNNs that take the observed predictors $\bm{X}_i$ as input and produce the estimated low-dimensional representations $\bm{Z}_i$ as output. For each $j=1, \ldots, r$, by Assumption \ref{itm:iso1}, using  $Z_{ij}^0=g_{0j}(\bm{X}_i)$ we may model the relationship between $Z_{ij}$ and $\bm{X}_i$ as
\begin{equation}\label{eq:our DNN}
    Z_{ij} = g_{0j}(\bm{X}_i) + \pi_j(\bm{X}_1,\ldots,\bm{X}_n)(\epsilon_{ij}+u_{nj}).
\end{equation}
Unlike the prevalent focus on independent Gaussian/sub-Gaussian noise without bias in the existing literature on DNNs \citep{baue:19, schm:20, kohl:21}, the noise $\pi_j(\bm{X}_1,\ldots,\bm{X}_n)(\epsilon_{ij}+u_{nj})$ considered in \eqref{eq:our DNN} exhibits dependence across different $i$ and is accompanied by a vanishing bias $u_{nj}$, characterized by the randomness of a sub-Gaussian variable. This generalization necessitates a nontrivial expansion of the current theoretical framework for DNNs \citep{schm:20}, which is provided below in Theorem \ref{thm:DNN}. The estimation of $g_{0j}$ is performed by minimizing the empirical risk 
\begin{equation}\label{eq:DNN Ln}
    \hat{g}_j = \argmin_{g\in \mathcal{G}} \frac{1}{n}\sum_{i=1}^n \{Z_{ij} - g(\bm{X}_i)\}^2.
\end{equation}

\begin{algorithm}[t]
\single
	\SetAlgoLined
	\KwIn{data $\{(\bm{X}_i, Y_i)\}_{i=1}^n$, and a new predictor level $\bm{X}$.}
	\KwOut{prediction $\hat{m}(\bm{X})$.}
        ISOMAP: 
        $$\{\hat{\bm{\psi}}(Y_i)\}_{i=1}^n\longleftarrow\text{estimated low-dimensional representations of }\{Y_i\}_{i=1}^n,$$
        and denote $\hat{\bm{\psi}}(Y_i)$ by $\bm{Z}_i=(Z_{i1}, \ldots, Z_{ir})^\T$\;
        DNNs: for $j=1, \ldots, r$,
        $$\hat{g}_{j}\longleftarrow \argmin_{g\in \mathcal{G}} \frac{1}{n}\sum_{i=1}^n \{Z_{ij} - g(\bm{X}_i)\}^2,$$
        where $\hat{g}_j$ is the $j$th deep neural network trained using sample $\{(\bm{X}_i, Z_{ij})\}_{i=1}^n$\;
        $\{\hat{\bm{Z}}_i\}^n_{i=1}\longleftarrow \{(\hat{g}_1(\bm{X}_i),\ldots,\hat{g}_r(\bm{X}_i))^\T\}_{i=1}^n$\;
        LFR: 
        $$\hat{m}(\bm{X})=\hat{v}_h\circ \hat{\bm{g}}(\bm{X})=\hat{v}_h(\hat{\bm{Z}})\longleftarrow\argmin_{y\in\Omega}\frac{1}{n}\sum_{i=1}^n\hat{w}(\hat{\bm{Z}}_i, \hat{\bm{Z}}, h)d^2(Y_i, y),$$
        where $\hat{\bm{Z}}=\hat{\bm{g}}(\bm{X})=(\hat{g}_1(\bm{X}),\ldots,\hat{g}_r(\bm{X}))^\T$.
	\caption{Deep \f Regression}
	\label{alg:dfr}
\end{algorithm}

The proposed framework also deploys local \f regression using the  fitted value of DNNs $\hat{\bm{Z}}_i = \hat{\bm{g}}(\bm{X}_i)=\{\hat{g}_1(\bm{X}_i),\ldots,\hat{g}_r(\bm{X}_i)\}^\T$ in lieu of the unobservable $\bm{Z}^0_i$ as predictors. This leads to an errors-in-variables version of  local \f regression,
\begin{equation}
    \label{eq:lfrhat}
    \hat{v}_h(\bm{z})=\argmin_{y\in\Omega}\hat{Q}_h(y, \bm{z}),\quad \hat{Q}_h(\cdot, \bm{z})=\frac{1}{n}\sum_{i=1}^n\hat{w}(\hat{\bm{Z}}_i, \bm{z}, h)d^2(Y_i, \cdot),
\end{equation}
where 
\begin{equation}
    \label{eq:what}
    \hat{w}(\hat{\bm{Z}}_i, \bm{z}, h)=\frac{1}{\hat{\mu}_0-\hat{\mu}_1^\T\hat{\mu}_2^{-1}\hat{\mu}_1}K_h(\hat{\bm{Z}}_i-\bm{z})\{1-\hat{\mu}_1^\T\hat{\mu}_2^{-1}(\hat{\bm{Z}}_i-\bm{z})\},
\end{equation}
with $\hat{\mu}_j=n^{-1}\sum_{i=1}^nK_h(\hat{\bm{Z}}_i-\bm{z})(\hat{\bm{Z}}_i-\bm{z})^{\oplus j}$ for $j=0, 1, 2$. The implementation of the minimization of $\hat{Q}_h(\cdot, \bm{z})$ for various commonly encountered metric spaces is detailed in Section \ref{supp:lfr:implement} of the Supplementary Material.

Consequently, the proposed estimate of the regression function $m$ as per \eqref{eq:m} that relates the predictor $\bm{X}$ to the metric space-valued response $Y$ is 
\begin{equation}
    \label{eq:mhat}
    \hat{m}=\hat{v}_h\circ\hat{\bm{g}}
\end{equation}
with $\hat{\bm{g}}=(\hat{g}_1, \ldots, \hat{g}_r)^\T$ and $\hat{v}_h$ defined in \eqref{eq:DNN Ln} and \eqref{eq:lfrhat}. This comprehensive framework is designed to effectively handle potentially high-dimensional predictors while simultaneously addressing the inherent complexities associated with metric space-valued responses. Further details are provided in Algorithm \ref{alg:dfr}.

\section{Asymptotic Properties}
\label{sec:the}
We first derive the convergence rate of DNN estimates \eqref{eq:DNN Ln}, extending the existing theoretical framework for DNNs to accommodate dependent noise and bias. We then delve into the convergence rate of the local \f regression estimate \eqref{eq:lfrhat}, considering multivariate predictors contaminated by errors induced by DNNs. Finally, the convergence rate of the DFR estimate as per \eqref{eq:mhat} is established by combining the convergence rates of DNNs and local \f regression. For DNN, we require \ref{itm:d1} and the following common assumption \citep{schm:20, kohl:21, zhon:22}.
\begin{enumerate}[label=(C\arabic*)]
    \setcounter{enumi}{2}
    \item The target function $g_{0j}$ belongs to the H\"{o}lder class $\mathcal{H} (q,\bm{\beta},\mathbf{d},\tilde{\mathbf{d}},M)$ defined in \eqref{eq:holder} for $j = 1,\ldots, r$.\label{itm:d2}
\end{enumerate}
The H\"{o}lder class in Assumption \ref{itm:d2} encompasses a rich class of smooth functions \citep{schm:20, kohl:21}; see Section~\ref{subsec:dnn} for more details. The following result formalizes the convergence rate of DNN estimates \eqref{eq:DNN Ln}.

\begin{thm}
\label{thm:DNN}
    Consider the nonparametric regression model \eqref{eq:our DNN}. Under \ref{itm:d1}--\ref{itm:d2}, for any $i=1,\ldots,n$ and $j = 1,\ldots,r$, there exists an estimator $\hat{g}_j$ in \eqref{eq:DNN Ln} such that
    $$E[\{\hat{g}_j(\bm{X}_i) - g_{0j}(\bm{X}_i)\}^2] = O(\kappa_n^2 \log^3 n+|u_{nj}|)$$
    and
    $$E[\{\hat{g}_j(\bm{X}) - g_{0j}(\bm{X})\}^2] = O(\kappa_n^2 \log^3 n+|u_{nj}|),$$
    where  $\bm{X}$ is a new realization of the predictor that is independent of the sample $\{(\bm{X}_i, Y_i)\}_{i=1}^n$ and $\kappa_n$ is defined in \eqref{eq:kappan}. It immediately follows that
    $$E(\|\hat{\bm{Z}}_i - \bm{Z}_i^0\|^2)= O(\kappa_n^2 \log^3 n+u^2_{n})$$
    and 
    $$E(\|\hat{\bm{Z}} - \bm{Z}^0\|^2)= O(\kappa_n^2 \log^3 n+u^2_{n}),$$
    where $\hat{\bm{Z}}=(\hat{g}_1(\bm{X}),\ldots,\hat{g}_r(\bm{X}))^\T$, $\bm{Z}^0 = (g_{01}(\bm{X}),\ldots,g_{0r}(\bm{X}))^\T$, $u_n^2=\max_{j=1, \ldots, r}|u_{nj}|$ and $g_{0j}$ are the components of $\bm{g}_0$ as defined in \eqref{eq:m}.
\end{thm}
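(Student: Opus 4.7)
The plan is to condition on the design $\{\bm{X}_i\}_{i=1}^n$ and reduce Theorem~\ref{thm:DNN} to a variant of the oracle inequality of \citet{schm:20}, handling the bias contribution separately via the uniform boundedness of $\mathcal{G}$. Write $\eta_{ij}=\pi_j(\bm{X}_1,\ldots,\bm{X}_n)\epsilon_{ij}$ and $c_{nj}=\pi_j(\bm{X}_1,\ldots,\bm{X}_n)u_{nj}$, so that \eqref{eq:our DNN} becomes $Z_{ij}=g_{0j}(\bm{X}_i)+\eta_{ij}+c_{nj}$. Conditional on $\{\bm{X}_i\}_{i=1}^n$, the $\eta_{ij}$ are independent, mean-zero, and sub-Gaussian with variance proxy uniformly bounded by $C_{\pi_j}^2$, while $c_{nj}$ is a constant. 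For any $g^*\in\mathcal{G}$, expanding $(Z_{ij}-g)^2$ and using that $\hat g_j$ is the empirical risk minimizer yields the basic inequality
\begin{equation*}
\|\hat g_j-g_{0j}\|_n^2
\le \|g^*-g_{0j}\|_n^2
+\frac{2}{n}\sum_{i=1}^n(\hat g_j-g^*)(\bm{X}_i)\eta_{ij}
+\frac{2c_{nj}}{n}\sum_{i=1}^n(\hat g_j-g^*)(\bm{X}_i).
\end{equation*}

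For the approximation error I pick $g^*$ to be the sparse ReLU network approximating $g_{0j}\in\mathcal{H}(q,\bm{\beta},\mathbf{d},\tilde{\mathbf{d}},M)$ in sup-norm at rate $\kappa_n$, which exists by \citet{schm:20}, Theorem~5, under \ref{itm:d1}--\ref{itm:d2}; hence $\|g^*-g_{0j}\|_n^2=O(\kappa_n^2)$ almost surely. For the bias cross-term, the uniform constraint $\|g\|_\infty\le D$ on $\mathcal{G}$ and $|c_{nj}|\le C_{\pi_j}|u_{nj}|$ give the deterministic bound
\begin{equation*}
\Bigl|\frac{2c_{nj}}{n}\sum_{i=1}^n(\hat g_j-g^*)(\bm{X}_i)\Bigr|\le 4D\,C_{\pi_j}|u_{nj}|.
\end{equation*}
It is this direct boundedness step, rather than an AM--GM split that would produce a quadratic $u_{nj}^2$ term, that is responsible for the stated $|u_{nj}|$ rate.

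The main obstacle is the stochastic cross-term, because the dependence of $\eta_{ij}$ on the full design through $\pi_j$ prevents verbatim application of the i.i.d.\ framework in \citet{schm:20}. Conditioning on $\{\bm{X}_i\}_{i=1}^n$ reduces the problem to fixed-design nonparametric regression with independent sub-Gaussian errors of variance proxy at most $C_{\pi_j}^2$. The metric-entropy bound for the sparse ReLU class $\mathcal{G}(L,s,\mathbf{p},D)$ (\citet{schm:20}, Lemma~5) combined with a chaining/peeling argument over shells $\{g\in\mathcal{G}:2^k\kappa_n\le\|g-g^*\|_n\le 2^{k+1}\kappa_n\}$ then yields
\begin{equation*}
E\!\left[\sup_{g\in\mathcal{G}}\Bigl\{\frac{2}{n}\sum_{i=1}^n(g-g^*)(\bm{X}_i)\eta_{ij}-\tfrac14\|g-g^*\|_n^2\Bigr\}\,\Big|\,\bm{X}_1,\ldots,\bm{X}_n\right]\lesssim \frac{sL\log^2 n}{n},
\end{equation*}
which, using $s\asymp n\kappa_n^2\log n$ and $L\asymp\log n$ from \ref{itm:d1}, equals $O(\kappa_n^2\log^3 n)$. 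All constants depend only on $C_{\pi_j},D,M$, so the bound is uniform in the conditioning design.

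Combining the three pieces, the $\tfrac14\|\hat g_j-g^*\|_n^2$ term is absorbed into the left-hand side via $\|\hat g_j-g^*\|_n^2\le 2\|\hat g_j-g_{0j}\|_n^2+2\|g^*-g_{0j}\|_n^2$, and taking expectation over the design yields $E[\{\hat g_j(\bm{X}_i)-g_{0j}(\bm{X}_i)\}^2]=O(\kappa_n^2\log^3 n+|u_{nj}|)$. For an independent new predictor $\bm{X}$, the prediction risk equals $E[\|\hat g_j-g_{0j}\|_{L^2(F_{\bm{X}})}^2]$, which is related to the empirical norm via a standard uniform-concentration step over $\mathcal{G}$ that incurs only a lower-order term and preserves the rate. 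Finally, the two vector statements follow by summing the coordinate-wise bounds over the fixed index $j=1,\ldots,r$ and invoking $u_n^2=\max_j|u_{nj}|$ together with the definitions of $\hat{\bm{Z}},\bm{Z}^0$.
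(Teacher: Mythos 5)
Your route is essentially the paper's: condition on the design so that $\pi_j(\bm{X}_1,\ldots,\bm{X}_n)\epsilon_{ij}$ becomes independent, mean-zero sub-Gaussian noise with parameter bounded by $C_{\pi_j}$; bound the bias cross-term directly through $\|g\|_\infty\le D$ so that only $|u_{nj}|$ (not $u_{nj}^2$) appears; take the approximation error from \citet{schm:20} under \ref{itm:d1}--\ref{itm:d2}; and control the stochastic cross-term through the entropy of $\mathcal{G}(L,s,\mathbf{p},D)$. The paper does the last step via a $\delta$-covering combined with the maximal inequality of Lemma \ref{lem:sub-gaussian} inside Lemma \ref{lem:lemma4 extension}, while you use peeling with a localized penalty; that difference is cosmetic.

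There is, however, a genuine gap. Your basic inequality controls only the design-averaged quantity $E(\|\hat g_j-g_{0j}\|_n^2)$ with $\|f\|_n^2=n^{-1}\sum_{k}f(\bm{X}_k)^2$, and you pass to the pointwise in-sample statement $E[\{\hat g_j(\bm{X}_i)-g_{0j}(\bm{X}_i)\}^2]$ for a fixed $i$ by simply ``taking expectation over the design.'' These are not the same object: a bound on the average permits an individual coordinate to be up to $n$ times larger, the pointwise error is not a prediction risk because $\hat g_j$ depends on $\bm{X}_i$, and a naive exchangeability argument is unavailable since \ref{itm:iso1} does not assume $\pi_j$ is symmetric in its arguments (permuting the pairs $(\bm{X}_k,\epsilon_{kj})$ changes the noise scale $\pi_j(\bm{X}_1,\ldots,\bm{X}_n)$ entering the responses). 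The paper devotes part (I) of Lemma \ref{lem:lemma4 extension} precisely to this bridge, comparing $E[\{\tilde g_j(\bm{X}_i)-g_{0j}(\bm{X}_i)\}^2]$ with $E(\|\tilde g_j-g_{0j}\|_n^2)$ via a second covering argument and a Bernstein-type bound for the differences $f_{l}(\bm{X}_k,\bm{X}_i)$, at the cost of an extra $\log N_\delta/n+\delta$ term; this per-$i$ control is what is actually needed downstream, since the fitted values $\hat{\bm{Z}}_i$ at each $i$ are the inputs to the local \f regression. A smaller point: your intermediate bound $sL\log^2 n/n$ equals $\kappa_n^2\log^4 n$ under \ref{itm:d1}, not $\kappa_n^2\log^3 n$; the entropy of the sparse class is of order $sL\log n$, which restores the stated rate, so this is an arithmetic slip rather than a substantive error.
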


The proof of this result is provided in the Supplementary Material. It involves decomposing both in-sample and out-of-sample errors into approximation error, ISOMAP-induced error, neural network complexity and the difference between the empirical risk of any estimator and the global minimum over the neural network space. Bounds on these components are used to control the total error, as formalized in Lemma \ref{lem:lemma4 extension} in the Supplementary Material, which extends Lemma 4 of \citet{schm:20} to accommodate our setting with dependent sub-Gaussian noise. This leads to Theorem \ref{thm:DNN}, following similar arguments as in Theorem 1 and Corollary 1 of \citet{schm:20}, but adjusted for the noise structure and complexity of the proposed DFR model. In contrast to the prevailing emphasis on DNNs with independent Gaussian/sub-Gaussian noise, Theorem \ref{thm:DNN} establishes the convergence rate of DNNs in the presence of dependent sub-Gaussian noise, including an asymptotically vanishing bias. We expect this extension to also be useful for other work in deep learning.

Note that the convergence rates in Theorem \ref{thm:DNN} depend on the bias term $u_{nj}$, smoothness $\bm{\beta}$, and intrinsic dimension $\tilde{\mathbf{d}}$ of functions $g_{0j}$ as per \eqref{eq:holder}, rather than the predictor dimension $p$. Consequently, the proposed method circumvents the curse of dimensionality, maintaining faster convergence rates particularly when the intrinsic dimension $\tilde{\mathbf{d}}$ remains relatively low. For instance, disregarding the bias term momentarily, consider the true function $g_{0j}$ structured in \eqref{eq:composite}. For such a target function, traditional smoothing methods are fraught with a slow convergence rate of order $n^{-2/(p+4)} = n^{-1/6}$. In contrast, the proposed DNN approach attains a convergence rate of order $n^{-2/5}\log^{3/2} n$, reaching the optimal one-dimensional nonparametric regression rate up to a polylogarithmic factor. This improved convergence rate effectively mitigates the curse of dimensionality inherent in nonparametric regression.

Next, we derive the convergence rate for errors-in-variables multivariate local \f regression, for which we require the following assumptions.
\begin{enumerate}[label=(C\arabic*)]
    \setcounter{enumi}{3}
    \item \label{itm:l1} The minimizers $v(\bm{z})$, $v_h(\bm{z})$, $\tilde{v}_h(\bm{z})$ and $\hat{v}_h(\bm{z})$ for local \f regression as per \eqref{eq:cfm}, \eqref{eq:lfr}, \eqref{eq:lfrtilde}, and \eqref{eq:lfrhat} exist and are unique, the last two almost surely. Additionally, for any $\varepsilon>0$,
    \begin{enumerate}[label = (\roman*)]
        \item $\inf_{d\{v(\bm{z}), y\}>\varepsilon}[Q(y, \bm{z})-Q\{v(\bm{z}), \bm{z}\}]>0,$
        \item $\liminf_{h\to0}\inf_{d\{v_h(\bm{z}), y\}>\varepsilon}[Q_h(y, \bm{z})-Q_h\{v_h(\bm{z}), \bm{z}\}]>0,$
        \item $P(\inf_{d\{\tilde{v}_h(\bm{z}), y\}>\varepsilon}[\tilde{Q}_h(y, \bm{z})-\tilde{Q}_h\{\tilde{v}_h(\bm{z}), \bm{z}\}]>0)\to1,$
        \item $P(\inf_{d\{\hat{v}_h(\bm{z}), y\}>\varepsilon}[\hat{Q}_h(y, \bm{z})-\hat{Q}_h\{\hat{v}_h(\bm{z}), \bm{z}\}]>0)\to1.$
    \end{enumerate}
    \item \label{itm:l2} Let $B_\delta(y)\subset\Omega$ be the ball of radius $\delta$ centered at $y$ and $N\{\varepsilon, B_\delta(y), d\}$ be its covering number using balls of size $\varepsilon$ (see Definition \ref{def:covering} in Section \ref{supp:back} of the Supplementary Material). Then for any $y\in\Omega$, 
    \begin{equation*}
        \int_0^1[1+\log N\{\delta\varepsilon, B_{\delta}(y), d\}]^{1/2}d\varepsilon=O(1)\quad\text{as }\delta\to0.
    \end{equation*}
    \item \label{itm:l3} There exists $\eta_1, \eta_2, \eta_3>0$, $C_1, C_2, C_3>0$ and $\gamma_1, \gamma_2, \gamma_3>1$ such that
    \begin{enumerate}[label = (\roman*)]
    \item $\inf_{d\{v(\bm{z}), y\}<\eta_1}[Q(y, \bm{z})-Q\{v(\bm{z}), \bm{z}\}-C_1d\{v(\bm{z}), y\}^{\gamma_1}]\geq0,$
    \item $\liminf_{h\to0}\inf_{d\{v_h(\bm{z}), y\}<\eta_2}[Q_h(y, \bm{z})-Q_h\{v_h(\bm{z}), \bm{z}\}-C_2d\{v_h(\bm{z}), y\}^{\gamma_2}]\geq0,$
    \item $P(\inf_{d\{\tilde{v}_h(\bm{z}), y\}<\eta_3}[\tilde{Q}_h(y, \bm{z})-\tilde{Q}_h\{\tilde{v}_h(\bm{z}), \bm{z}\}-C_3d\{\tilde{v}_h(\bm{z}), y\}^{\gamma_3}]\geq0)\to1$.
    \end{enumerate}
\end{enumerate}
Assumption \ref{itm:l1} is a standard requirement to ensure the consistency of M-estimators, as outlined in \citet{well:23}. We note that the existence and uniqueness of the minimizers in \ref{itm:l1} is guaranteed for the case of Hadamard spaces, the curvature of which is bounded above by 0 \citep{stur:03}. Assumption \ref{itm:l2} entails conditions on the metric entropy of $\Omega$ to control its size. In Proposition \ref{prop:lfrpc}, it constrains the bracketing integral used to derive the tail bound, following Theorem 2.14.16 in \citet{well:23}. The curvature assumption in \ref{itm:l3} regulates the behavior of $(Q_h - Q)$, $(\tilde{Q}_h - Q_h)$, and $(\hat{Q}_h - \tilde{Q}_h)$ near their minima, which makes it possible to derive convergence rates. These assumptions stem from empirical process theory and are frequently employed in the literature concerning statistical analysis for metric space-valued data \citep{mull:19:6, scho:22, mull:22:8, zhan:21:1}. 

We now provide some statistically relevant examples of metric spaces, for which we demonstrate in Section \ref{supp:prop} of the Supplementary Material that these metric spaces satisfy  Assumptions \ref{itm:l1}--\ref{itm:l3}; see Propositions \ref{prop:wass}, \ref{prop:mat} and \ref{prop:cov}.

\begin{exm}
\label{exm:was}
    Let $\Omega$ be the space of probability distributions on a closed interval of $\mathbb{R}$ with finite second moments, equipped with the Wasserstein metric $d_{\mathcal{W}}$ where
    \[d_{\mathcal{W}}^2(\mu, \nu)=\int_0^1\{F_{\mu}^{-1}(p)-F_{\nu}^{-1}(p)\}^2dp\]
    for any two probability distributions $\mu, \nu\in\Omega$. The Wasserstein space $(\Omega, d_{\mathcal{W}})$ satisfies Assumptions \ref{itm:l1}--\ref{itm:l3} with $\gamma_1=\gamma_2=\gamma_3=2$.
\end{exm}
\begin{exm}
\label{exm:mat}
    	Let $\Omega$ be the space of graph Laplacians of undirected weighted networks with a fixed number of nodes $m$ and bounded edge weights equipped with the Frobenius metric $d_F$. The space $(\Omega, d_F)$ satisfies Assumptions \ref{itm:l1}--\ref{itm:l3} with $\gamma_1=\gamma_2=\gamma_3=2$.
\end{exm}
\begin{exm}
\label{exm:cov}
    Let $\Omega$ be the space of $m$-dimensional covariance matrices with bounded diagonal entries or correlation matrices equipped with the Frobenius metric $d_F$. The space $(\Omega, d_F)$ satisfies Assumptions \ref{itm:l1}--\ref{itm:l3} with $\gamma_1=\gamma_2=\gamma_3=2$.
\end{exm}

We next derive the convergence rate of multivariate local \f regression, extending previous results for local \f regression \eqref{eq:lfrtilde} with one-dimensional predictors. The proof relies on M-estimation techniques arising from empirical process theory. The kernel and distributional assumptions \ref{itm:k1} and \ref{itm:p1} listed in the Appendix are standard for local regression. In the following, $r$ is as in \eqref{eq:psi} the intrinsic dimension of the manifold representation \eqref{eq:psi}.

\begin{prop}
\label{prop:lfrpc}
If \ref{itm:k1}--\ref{itm:p1}, \ref{itm:l1} (i)--(ii), \ref{itm:l2} and \ref{itm:l3} (i)--(ii) hold, then
\begin{equation}\label{lfrpc:bias}
    d\{v_h(\bm{z}), v(\bm{z})\}=O(h^{2/(\gamma_1-1)})
\end{equation}
as $h\to0$. If furthermore $nh^r\to\infty$, then
\begin{equation}\label{lfrpc:variance}
    d\{\tilde{v}_h(\bm{z}), v_h(\bm{z})\}=O_p\{(nh^r)^{-1/\{2(\gamma_2-1)\}}\}.
\end{equation}
\end{prop}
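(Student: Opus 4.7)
My plan is to handle the two rates through the classical $M$-estimation paradigm: for each pair of objectives, combine an upper bound on how much the two objectives disagree near their minimizers with the polynomial curvature supplied by Assumption \ref{itm:l3}. The nontrivial point, shared by both parts, is that a \emph{pointwise} bound on $|Q_h - Q|$ or $|\tilde{Q}_h - Q_h|$ would yield only the suboptimal rates $h^{2/\gamma_1}$ and $(nh^r)^{-1/(2\gamma_2)}$; the stated sharper rates require a \emph{local} modulus of the form ``difference vanishes linearly in $d\{y,v(\bm{z})\}$''.

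For the bias part \eqref{lfrpc:bias}, I let $G(\bm{z}',y) := E\{d^2(Y,y)\mid \bm{Z}^0=\bm{z}'\}$, so that $Q(y,\bm{z})=G(\bm{z},y)$ and $Q_h(y,\bm{z}) = E\{w(\bm{Z}^0,\bm{z},h)\,G(\bm{Z}^0,y)\}$. Under \ref{itm:k1}--\ref{itm:p1} the local-linear weights satisfy $E[w]=1$, $E[w(\bm{Z}^0-\bm{z})]=0$, and $E[w(\bm{Z}^0-\bm{z})^{\oplus 2}]=O(h^2)$, so a second-order Taylor expansion of $G(\cdot,y)$ about $\bm{z}$ yields
\[
Q_h(y,\bm{z})-Q(y,\bm{z}) = \tfrac{1}{2}\operatorname{tr}\bigl\{\nabla^2_{\bm{z}'}G(\bm{z},y)\,E[w(\bm{Z}^0-\bm{z})^{\oplus 2}]\bigr\}+o(h^2).
\]
Once $\nabla^2_{\bm{z}'}G(\bm{z},\cdot)$ is shown to be Lipschitz in $y$, via standard regularity of the conditional second moment of $d(Y,\cdot)$, this gives $|[Q_h-Q](y,\bm{z})-[Q_h-Q](v(\bm{z}),\bm{z})|\lesssim h^2\,d\{y,v(\bm{z})\}$. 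Combining with $Q_h(v_h,\bm{z})\le Q_h(v(\bm{z}),\bm{z})$ and the curvature condition \ref{itm:l3}(i), $C_1 d\{v_h(\bm{z}),v(\bm{z})\}^{\gamma_1}\le Q(v_h(\bm{z}),\bm{z})-Q(v(\bm{z}),\bm{z})$, and rearranging, I get $d\{v_h(\bm{z}),v(\bm{z})\}=O(h^{2/(\gamma_1-1)})$.

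For the stochastic part \eqref{lfrpc:variance}, I work with the recentered function class $\mathcal{F}_\delta = \{w(\cdot,\bm{z},h)[d^2(Y,y)-d^2(Y,v_h(\bm{z}))]: d\{y,v_h(\bm{z})\}\le \delta\}$. Because $w\lesssim h^{-r}$ is supported in an $h$-neighbourhood of $\bm{z}$ and $d^2(Y,\cdot)$ is Lipschitz on the bounded space $\Omega$, a standard kernel computation gives an $L_2(P)$-envelope $\lesssim h^{-r/2}\delta$; the metric-entropy assumption \ref{itm:l2} then implies that the bracketing integral of $\mathcal{F}_\delta$ is $O(\delta h^{-r/2})$. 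Theorem 2.14.16 of \citet{well:23} yields
\[
E\!\sup_{d\{y,v_h(\bm{z})\}\le \delta}\bigl|[\tilde{Q}_h-Q_h](y,\bm{z})-[\tilde{Q}_h-Q_h](v_h(\bm{z}),\bm{z})\bigr|\lesssim \frac{\delta}{\sqrt{nh^r}}.
\]
Coupled with the margin \ref{itm:l3}(ii), the standard peeling argument of empirical-process theory converts this linear modulus into the rate $d\{\tilde{v}_h(\bm{z}),v_h(\bm{z})\}=O_p\bigl((nh^r)^{-1/\{2(\gamma_2-1)\}}\bigr)$.

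The main obstacles are producing the two local moduli: (i) showing that $Q_h-Q$, as a function of $y$, has leading term \emph{linear} in $d\{y,v(\bm{z})\}$ rather than merely uniformly $O(h^2)$, which relies on the bias-cancellation of the local-linear weight combined with smoothness of the conditional second-moment function in the response argument; and (ii) controlling the $L_2(P)$-envelope of $\mathcal{F}_\delta$ with the correct $h^{-r/2}$ kernel scaling inside a totally bounded metric space, which requires a careful kernel-moment calculation. The remaining steps follow by standard $M$-estimation arguments.
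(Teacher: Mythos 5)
Your overall architecture matches the paper's: a modulus for $Q_h-Q$ (respectively $\tilde Q_h-Q_h$) that is linear in the distance to the minimizer, combined with the curvature condition \ref{itm:l3} and a peeling argument, with the stochastic modulus obtained from a bracketing maximal inequality (Theorem 2.14.16 of \citet{well:23}) — this is exactly the paper's route. For the bias part, your Taylor expansion of $G(\bm{z}',y)=E\{d^2(Y,y)\mid \bm{Z}^0=\bm{z}'\}$ exploiting $E\{w\}=1$ and $E\{w(\bm{Z}^0-\bm{z})\}=0$ is essentially the paper's argument in different form: the paper instead expands the weighted conditional density (Lemma \ref{lem:muj tauj}) to get $\int w(\bm{x},\bm{z},h)\,dF_{\bm{Z}^0|Y}(\bm{x}\mid y)=f_{\bm{Z}^0|Y}(\bm{z},y)/f_{\bm{Z}^0}(\bm{z})+O(h^2)$ uniformly in $y$, after which the linear-in-$d$ modulus is immediate from $|d^2(y',y_1)-d^2(y',y_2)|\le 2\,\mathrm{diam}(\Omega)\,d(y_1,y_2)$. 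Two repairable points in your version: you must difference in $y$ \emph{before} expanding (expand $G(\cdot,y)-G(\cdot,v(\bm{z}))$, whose Hessian in $\bm{z}'$ is uniformly $\lesssim d\{y,v(\bm{z})\}$ under \ref{itm:p1}); as written, the uniform $o(h^2)$ remainder in your display carries no factor $d\{y,v(\bm{z})\}$, and the rearrangement then only yields $o(h^{2/\gamma_1})$, short of the claimed $O(h^{2/(\gamma_1-1)})$. Also, the curvature in \ref{itm:l3}(i) is local, so you need $d\{v_h(\bm{z}),v(\bm{z})\}=o(1)$ first (from the uniform $O(h^2)$ bound together with \ref{itm:l1}(i)--(ii)) before it can be invoked.

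The genuine gap is in the stochastic part. $\tilde Q_h$ is built from the \emph{estimated} local-linear weights $\tilde w(\bm{Z}_i^0,\bm{z},h)$, which involve the random moments $\tilde\mu_0,\tilde\mu_1,\tilde\mu_2$; hence $\tilde Q_h-Q_h$ is not the centered empirical process indexed by your class $\mathcal{F}_\delta$, which is defined with the population weights $w$. Writing $w_i=w(\bm{Z}_i^0,\bm{z},h)$, $\tilde w_i=\tilde w(\bm{Z}_i^0,\bm{z},h)$ and $D_i(y,\bm{z})=d^2(Y_i,y)-d^2\{Y_i,v_h(\bm{z})\}$, your bracketing bound controls only $n^{-1}\sum_i[w_iD_i(y,\bm{z})-E\{w D(y,\bm{z})\}]$; the remaining piece $n^{-1}\sum_i(\tilde w_i-w_i)D_i(y,\bm{z})$ is of the same order and must be handled separately, which the paper does via $\tilde\mu_j-\mu_j=O_p\{(h^{2j-r}n^{-1})^{1/2}\}$ and a Sherman--Morrison step, giving a bound $O_p\{\delta(nh^r)^{-1/2}\}$ uniformly over $d\{y,v_h(\bm{z})\}\le\delta$ — and only \emph{in probability}, since $\tilde\mu_2^{-1}$ and $\tilde\sigma_0^{-2}$ need not be integrable; this is precisely why the paper runs the peeling on the high-probability event $B_R$ rather than through a pure expectation bound. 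Your displayed inequality bounding $E\sup_{d\le\delta}\bigl|[\tilde Q_h-Q_h](y,\bm{z})-[\tilde Q_h-Q_h]\{v_h(\bm{z}),\bm{z}\}\bigr|$ is therefore unjustified as stated. Finally, localizing to shells of radius at most a fixed $\eta$ requires the consistency $d\{\tilde v_h(\bm{z}),v_h(\bm{z})\}=o_p(1)$, which does not come for free from peeling; the paper establishes it separately (Lemma \ref{lem:vn-vh-op1}) through a uniform-convergence and asymptotic-equicontinuity argument that your appeal to ``standard peeling'' leaves out.
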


In general, the convergence rates are determined by the local geometry near the minima as quantified in \ref{itm:l3}. For metric spaces in Examples \ref{exm:was}, \ref{exm:mat} and \ref{exm:cov}, $\gamma_1=\gamma_2=2$ and convergence rates of the bias and stochastic deviation are $d\{v_h(\bm{z}), v(\bm{z})\}=O(h^2)$ and $d\{\tilde{v}_h(\bm{z}), v_h(\bm{z})\}=O_p\{(nh^r)^{-1/2}\}$, respectively. With $h=n^{-1/(4+r)}$, the multivariate local \f regression estimate achieves the rate $d\{\tilde{v}_h(\bm{z}), v(\bm{z})\}=O_p(n^{-2/(4+r)})$, corresponding to the well-known optimal rate for standard local linear regression.

The low-dimensional representations $\bm{Z}^0_i$ in \eqref{eq:lfrtilde} are unobservable, necessitating the use of \eqref{eq:lfrhat} in practice, where the estimated representations $\hat{\bm{Z}}_i$ obtained through DNNs are substituted. Consequently, errors are introduced in the predictors of local \f regression, which then requires addressing the errors-in-variables problem. The following result characterizes the impact of these errors on the estimation.

\begin{prop}
\label{prop:eiv}
Suppose \ref{itm:k1}, \ref{itm:l1}, \ref{itm:l3} hold. If $\|\hat{\bm{Z}}_i-\bm{Z}^0_i\|=O_p(\zeta_n)$ for $i=1, \ldots, n$, $\|\hat{\bm{Z}}-\bm{Z}^0\|=O_p(\zeta_n)$, $nh^r\to\infty$, and $h^{-r-2}\zeta_n\to0$, then it holds for $\tilde{v}_h(\cdot)$ and $\hat{v}_h(\cdot)$ as per \eqref{eq:lfrtilde} and \eqref{eq:lfrhat} that
\[d\{\hat{v}_h(\hat{\bm{Z}}), \tilde{v}_h(\bm{Z}^0)\}=O_p\{(h^{-r-1}\zeta_n)^{1/(\gamma_3-1)}\}.\]
\end{prop}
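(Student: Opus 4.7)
The plan is to treat this as an M-estimation comparison between two minimizers, $\hat{v}_h(\hat{\bm{Z}})$ and $\tilde{v}_h(\bm{Z}^0)$, of the two closely related criteria $\hat{Q}_h(\cdot, \hat{\bm{Z}})$ and $\tilde{Q}_h(\cdot, \bm{Z}^0)$. Apply the curvature bound \ref{itm:l3}(iii) to $\tilde{Q}_h(\cdot, \bm{Z}^0)$ at $y = \hat{v}_h(\hat{\bm{Z}})$ to obtain, with probability tending to one,
\[
C_3\, d\{\tilde{v}_h(\bm{Z}^0),\hat{v}_h(\hat{\bm{Z}})\}^{\gamma_3}
\leq \tilde{Q}_h(\hat{v}_h(\hat{\bm{Z}}),\bm{Z}^0)-\tilde{Q}_h(\tilde{v}_h(\bm{Z}^0),\bm{Z}^0),
\]
which requires a preliminary consistency step showing $d\{\tilde{v}_h(\bm{Z}^0),\hat{v}_h(\hat{\bm{Z}})\}<\eta_3$ with probability tending to one; this follows by combining a uniform bound $\sup_y |\hat{Q}_h(y,\hat{\bm{Z}})-\tilde{Q}_h(y,\bm{Z}^0)|=o_p(1)$ with the well-separation conditions in \ref{itm:l1}(iii)--(iv).

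To bound the right-hand side, set $R(y)=\tilde{Q}_h(y,\bm{Z}^0)-\hat{Q}_h(y,\hat{\bm{Z}})$ and use the optimality of $\hat{v}_h(\hat{\bm{Z}})$ for $\hat{Q}_h(\cdot,\hat{\bm{Z}})$:
\[
\tilde{Q}_h(\hat{v}_h(\hat{\bm{Z}}),\bm{Z}^0)-\tilde{Q}_h(\tilde{v}_h(\bm{Z}^0),\bm{Z}^0)
\leq R(\hat{v}_h(\hat{\bm{Z}}))-R(\tilde{v}_h(\bm{Z}^0)).
\]
The key refinement that produces the exponent $1/(\gamma_3-1)$, rather than the weaker $1/\gamma_3$ from a naive sup-norm bound, is to observe that $R$ admits the representation $R(y)=n^{-1}\sum_i\{\tilde{w}(\bm{Z}^0_i,\bm{Z}^0,h)-\hat{w}(\hat{\bm{Z}}_i,\hat{\bm{Z}},h)\}\,d^2(Y_i,y)$, so that the difference telescopes into $n^{-1}\sum_i(\tilde{w}_i-\hat{w}_i)\{d^2(Y_i,\hat{v}_h(\hat{\bm{Z}}))-d^2(Y_i,\tilde{v}_h(\bm{Z}^0))\}$. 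Since $\Omega$ is totally bounded with diameter $D$, the elementary identity $|d^2(Y_i,y)-d^2(Y_i,y')|\leq 2D\,d(y,y')$ gives
\[
R(\hat{v}_h(\hat{\bm{Z}}))-R(\tilde{v}_h(\bm{Z}^0))
\leq 2D\,d\{\hat{v}_h(\hat{\bm{Z}}),\tilde{v}_h(\bm{Z}^0)\}\cdot n^{-1}\sum_{i=1}^n|\tilde{w}_i-\hat{w}_i|.
\]

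It remains to show $n^{-1}\sum_i|\tilde{w}_i-\hat{w}_i|=O_p(h^{-r-1}\zeta_n)$, which I would handle by expanding the weight formulas \eqref{eq:wtilde}--\eqref{eq:what} via the chain rule. The Lipschitz continuity of $K$ from \ref{itm:k1}, combined with $\|\hat{\bm{Z}}_i-\bm{Z}^0_i\|=O_p(\zeta_n)$ and $\|\hat{\bm{Z}}-\bm{Z}^0\|=O_p(\zeta_n)$, yields $|K_h(\hat{\bm{Z}}_i-\hat{\bm{Z}})-K_h(\bm{Z}^0_i-\bm{Z}^0)|=O_p(h^{-r-1}\zeta_n)$ and analogous bounds on $|\hat{\mu}_j-\tilde{\mu}_j|$ for $j=0,1,2$. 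The strengthened condition $h^{-r-2}\zeta_n\to 0$ (rather than merely $h^{-r-1}\zeta_n\to 0$) is used exactly to guarantee that $\hat{\mu}_2$ stays close enough to $\tilde{\mu}_2$ that the inverse $\hat{\mu}_2^{-1}$ and the denominator $\hat{\mu}_0-\hat{\mu}_1^{\T}\hat{\mu}_2^{-1}\hat{\mu}_1$ remain bounded and bounded away from zero, so that the weight map is pointwise continuous in its arguments with the controlled Lipschitz constant. Combining these controls yields the required weight bound, and substituting into the chain above gives $C_3\,d\{\hat{v}_h(\hat{\bm{Z}}),\tilde{v}_h(\bm{Z}^0)\}^{\gamma_3}\leq C\,d\{\hat{v}_h(\hat{\bm{Z}}),\tilde{v}_h(\bm{Z}^0)\}\cdot h^{-r-1}\zeta_n$; dividing both sides by $d\{\hat{v}_h(\hat{\bm{Z}}),\tilde{v}_h(\bm{Z}^0)\}$ produces the stated rate.

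The main obstacle is this last weight-level analysis: propagating perturbations through the matrix inverse $\hat{\mu}_2^{-1}$ demands the tight book-keeping behind the choice of the assumption $h^{-r-2}\zeta_n\to 0$, and one must verify that the effective Lipschitz constants remain $O_p(h^{-r-1})$ uniformly in $i$ rather than blowing up via the denominator. A secondary, but necessary, step is the preliminary consistency argument that activates the local curvature in \ref{itm:l3}(iii); this in turn requires the same weight bound together with \ref{itm:l1} to show that $\hat{v}_h(\hat{\bm{Z}})$ and $\tilde{v}_h(\bm{Z}^0)$ lie within $\eta_3$ of each other with probability tending to one.
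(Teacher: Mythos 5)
Your proposal is correct, and it reaches the stated rate by a genuinely different (and more elementary) route than the paper. The paper proves the same two preliminary facts you identify -- a weight-difference bound $n^{-1}\sum_i\{\hat{w}(\hat{\bm{Z}}_i,\hat{\bm{Z}},h)-\tilde{w}(\bm{Z}^0_i,\bm{Z}^0,h)\}=O_p(h^{-r-1}\zeta_n)$ (its Lemma \ref{lem:wt}, proved exactly as you sketch, via Lipschitz continuity of $K$, the rates $\|\hat{\mu}_j-\tilde{\mu}_j\|=O_p(h^{-r-1+j}\zeta_n)$, and the perturbation of $\hat{\mu}_2^{-1}$ and $\hat{\sigma}^2$, which is where $h^{-r-2}\zeta_n\to0$ enters) and the consistency $d\{\hat{v}_h(\hat{\bm{Z}}),\tilde{v}_h(\bm{Z}^0)\}=o_p(1)$ (its Lemma \ref{lem:eivop1}, obtained from the uniform $o_p(1)$ bound on $\hat{Q}_h-\tilde{Q}_h$ plus \ref{itm:l1}) -- but then feeds them into a van der Vaart--Wellner style peeling argument: it partitions $\Omega$ into shells $S_{j,n}(\bm{Z}^0)$, uses \ref{itm:l3}(iii) on each shell, and applies Markov's inequality shell by shell. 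You instead use the basic inequality directly: curvature at $y=\hat{v}_h(\hat{\bm{Z}})$, optimality of $\hat{v}_h(\hat{\bm{Z}})$ for $\hat{Q}_h(\cdot,\hat{\bm{Z}})$, and the pathwise bound $|R(\hat{v}_h(\hat{\bm{Z}}))-R(\tilde{v}_h(\bm{Z}^0))|\leq 2\,\mathrm{diam}(\Omega)\,d\{\hat{v}_h(\hat{\bm{Z}}),\tilde{v}_h(\bm{Z}^0)\}\cdot n^{-1}\sum_i|\tilde{w}_i-\hat{w}_i|$, then divide by $d$. This works here, and gives the identical exponent, precisely because the increment of the contrast $V_n(y)-V_n\{\tilde{v}_h(\bm{Z}^0)\}$ is Lipschitz in $d\{y,\tilde{v}_h(\bm{Z}^0)\}$ with a single random constant not depending on $y$, so no localized maximal inequality (hence no peeling) is needed -- unlike Proposition \ref{prop:lfrpc}, where the increment is an empirical process controlled only in expectation over shells. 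Two small bookkeeping points: you need the absolute-value version $n^{-1}\sum_i|\tilde{w}_i-\hat{w}_i|=O_p(h^{-r-1}\zeta_n)$, which is formally slightly stronger than the signed statement of Lemma \ref{lem:wt} but follows from the same term-by-term decomposition (and is what the paper itself implicitly uses to bound its display \eqref{eq:vn}); and, like the paper, you apply \ref{itm:l3}(iii) at the random location $\bm{Z}^0$, which is how the assumption is used there as well. Neither point is a gap; your argument buys simplicity, while the paper's peeling keeps the proof parallel to that of Proposition \ref{prop:lfrpc} and would extend to settings with weaker, merely localized control of the perturbation.
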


For metric spaces in Examples \ref{exm:was}, \ref{exm:mat} and \ref{exm:cov}, $\gamma_3=2$ and the convergence rate provided in Proposition \ref{prop:eiv} reduces to $O_p\{(h^{-r-1}\zeta_n)$. Combining Theorems \ref{thm:DNN} and Propositions \ref{prop:lfrpc} and \ref{prop:eiv}, we have the following theorem.
\begin{thm}
\label{thm:pc}
If \ref{itm:d1}--\ref{itm:l3}, \ref{itm:k1}--\ref{itm:p1} hold and furthermore $nh^r\to\infty$, $h^{-r-2}\zeta_n\to0$, one has
\begin{equation}
    d\{\hat{m}(\bm{X}), m(\bm{X})\}=O_p\{h^{2/(\gamma_1-1)}+(nh^r)^{-1/\{2(\gamma_2-1)\}}+(h^{-r-1}\zeta_n)^{1/(\gamma_3-1)}\},
\end{equation}
where $\bm{X}$ is a new predictor independent of the sample $\{(\bm{X}_i, Y_i)\}_{i=1}^n$ and $\zeta_n=\kappa_n\log^{3/2}n+u_n$.
\end{thm}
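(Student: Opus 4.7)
The plan is to bound $d\{\hat{m}(\bm{X}), m(\bm{X})\}$ by a triangle-inequality decomposition that isolates three sources of error: the smoothing bias of the population local \f regression, the stochastic deviation of its sample version (both evaluated at the true low-dimensional coordinate $\bm{Z}^0 = \bm{g}_0(\bm{X})$), and the errors-in-variables contribution from replacing $\bm{Z}^0$ and $\{\bm{Z}_i^0\}$ by their DNN-based estimates. Specifically, with $\hat{\bm{Z}} = \hat{\bm{g}}(\bm{X})$, I would write
\begin{align*}
    d\{\hat{m}(\bm{X}), m(\bm{X})\} = d\{\hat{v}_h(\hat{\bm{Z}}), v(\bm{Z}^0)\}
    &\leq d\{\hat{v}_h(\hat{\bm{Z}}), \tilde{v}_h(\bm{Z}^0)\} \\
    &\quad + d\{\tilde{v}_h(\bm{Z}^0), v_h(\bm{Z}^0)\} + d\{v_h(\bm{Z}^0), v(\bm{Z}^0)\}.
\end{align*}
Since $\bm{X}$ is independent of the training sample, conditioning on $\bm{X}$ makes $\bm{Z}^0$ a fixed point in the range of $\bm{g}_0$, so I can legitimately apply the pointwise rates of Proposition \ref{prop:lfrpc} at $\bm{z} = \bm{Z}^0$.

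For the last two summands, Proposition \ref{prop:lfrpc} yields $d\{v_h(\bm{Z}^0), v(\bm{Z}^0)\} = O(h^{2/(\gamma_1-1)})$ and $d\{\tilde{v}_h(\bm{Z}^0), v_h(\bm{Z}^0)\} = O_p\{(nh^r)^{-1/(2(\gamma_2-1))}\}$, using assumptions \ref{itm:k1}--\ref{itm:p1}, \ref{itm:l1}(i)--(ii), \ref{itm:l2}, and \ref{itm:l3}(i)--(ii), together with $nh^r \to \infty$. For the first summand, I would invoke Proposition \ref{prop:eiv}, which demands $\|\hat{\bm{Z}}_i - \bm{Z}_i^0\| = O_p(\zeta_n)$ for each $i$, $\|\hat{\bm{Z}} - \bm{Z}^0\| = O_p(\zeta_n)$, and $h^{-r-2}\zeta_n \to 0$. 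The latter two are precisely the hypotheses of Theorem \ref{thm:pc}, while the former two follow from Theorem \ref{thm:DNN}: the bounds $E(\|\hat{\bm{Z}}_i - \bm{Z}_i^0\|^2) = O(\kappa_n^2 \log^3 n + u_n^2)$ and $E(\|\hat{\bm{Z}} - \bm{Z}^0\|^2) = O(\kappa_n^2 \log^3 n + u_n^2)$ combined with Markov's inequality and $\sqrt{a^2+b^2} \leq a+b$ deliver the required $O_p(\zeta_n) = O_p(\kappa_n \log^{3/2} n + u_n)$ statements. Proposition \ref{prop:eiv} then gives $d\{\hat{v}_h(\hat{\bm{Z}}), \tilde{v}_h(\bm{Z}^0)\} = O_p\{(h^{-r-1}\zeta_n)^{1/(\gamma_3-1)}\}$, and summing the three rates completes the proof.

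The step that warrants the most care, rather than the most difficulty, is ensuring that the evaluation point $\bm{Z}^0$ is handled cleanly when invoking Proposition \ref{prop:lfrpc}: because Proposition \ref{prop:lfrpc} is a pointwise-in-$\bm{z}$ result, I would verify that the constants in its $O$ and $O_p$ statements hold uniformly over the compact range of $\bm{g}_0$ (which is guaranteed by the boundedness of $\mathbf{g}_0$ inherited from the composite \Holder{} class in \ref{itm:d2} and the fact that $\bm{X} \in [0,1]^p$), so that conditioning on $\bm{X}$ followed by taking an outer expectation preserves the rates. A secondary check is that the $O_p(\zeta_n)$ hypothesis of Proposition \ref{prop:eiv} is compatible with the Markov-derived bounds; no union bound over $i$ is needed if Proposition \ref{prop:eiv} is applied with the sample-averaged discrepancy, which also satisfies the same $O_p(\zeta_n^2)$ rate in mean-square. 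Once these points are settled, the combination is purely mechanical.
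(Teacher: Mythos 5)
Your proposal is correct and follows essentially the same route as the paper: a triangle-inequality decomposition of $d\{\hat{v}_h(\hat{\bm{Z}}), v(\bm{Z}^0)\}$, with Proposition~\ref{prop:lfrpc} supplying the bias and stochastic-deviation rates at $\bm{z}=\bm{Z}^0$ and Proposition~\ref{prop:eiv} supplying the errors-in-variables term, whose $O_p(\zeta_n)$ hypotheses come from Theorem~\ref{thm:DNN} via Markov's inequality. The paper's proof is just a terser version of the same argument (it does not even spell out the Markov step), so your additional care about the evaluation point and the Chebyshev-to-$O_p$ conversion is fine but not a departure.
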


The initial two terms $h^{2/(\gamma_1-1)}, (nh^r)^{-1/\{2(\gamma_2-1)\}}$ in Theorem \ref{thm:pc} give  the convergence rate of multivariate local \f regression, as outlined in Proposition \ref{prop:lfrpc}, representing the optimal rate when $\bm{Z}_i^0$, $\bm{\psi}$, and $\bm{g}_0$ are known. The final term, $(h^{-r-1}\zeta_n)^{1/(\gamma_3-1)}$, originates from Theorem \ref{thm:DNN} and Proposition \ref{prop:eiv}, reflecting errors introduced during the estimation of low-dimensional representations of metric space-valued responses, DNNs, and the subsequent errors-in-variables problem.

If one assumes $E\{\sup_{i=1,\ldots,n}(\hat{\bm{Z}}_i - \bm{Z}_i^0)^2\}=O(\tau_n^2)$, the dependence of the convergence rate presented in Proposition \ref{prop:eiv} on the intrinsic dimension of metric space-valued responses $r$ would vanish. Specifically, the convergence rate in Proposition \ref{prop:eiv} would be further improved to $(h^{-1}\tau_n)^{1/(\gamma_3-1)}$, leading to an enhanced convergence rate of the proposed DFR model as $h^{2/(\gamma_1-1)}+(nh^r)^{-1/{2(\gamma_2-1)}}+(h^{-1}\tau_n)^{1/(\gamma_3-1)}$. The proof is provided in Section \ref{subsec:remark2} of the Supplementary Material. While the primary focus of the current work is to devise a general framework for handling multivariate predictors and metric space-valued responses, an intriguing problem for future work is to derive such a supremum rate for DNNs when dealing with dependent sub-Gaussian noise coupled with bias.  

\section{Implementation and Simulations}
\label{sec:sim}
\subsection{Implementation details}
The optimization problem in \eqref{eq:DNN Ln} was addressed within \texttt{Pytorch} framework, where we opted for the Adam optimizer \citep{king:14}, a widely used choice in deep learning known for its robust performance across diverse models and datasets. The weight matrices $W_l$ and shift vectors $\bm{b}_l$ of the function $g\in\mathcal{G}$ as per \eqref{eq:DNN Class} were initialized using the default random initialization provided by \texttt{Pytorch}.

ISOMAP and DNNs involve tuning parameters such as the number of nearest neighbors $k$ in the Dijkstra algorithm \citep{dijk:59}, the number of hidden layers $L$, the number of neurons $p_l$ in each hidden layer, the dropout rate, and the learning rate. For simplicity, we maintained the same number of neurons in each hidden layer. The dropout rate \citep{sriv:14} is the rate of randomly excluding neurons during training. The learning rate \citep{good:17} determines the step size for gradient descent in the Adam algorithm. Tuning these parameters can be accomplished through a grid search, assessing empirical risk over a held-out validation set after each training iteration. We set aside 20\% of the training set as the validation set in each run. The training process stops when the empirical risk on the validation set ceases to reliably improve.

DNNs offer high expressive capacity but are susceptible to overfitting, especially with smaller sample sizes. To address this, we applied several regularization techniques. One key method is early stopping, which monitors the validation error during training and halts the process when the validation error begins to rise, thus avoiding excessive fitting to the training data \citep{good:17}. We also utilized dropout \citep{sriv:14}, a technique that randomly deactivates a subset of neurons during training, reducing the model's capacity and mitigating overfitting. For sample sizes $n = 100, 200, 500, 1000$, the runtime of the proposed DFR model on a MacBook Pro with an Apple M2 chip was 0.26, 0.48, 1.09, and 2.29 minutes for the case of distributional objects, and 0.24, 0.44, 1.07, and 2.22 minutes for network objects, respectively.

\subsection{Simulations for probability distributions}
\label{sec:simu:dist}
To assess the performance of the proposed DFR model, we report the results of simulations for various settings. The random objects we consider first are univariate probability distributions equipped with the Wasserstein metric; see Example \ref{exm:was}. The Wasserstein space is a geodesic metric space related to optimal transport \citep{vill:03} and is increasingly applied in various research domains, including population pyramids \citep{bigo:17, deli:11}, financial returns \citep{zhan:22}, and multi-cohort studies \citep{zhou:23}, among others \citep{pete:22}. Additional simulations for networks can be found in Section \ref{supp:network} of the Supplementary Material. Despite the simulated networks not residing on a 2-dimensional manifold, the DFR model exhibits remarkable robustness as the sample size increases.

Consider the true regression function, represented as a quantile function, to be
\[m(\bm{X}) = E(\eta|\bm{X})+E(\sigma|\bm{X}) \Phi^{-1}(\cdot),\]
which corresponds to a Gaussian distribution with mean and standard deviation depending on $\bm{X}$. The distribution parameters of $m(\bm{X})$ are generated conditionally on $\bm{X}$, where the mean $\eta$ and the standard deviation $\sigma$ are assumed to follow a normal distribution and a Gamma distribution, respectively.

We consider sample sizes $n=100,200,500,1000$, with $Q=500$ Monte Carlo runs. In each Monte Carlo run, predictors $\bm{X}_i \in \mathbb{R}^9$, $i=1, \ldots, n$, are independently sampled as
\begin{align*}
        &X_{i1} \sim U(-1,0),\ X_{i2} \sim U(0,1),\ X_{i3} \sim U(1,2)\\&
        X_{i4} \sim N(0,1),\ X_{i5} \sim N(-10,3),\ X_{i6} \sim N(10,3)\\&
        X_{i7} \sim \mathrm{Bernoulli}(0.6),\ X_{i8} \sim \mathrm{Bernoulli}(0.7),\ X_{i9} \sim \mathrm{Bernoulli}(0.3).
\end{align*}
For each $i$, the mean and standard deviation of the Gaussian distribution are generated conditionally on $\bm{X}_i$ as
\begin{align*}
         &\eta_i|\bm{X}_i\sim N(\mu, 0.5^2), \sigma_i|\bm{X}_i \sim\text{Gamma}(\theta^2, \theta^{-1})\text{ where }\\
         &\mu=3 X_{i8}\{\sin (\pi X_{i1})+\cos(\pi X_{i2})\}  + X_{i7}(5X_{i4}^2+X_{i5}) ,\\&
        \theta=[3+0.5 X_{i8}\{ \sin (\pi X_{i1}) +\cos(\pi X_{i2})\} + X_{i7}|5X_{i4}^2 + X_{i5}|].
\end{align*}
The corresponding distributional response is then constructed as $Y_i = \eta_i + \sigma_i \Phi^{-1}$. 

To simulate practical scenarios where direct observations of probability distributions are not available, but rather independent data samples generated by the corresponding distribution are obtained, we independently sample 100 observations $\{y_{ij}\}_{j=1}^{100}$ from each distributional response $Y_i$. Consequently, one must initially estimate the distributional response $Y_i$ from the random sample $\{y_{ij}\}_{j=1}^{100}$, introducing a bias to the regression model. This practical consideration aligns with prior research \citep{zhou:23}, where the empirical measure in place of the unobservable distributional response $Y_i$ was adopted. For the implementation of local \f regression using empirical measures, we follow the algorithm outlined in \citet{zhou:23}. The bandwidths for the local \f regression in each Monte Carlo run are chosen as $10\%$ of the range of the intermediate estimates $\hat{\bm{Z}}_i = \hat{\bm{g}}(\bm{X}_i)$.

For the $q$th Monte Carlo run out of $Q=500$ runs, with $\hat{m}_q(\cdot)$ denoting the fitted regression function, the quality of the estimation is quantified by the mean squared prediction error (MSPE),
$$\mathrm{MSPE}_q = \frac{1}{100}\sum_{i=1}^{100}d_{\mathcal{W}}^2\{\hat{m}_{q}(\bm{X}_i^{\text{test}}), m(\bm{X}_i^{\text{test}})\},$$
where $\{\bm{X}_i^{\text{test}}\}_{i=1}^{100}$ denote out-of-sample predictors and the metric $d_{\mathcal{W}}$ is the Wasserstein metric for probability distributions. The average quality of the estimation over the $Q=500$ Monte Carlo runs is quantified  by the average mean squared prediction error (AMSPE)
\[\mathrm{AMSPE}=\frac{1}{Q}\sum_{q=1}^Q\mathrm{MSPE}_q.\]

\begin{figure}[t]
\single
\centering
\begin{subfigure}{0.72\textwidth}
    \centering
    \includegraphics[width=\linewidth]{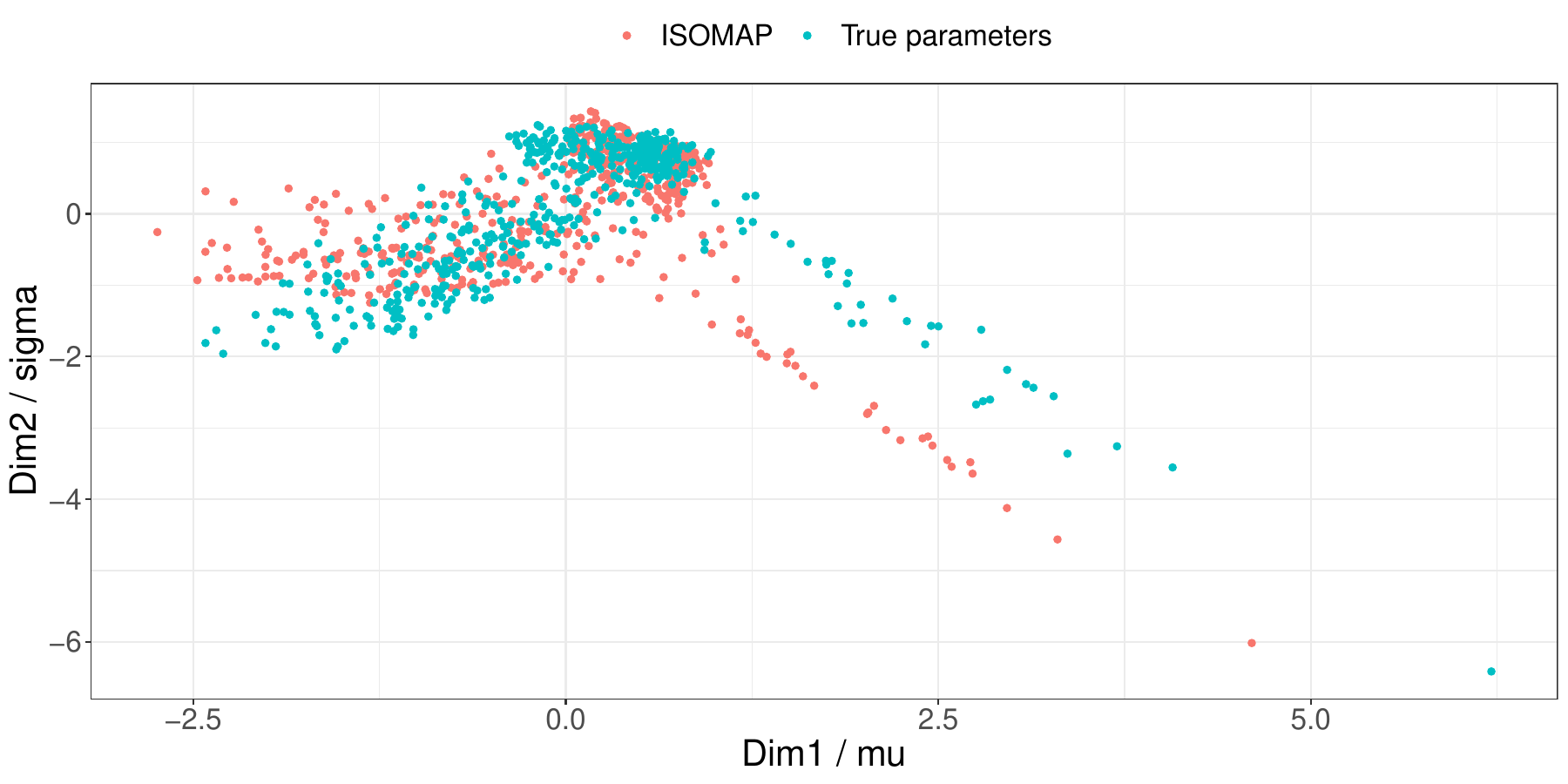}
    \caption{}
\end{subfigure}%

\begin{subfigure}{0.72\textwidth}
    \centering
    \includegraphics[width=\linewidth]{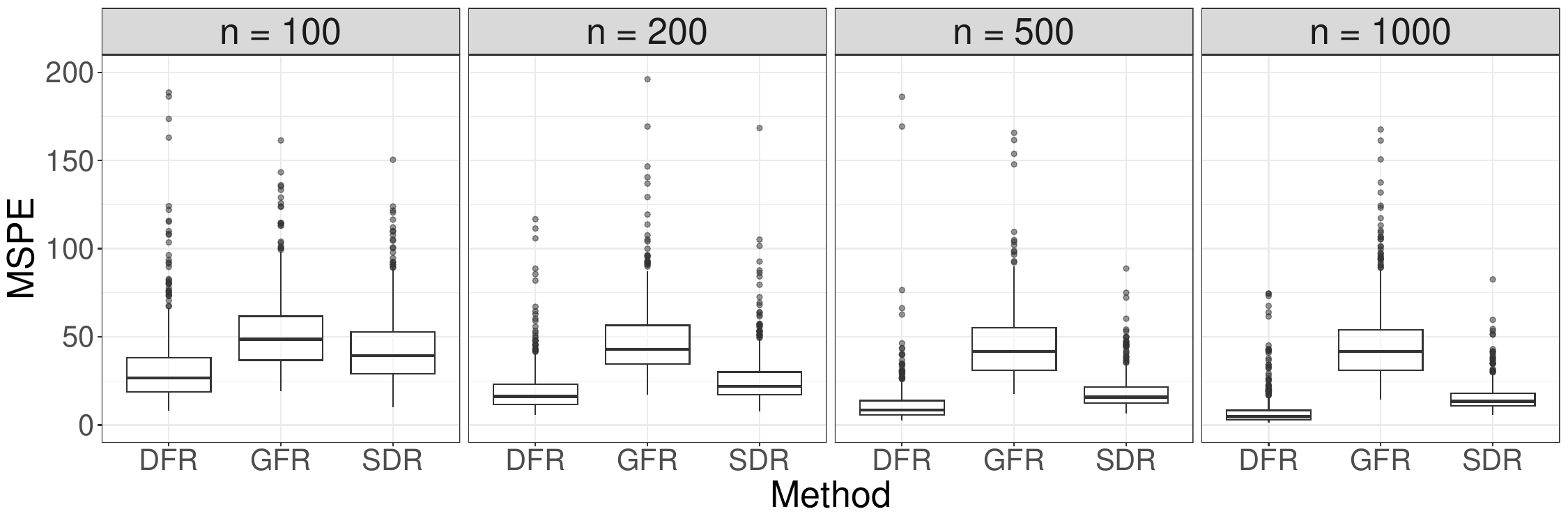}
    \caption{}
\end{subfigure}
\caption{(a) Two-dimensional representations of probability distributions $\{Y_i\}_{i=1}^n$ using ISOMAP, along with their true parameters (mean and standard deviation) for sample size $n=500$. (b) Boxplot of mean squared prediction errors for $Q=500$ Monte Carlo runs,
comparing  deep \f regression (DFR), global \f regression (GFR) \citep{mull:19:6} and sufficient dimension reduction (SDR) \citep{zhan:21:1}.}
\label{fig:dis}
\end{figure}

To assess the validity of the manifold assumption, Figure~\ref{fig:dis}(a) presents a scatter plot depicting the two-dimensional representations of distributional responses $\{Y_i\}_{i=1}^n$ alongside their true parameters (mean and standard deviation) for a sample size of $n=500$. The plot reveals a distinctive horseshoe shape in the two-dimensional representations, implying that the distributional responses adhere to a lower-dimensional manifold. Intriguingly, the two-dimensional representations closely align with the two-dimensional parameters of the distributional responses. This observation suggests that for distributional responses ISOMAP effectively identifies suitable low-dimensional representations corresponding to the latent parameter space.

We conducted comparisons with global \f regression (GFR) \citep{mull:19:6} and local \f regression with sufficient dimension reduction (SDR) \citep{zhan:21:1}. Figure~\ref{fig:dis}(b) summarizes MSPEs for all Monte Carlo runs and various sample sizes using the proposed DFR model, GFR and SDR. The MSPE decreases with increasing sample size, indicating the convergence of the DFR model to the target. Notably, the DFR model exhibits superior performance over GFR and SDR across all sample sizes, including the case  $n=100$. In Table \ref{tab:simu1}, we present the AMSPE for different sample sizes. The DFR model is seen to fare best across different methods. Despite the conventional need for large sample sizes in the performance of deep neural networks, the DFR model surprisingly demonstrates robustness for small sample sizes, potentially owing to its flexibility in accommodating complex regression relationships.

\begin{table}[t]
\single
\centering
\caption{Average mean squared prediction error of deep \f regression (DFR), global \f regression (GFR) \citep{mull:19:6} and sufficient dimension reduction (SDR) \citep{zhan:21:1} for distributional responses.}
\label{tab:simu1}
\begin{tabular}{c|ccc}
\hline
$n$ & DFR & GFR & SDR\\
\hline
100 & 34.000 & 52.573 & 43.598 \\
200 & 20.976 & 48.140 & 26.128 \\
500 & 12.544 & 45.712 & 18.742 \\
1000 & 7.874 & 46.418 & 15.719 \\
\hline
\end{tabular}
\end{table}

\section{Data Application}
\label{sec:app}
Yellow and green taxi trip records in New York City (NYC), including pick-up and drop-off dates, locations, trip distances, payment methods, and passenger counts, are available at \url{https://www.nyc.gov/site/tlc/about/tlc-trip-record-data.page}. Additionally, we collect NYC weather history, including daily average temperature, humidity, wind speed, pressure, and total precipitation from \url{https://www.wunderground.com/history/daily/us/ny/new-york-city/KLGA/date}. The objective is to predict transport networks constructed from taxi trip records using relevant predictors. Given the potential influence of weather conditions on travel plans, the variability in travel patterns across different days of the week, and the impact of daily trip features on the taxi system, we construct a fifteen-dimensional predictor set including daily weather information, indicators for days of the week or holiday, and daily trip features averaged over each day; see Table~\ref{tab:predictor}.

\begin{table}[t]
\single
\centering
\caption{Predictors of New York taxi network data.}
\begin{tabular}{p{0.1\linewidth} | p{0.225\linewidth} | p{0.635\linewidth}}
\hline 
Category & Variables & Explanation \\
\hline 
\multirow{5}{*}{Weather} & 1. Temp & daily average temperature \\
& 2. Humidity & daily average humidity \\
& 3. Wind & daily average windspeed \\
& 4. Pressure & daily average barometric pressure \\
& 5. Precipitation & daily total precipitation \\
\hline 
\multirow{2}{*}{Day} & 6. Mon to Thur & indicator for Monday to Thursday\\
& 7. Sun or Holiday & indicator for Sunday or holidays\\
\hline
\multirow{8}{*}{Trip}& 8. Passenger Count & daily average number of passengers \\
& 9. Trip Distance & daily average trip distance \\
& 10. Fare Amount & daily average fare amount \\
& 11. Tip Amount & daily average tip amount \\
& 12. Tolls Amount & daily average tolls amount\\
& 13. Credit Card & average of credit card indicators for the type of payment \\
& 14. Cash & average of cash indicators for the type of payment \\
& 15. Dispute & average of dispute indicators for the type of payment \\
\hline 
\end{tabular}
    \label{tab:predictor}
\end{table}

We analyze yellow taxi trip records in Manhattan, excluding islands, and divide the 66 taxi zones into 13 regions based on preprocessing procedures outlined in \citet{mull:22:11}. While \citet{mull:22:11} focused on the effect of COVID-19 on the transport network, here we investigate the pre-COVID transport network. Therefore, we limit our analysis to a period of 1092 days from January 1, 2017, to December 31, 2019, excluding 3 outlier days. The main interest regarding these traffic records is the transport network that represents daily passenger movement between the 13 regions. To this end, we build daily undirected networks with nodes standing for the 13 regions and edge weights representing the number of passengers traveling between the regions. Each of these networks is uniquely associated with a $13 \times 13$ graph Laplacian. The two-dimensional representations of 1092 graph Laplacians using MDS are shown in Figure \ref{fig:isomap_taxi}, which suggests a clear separation among Monday to Thursday, Friday or Saturday, and Sunday or holiday in the second dimension. Additionally, as the first dimension increases, the total ridership also increases. The scatter plot exhibits a horseshoe shape, indicating the plausibility of the manifold assumption. 

\begin{figure}[t]
    \single
    \centering
    \includegraphics[width=0.8\linewidth]{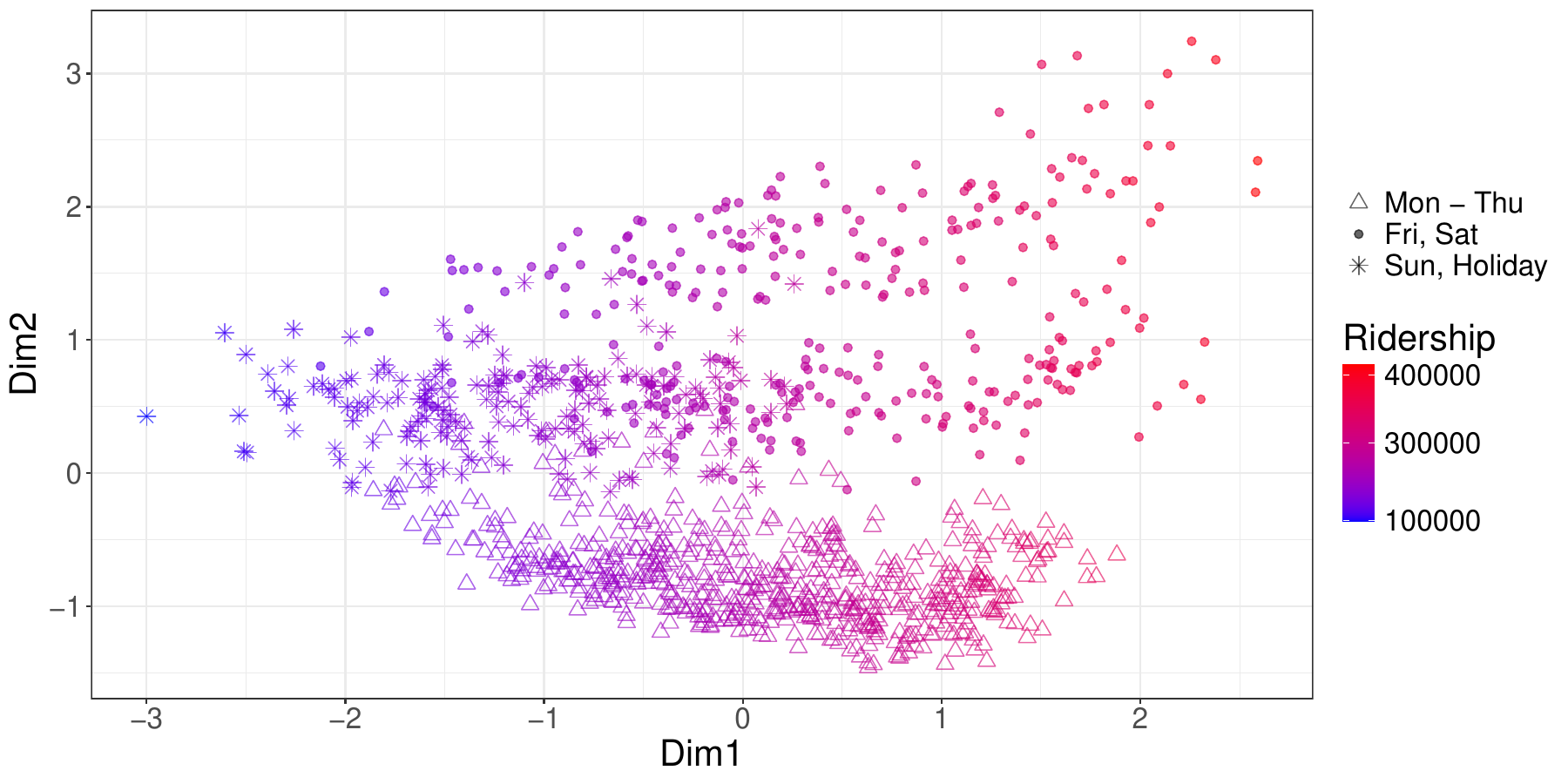}
   \caption{Two-dimensional representations of 1092 taxi networks using MDS with respect to the Frobenius metric.}
   \label{fig:isomap_taxi}
\end{figure}

The proposed DFR model was applied to model the relationship between daily graph Laplacians and the fifteen-dimensional predictors. The MSPE was calculated using $10$-fold cross-validation, averaged over 100 runs. The proposed method achieves better prediction performance, resulting in a $45\%$ and $55\%$ improvement in prediction accuracy compared to GFR \citep{mull:22:11} and SDR \citep{zhan:21:1}, respectively.

\begin{figure}[t]
    \single
    \centering
    \includegraphics[width=0.8\linewidth]{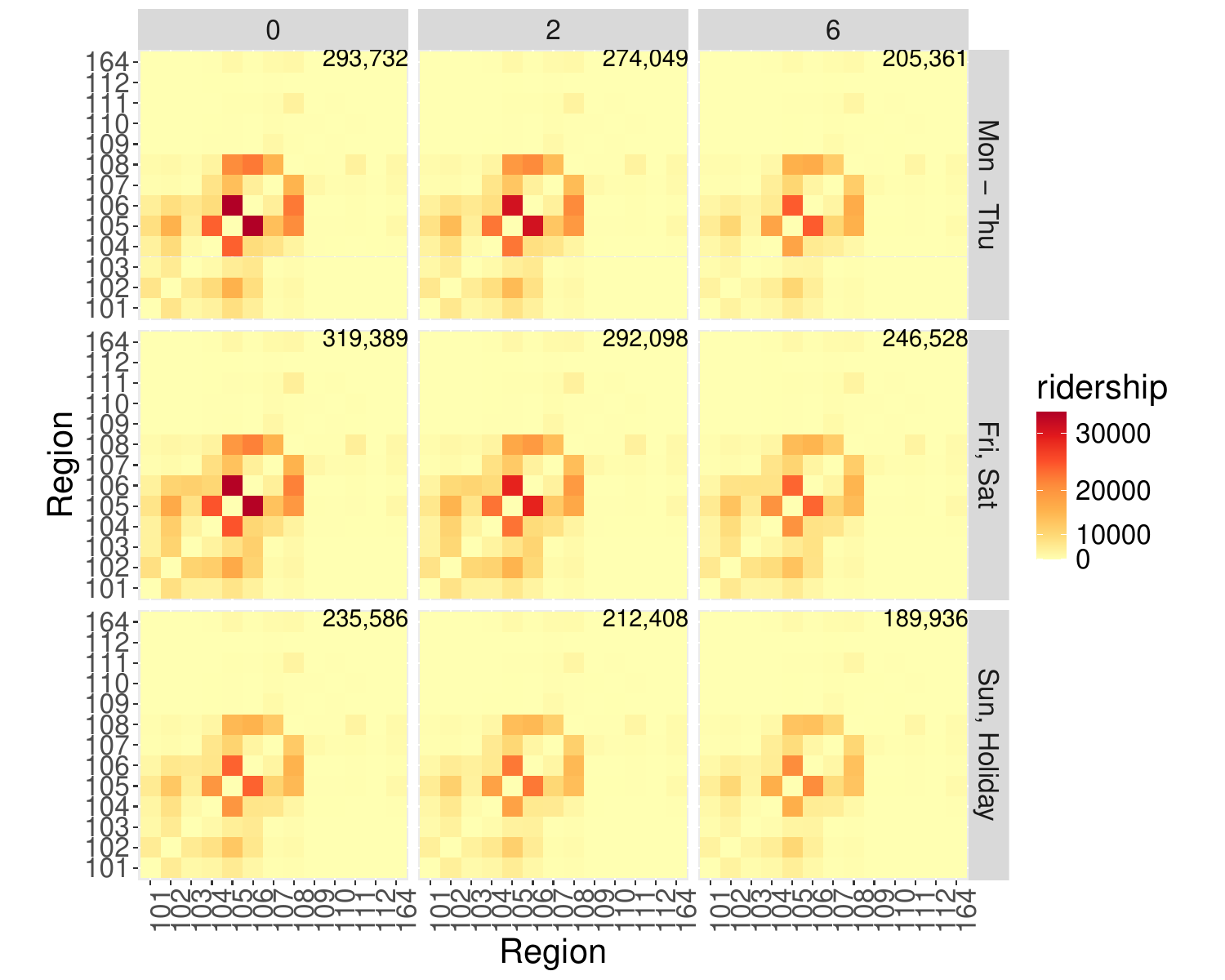}
    \caption{Predicted networks represented as heatmaps at different levels of total precipitation on Monday to Thursday, Friday or Saturday, and Sunday or holiday. The left, middle, and right columns show, respectively, the predicted networks at $0$, $2$, and $6$ inches of total precipitation. The three rows depict the predicted networks on Monday to Thursday, Friday or Saturday, and Sunday or holiday, respectively. The value in the top right corner of each panel represents the total ridership of the entire transport network.}
    \label{fig:preci}
\end{figure}

To further investigate the effect of daily total precipitation on taxi traffic, we predicted networks with varying total precipitation levels, while all other covariates were fixed at their median levels. Figure \ref{fig:preci} shows the predicted networks at total precipitation levels of 0, 2 and 6 inches on Monday to Thursday, Friday or Saturday, and Sunday or holiday. We observed that edge weights decrease on Sunday or holiday compared with those on Monday to Saturday, regardless of total precipitation levels, indicating reduced commute needs on these days. Across different precipitation levels, the taxi traffic from Monday to Thursday is more concentrated in areas 105, 106, 107, and 108, which are primarily residential areas with popular destinations such as Penn Station and Grand Central Terminal. Taxi traffic on Friday and Saturday is more diffuse, spreading into areas 102, 103, and 104, which are home to popular bars, restaurants, chain stores, and high-end art galleries and museums. As the total precipitation increases, taxi traffic decreases, irrespective of the day of the week. When the total precipitation reaches 6 inches, corresponding to a rainstorm, the remaining taxi traffic for all three categories of days is primarily concentrated in areas 105, 106, 107, and 108. This may be related to unavoidable activities such as work-related events, while discretionary travel plans, primarily observed in areas 102, 103, and 104, are more likely to be canceled due to adverse weather conditions.

\section{Discussion}
\label{sec:dis}
The proposed deep \f regression model introduces a novel approach for analyzing metric space-valued responses using multivariate predictors. Extensive simulations and applications across diverse scenarios, including networks and probability distributions, highlight the versatility of the model. The proposed approach is flexible and requires no specific model assumptions. Key contributions include an in-depth exploration of the convergence rate of deep neural networks under dependent sub-Gaussian noise with bias and an extension of the scope of local \f regression to handle multivariate predictors, including the convergence rate in an errors-in-variables regression framework. The proposed approach can be easily adapted to various manifold learning methods, such as t-SNE, UMAP, Laplacian eigenmaps, and diffusion maps; however, these alternative manifold learning methods did not perform as well as ISOMAP in our simulations.

While ISOMAP performed well in simulations and applications, a potential limitation is that errors introduced by the algorithm may propagate through the regression model, affecting final predictions. This issue is addressed in Assumption~\ref{itm:iso1}, which accounts for both dependence and bias from the manifold learning step. The bias propagates through the deep neural networks and local \f regression, influencing the convergence rate in Theorem \ref{thm:pc}. Specifically, the term $u_n$ in the third component $(h^{-r-1}\zeta_n)^{1/(\gamma_3-1)}$ with $\zeta_n = \kappa_n\log^{3/2} n + u_n$ captures the bias induced by ISOMAP. Since it appears that there is no established asymptotic theory for ISOMAP, the exact nature of the term $u_n$ is not known and will be a topic for future research.

The original ISOMAP algorithm \citep{tene:00} performs well in both simulations and real-data applications but may face computational challenges due to the need for shortest-path distance computation via Dijkstra's algorithm and metric multidimensional scaling (MDS) through eigenvalue decomposition. For large-scale datasets, the landmark-ISOMAP algorithm \citep{silv:02} offers a more efficient alternative by selecting $n_0$ landmark points from the total $n$ data points and computing an $n_0 \times n$ geodesic distance matrix. Landmark-MDS is then used to obtain the Euclidean embedding of all points. Similarly, training DNNs on large datasets can be computationally demanding. A common approach to mitigate this is using asynchronous stochastic gradient descent \citep{rech:11, zhen:17}, an efficient optimization method widely employed in industry \citep{chil:14, good:17}.

Another limitation of our approach lies in the assumption of the existence of a low-dimensional representation of random objects. However, our simulations and applications demonstrate the broad applicability of the proposed method, showcasing robust performance even when the intrinsic dimension of metric space-valued responses is not low. Open problems for future research include the supremum rate of DNN convergence when dealing with dependent sub-Gaussian noise with bias.

\appendix
\section*{Appendix A. Kernel and distributional assumptions}
\begin{enumerate}[label=(A\arabic*)]
    \item \label{itm:k1} The $r$-dimensional kernel $K(\cdot)$ satisfies $\int_{\mathbb{R}^r}K(\bm{u})u_{j}^{4}d\bm{u}<\infty$ and $\int_{\mathbb{R}^r}K^2(\bm{u})u_{j}^{6}d\bm{u}<\infty$ for $j=1, \ldots, r$ and $\bm{u}=(u_1, \ldots, u_r)^\T$, and is Lipschitz continuous with compact support $[-1, 1]^r$.
    \item \label{itm:p1} Both the marginal density $f_{\bm{Z}^0}(\cdot)$ of $\bm{Z}^0$ and the conditional densities $f_{\bm{Z}^0| Y}(\cdot, y)$ of $\bm{Z}^0$ given $Y=y$ exist and are twice continuously differentiable, the latter for all $y\in\Omega$, and $\sup_{\bm{z}, y}|(\partial^2 f_{\bm{Z}^0| Y}/\partial \bm{z}^2)(\bm{z}, y)|<\infty$. In addition, for any open set $U\subset\Omega$, $\int_UdF_{Y| \bm{Z}^0}(\bm{z}, y)$ is continuous as a function of $\bm{z}$.
\end{enumerate}

\section*{Supplementary Materials}
Online supplementary material provides auxiliary results and proofs, additional simulation studies for network data, and a second data application involving age-at-death distributions in human mortality. Code for implementing the proposed regression model is available at \url{https://github.com/SUIIAO/Deep-Frechet-Regression}.

\section*{Acknowledgments}
\label{sec:akd}
This research was supported in part by NSF grant DMS-2310450.

\single
\bibliographystyle{chicago}% agsm
\bibliography{collection}

\begin{thebibliography}{}

\bibitem[\protect\citeauthoryear{Anthony, Bartlett, and Bartlett}{Anthony et~al.}{1999}]{anth:99}
Anthony, M., P.~L. Bartlett, and P.~L. Bartlett (1999).
\newblock {\em Neural Network Learning: Theoretical Foundations}, Volume~9.
\newblock Cambridge University Press.

\bibitem[\protect\citeauthoryear{Bauer and Kohler}{Bauer and Kohler}{2019}]{baue:19}
Bauer, B. and M.~Kohler (2019).
\newblock {On deep learning as a remedy for the curse of dimensionality in nonparametric regression}.
\newblock {\em Annals of Statistics\/}~{\em 47\/}(4), 2261--2285.

\bibitem[\protect\citeauthoryear{Belkin and Niyogi}{Belkin and Niyogi}{2003}]{belk:03}
Belkin, M. and P.~Niyogi (2003).
\newblock {Laplacian eigenmaps for dimensionality reduction and data representation}.
\newblock {\em Neural Computation\/}~{\em 15\/}(6), 1373--1396.

\bibitem[\protect\citeauthoryear{Bhattacharjee and M{\"u}ller}{Bhattacharjee and M{\"u}ller}{2023}]{mull:23:3}
Bhattacharjee, S. and H.-G. M{\"u}ller (2023).
\newblock Single index {F}r{\'e}chet regression.
\newblock {\em Annals of Statistics\/}~{\em 51\/}(4), 1770--1798.

\bibitem[\protect\citeauthoryear{Bigot, Gouet, Klein, and L{\'o}pez}{Bigot et~al.}{2017}]{bigo:17}
Bigot, J., R.~Gouet, T.~Klein, and A.~L{\'o}pez (2017).
\newblock {Geodesic {PCA} in the {W}asserstein space by convex {PCA}}.
\newblock {\em Annales de l{'}Institut Henri Poincar\'{e} B: Probability and Statistics\/}~{\em 53\/}(1), 1--26.

\bibitem[\protect\citeauthoryear{Bos and Schmidt-Hieber}{Bos and Schmidt-Hieber}{2024}]{bos:23}
Bos, T. and J.~Schmidt-Hieber (2024).
\newblock A supervised deep learning method for nonparametric density estimation.
\newblock {\em Electronic Journal of Statistics\/}~{\em 18\/}(2), 5601--5658.

\bibitem[\protect\citeauthoryear{Chen and M\"{u}ller}{Chen and M\"{u}ller}{2012}]{mull:12:1}
Chen, D. and H.-G. M\"{u}ller (2012).
\newblock {Nonlinear manifold representations for functional data}.
\newblock {\em Annals of Statistics\/}~{\em 40\/}(1), 1--29.

\bibitem[\protect\citeauthoryear{Chen and M{\"u}ller}{Chen and M{\"u}ller}{2022}]{mull:22:8}
Chen, Y. and H.-G. M{\"u}ller (2022).
\newblock {Uniform convergence of local {F}r\'echet regression, with applications to locating extrema and time warping for metric-space valued trajectories}.
\newblock {\em Annals of Statistics\/}~{\em 50\/}(3), 1573--1592.

\bibitem[\protect\citeauthoryear{Chen, Zhou, Chen, Gajardo, Fan, Zhong, Dubey, Han, Bhattacharjee, Zhu, Iao, Kundu, Petersen, and M{\"u}ller}{Chen et~al.}{2023}]{chen:20}
Chen, Y., Y.~Zhou, H.~Chen, A.~Gajardo, J.~Fan, Q.~Zhong, P.~Dubey, K.~Han, S.~Bhattacharjee, C.~Zhu, S.~I. Iao, P.~Kundu, A.~Petersen, and H.-G. M{\"u}ller (2023).
\newblock {\em {frechet: Statistical Analysis for Random Objects and Non-Euclidean Data}}.
\newblock R package version 0.3.0.

\bibitem[\protect\citeauthoryear{Chilimbi, Suzue, Apacible, and Kalyanaraman}{Chilimbi et~al.}{2014}]{chil:14}
Chilimbi, T., Y.~Suzue, J.~Apacible, and K.~Kalyanaraman (2014).
\newblock {Project Adam: Building an efficient and scalable deep learning training system}.
\newblock In {\em 11th USENIX Symposium on Operating Systems Design and Implementation (OSDI 14)}, pp.\  571--582.

\bibitem[\protect\citeauthoryear{Coifman and Lafon}{Coifman and Lafon}{2006}]{coif:06}
Coifman, R.~R. and S.~Lafon (2006).
\newblock Diffusion maps.
\newblock {\em Applied and Computational Harmonic Analysis\/}~{\em 21\/}(1), 5--30.

\bibitem[\protect\citeauthoryear{Cox and Cox}{Cox and Cox}{2001}]{cox:01}
Cox, T.~F. and M.~A.~A. Cox (2001).
\newblock {\em {Multidimensional Scaling.}}
\newblock CRC Press.

\bibitem[\protect\citeauthoryear{De~Silva and Tenenbaum}{De~Silva and Tenenbaum}{2002}]{silv:02}
De~Silva, V. and J.~B. Tenenbaum (2002).
\newblock Global versus local methods in nonlinear dimensionality reduction.
\newblock In {\em Advances in Neural Information Processing Systems}, Volume~15.

\bibitem[\protect\citeauthoryear{Delicado}{Delicado}{2011}]{deli:11}
Delicado, P. (2011).
\newblock {Dimensionality reduction when data are density functions}.
\newblock {\em Computational Statistics \& Data Analysis\/}~{\em 55\/}(1), 401--420.

\bibitem[\protect\citeauthoryear{Dijkstra}{Dijkstra}{1959}]{dijk:59}
Dijkstra, E.~W. (1959).
\newblock A note on two problems in connexion with graphs.
\newblock {\em Numerische Mathematik\/}~{\em 1}, 269--271.

\bibitem[\protect\citeauthoryear{Dryden, Koloydenko, and Zhou}{Dryden et~al.}{2009}]{dryd:09}
Dryden, I.~L., A.~Koloydenko, and D.~Zhou (2009).
\newblock {Non-{E}uclidean statistics for covariance matrices, with applications to diffusion tensor imaging}.
\newblock {\em Annals of Applied Statistics\/}~{\em 3\/}(3), 1102--1123.

\bibitem[\protect\citeauthoryear{Dubey, Chen, and M{\"u}ller}{Dubey et~al.}{2024}]{mull:22:12}
Dubey, P., Y.~Chen, and H.-G. M{\"u}ller (2024).
\newblock Metric statistics: Exploration and inference for random objects with distance profiles.
\newblock {\em Annals of Statistics\/}~{\em 52\/}(2), 757--792.

\bibitem[\protect\citeauthoryear{Fan and Gijbels}{Fan and Gijbels}{1996}]{fan:96}
Fan, J. and I.~Gijbels (1996).
\newblock {\em {Local Polynomial Modelling and its Applications}}.
\newblock London: Chapman \& Hall.

\bibitem[\protect\citeauthoryear{Faraway}{Faraway}{2014}]{fara:14}
Faraway, J.~J. (2014).
\newblock {Regression for non-{E}uclidean data using distance matrices}.
\newblock {\em Journal of Applied Statistics\/}~{\em 41\/}(11), 2342--2357.

\bibitem[\protect\citeauthoryear{Fr\'{e}chet}{Fr\'{e}chet}{1948}]{frec:48}
Fr\'{e}chet, M. (1948).
\newblock {Les \'{e}l\'{e}ments al\'{e}atoires de nature quelconque dans un espace distanci\'{e}}.
\newblock {\em Annales de l'Institut Henri Poincar\'{e}\/}~{\em 10\/}(4), 215--310.

\bibitem[\protect\citeauthoryear{Ghojogh, Crowley, Karray, and Ghodsi}{Ghojogh et~al.}{2023}]{ghoj:23}
Ghojogh, B., M.~Crowley, F.~Karray, and A.~Ghodsi (2023).
\newblock {\em Elements of Dimensionality Reduction and Manifold Learning}.
\newblock Springer New York.

\bibitem[\protect\citeauthoryear{Goodfellow, Bengio, and Courville}{Goodfellow et~al.}{2016}]{good:17}
Goodfellow, I., Y.~Bengio, and A.~Courville (2016).
\newblock {\em Deep Learning}.
\newblock MIT Press.

\bibitem[\protect\citeauthoryear{Han, Pool, Tran, and Dally}{Han et~al.}{2015}]{han:15}
Han, S., J.~Pool, J.~Tran, and W.~Dally (2015).
\newblock Learning both weights and connections for efficient neural network.
\newblock In {\em Advances in Neural Information Processing Systems}, Volume~28.

\bibitem[\protect\citeauthoryear{Hein}{Hein}{2009}]{hein:09}
Hein, M. (2009).
\newblock Robust nonparametric regression with metric-space valued output.
\newblock In {\em Advances in Neural Information Processing Systems}, pp.\  718--726.

\bibitem[\protect\citeauthoryear{Kingma and Ba}{Kingma and Ba}{2015}]{king:14}
Kingma, D.~P. and J.~Ba (2015).
\newblock Adam: A method for stochastic optimization.
\newblock In {\em Proceedings of the 3rd International Conference on Learning Representations}.

\bibitem[\protect\citeauthoryear{Kohler and Langer}{Kohler and Langer}{2021}]{kohl:21}
Kohler, M. and S.~Langer (2021).
\newblock On the rate of convergence of fully connected deep neural network regression estimates.
\newblock {\em Annals of Statistics\/}~{\em 49\/}(4), 2231--2249.

\bibitem[\protect\citeauthoryear{Kurisu, Fukami, and Koike}{Kurisu et~al.}{2025}]{kuri:22}
Kurisu, D., R.~Fukami, and Y.~Koike (2025).
\newblock Adaptive deep learning for nonparametric time series regression.
\newblock {\em Bernoulli\/}~{\em 31\/}(1), 240--270.

\bibitem[\protect\citeauthoryear{LeCun, Bengio, and Hinton}{LeCun et~al.}{2015}]{lecu:15}
LeCun, Y., Y.~Bengio, and G.~Hinton (2015).
\newblock Deep learning.
\newblock {\em Nature\/}~{\em 521\/}(7553), 436--444.

\bibitem[\protect\citeauthoryear{McInnes and Healy}{McInnes and Healy}{2018}]{mcin:18}
McInnes, L. and J.~Healy (2018).
\newblock {UMAP: Uniform manifold approximation and projection for dimension reduction}.
\newblock {\em arXiv preprint arXiv:1802.03426\/}.

\bibitem[\protect\citeauthoryear{M\"{u}ller}{M\"{u}ller}{2016}]{mull:16:7}
M\"{u}ller, H.-G. (2016).
\newblock {{P}eter {H}all, {F}unctional {D}ata {A}nalysis and {R}andom {O}bjects}.
\newblock {\em Annals of Statistics\/}~{\em 44\/}(5), 1867--1887.

\bibitem[\protect\citeauthoryear{Nair and Hinton}{Nair and Hinton}{2010}]{nair:10}
Nair, V. and G.~E. Hinton (2010).
\newblock Rectified linear units improve restricted {B}oltzmann machines.
\newblock In {\em Proceedings of the 27th International Conference on Machine Learning}, pp.\  807--814.

\bibitem[\protect\citeauthoryear{Nye, Tang, Weyenberg, and Yoshida}{Nye et~al.}{2017}]{nye:17}
Nye, T.~M., X.~Tang, G.~Weyenberg, and R.~Yoshida (2017).
\newblock Principal component analysis and the locus of the {F}r{\'e}chet mean in the space of phylogenetic trees.
\newblock {\em Biometrika\/}~{\em 104\/}(4), 901--922.

\bibitem[\protect\citeauthoryear{Petersen and M{\"u}ller}{Petersen and M{\"u}ller}{2019}]{mull:19:6}
Petersen, A. and H.-G. M{\"u}ller (2019).
\newblock {Fr\'{e}chet regression for random objects with Euclidean predictors}.
\newblock {\em Annals of Statistics\/}~{\em 47\/}(2), 691--719.

\bibitem[\protect\citeauthoryear{Petersen, Zhang, and Kokoszka}{Petersen et~al.}{2022}]{pete:22}
Petersen, A., C.~Zhang, and P.~Kokoszka (2022).
\newblock {Modeling probability density functions as data objects}.
\newblock {\em Econometrics and Statistics\/}~{\em 21}, 159--178.

\bibitem[\protect\citeauthoryear{Recht, Re, Wright, and Niu}{Recht et~al.}{2011}]{rech:11}
Recht, B., C.~Re, S.~Wright, and F.~Niu (2011).
\newblock Hogwild!: A lock-free approach to parallelizing stochastic gradient descent.
\newblock In {\em Advances in Neural Information Processing Systems}, Volume~24.

\bibitem[\protect\citeauthoryear{Schmidt-Hieber}{Schmidt-Hieber}{2020}]{schm:20}
Schmidt-Hieber, J. (2020).
\newblock {Nonparametric regression using deep neural networks with ReLU activation function}.
\newblock {\em Annals of Statistics\/}~{\em 48\/}(4), 1875--1897.

\bibitem[\protect\citeauthoryear{Sch{\"o}tz}{Sch{\"o}tz}{2022}]{scho:22}
Sch{\"o}tz, C. (2022).
\newblock {Nonparametric regression in nonstandard spaces}.
\newblock {\em Electronic Journal of Statistics\/}~{\em 16\/}(2), 4679--4741.

\bibitem[\protect\citeauthoryear{Song and Han}{Song and Han}{2023}]{han:23}
Song, D. and K.~Han (2023).
\newblock Errors-in-variables {F}r\'echet regression with low-rank covariate approximation.
\newblock In {\em Advances in Neural Information Processing Systems}, Volume~36, pp.\  80575--80607.

\bibitem[\protect\citeauthoryear{Srinivas, Subramanya, and Venkatesh~Babu}{Srinivas et~al.}{2017}]{srin:17}
Srinivas, S., A.~Subramanya, and R.~Venkatesh~Babu (2017).
\newblock Training sparse neural networks.
\newblock In {\em Computer Vision and Pattern Recognition}, pp.\  138--145.

\bibitem[\protect\citeauthoryear{Srivastava, Hinton, Krizhevsky, Sutskever, and Salakhutdinov}{Srivastava et~al.}{2014}]{sriv:14}
Srivastava, N., G.~Hinton, A.~Krizhevsky, I.~Sutskever, and R.~Salakhutdinov (2014).
\newblock Dropout: a simple way to prevent neural networks from overfitting.
\newblock {\em Journal of Machine Learning Research\/}~{\em 15\/}(1), 1929--1958.

\bibitem[\protect\citeauthoryear{Stellato, Banjac, Goulart, Bemporad, and Boyd}{Stellato et~al.}{2020}]{stel:20}
Stellato, B., G.~Banjac, P.~Goulart, A.~Bemporad, and S.~Boyd (2020).
\newblock {OSQP}: An operator splitting solver for quadratic programs.
\newblock {\em Mathematical Programming Computation\/}~{\em 12\/}(4), 637--672.

\bibitem[\protect\citeauthoryear{Sturm}{Sturm}{2003}]{stur:03}
Sturm, K.-T. (2003).
\newblock {Probability measures on metric spaces of nonpositive curvature}.
\newblock {\em Heat Kernels and Analysis on Manifolds, Graphs, and Metric Spaces (Paris, 2002). Contemp. Math., 338. Amer. Math. Soc., Providence, RI\/}~{\em 338}, 357--390.

\bibitem[\protect\citeauthoryear{Tenenbaum, De~Silva, and Langford}{Tenenbaum et~al.}{2000}]{tene:00}
Tenenbaum, J.~B., V.~De~Silva, and J.~C. Langford (2000).
\newblock {A global geometric framework for nonlinear dimensionality reduction}.
\newblock {\em Science\/}~{\em 290\/}(5500), 2319--2323.

\bibitem[\protect\citeauthoryear{Tucker, Wu, and M{\"u}ller}{Tucker et~al.}{2023}]{mull:21:1}
Tucker, D.~C., Y.~Wu, and H.-G. M{\"u}ller (2023).
\newblock Variable selection for global {F}r{\'e}chet regression.
\newblock {\em Journal of the American Statistical Association\/}~{\em 118\/}(542), 1023--1037.

\bibitem[\protect\citeauthoryear{Van~der Maaten and Hinton}{Van~der Maaten and Hinton}{2008}]{van:08}
Van~der Maaten, L. and G.~Hinton (2008).
\newblock Visualizing data using {t-SNE}.
\newblock {\em Journal of Machine Learning Research\/}~{\em 9\/}(86), 2579--2605.

\bibitem[\protect\citeauthoryear{Van~der Vaart and Wellner}{Van~der Vaart and Wellner}{2023}]{well:23}
Van~der Vaart, A. and J.~Wellner (2023).
\newblock {\em Weak Convergence and Empirical Processes: With Applications to Statistics}.
\newblock Springer New York.

\bibitem[\protect\citeauthoryear{Villani}{Villani}{2003}]{vill:03}
Villani, C. (2003).
\newblock {\em {Topics in Optimal Transportation}}.
\newblock American Mathematical Society.

\bibitem[\protect\citeauthoryear{Ying and Yu}{Ying and Yu}{2022}]{ying:22}
Ying, C. and Z.~Yu (2022).
\newblock {F}r{\'e}chet sufficient dimension reduction for random objects.
\newblock {\em Biometrika\/}~{\em 109\/}(4), 975--992.

\bibitem[\protect\citeauthoryear{Zhang, Kokoszka, and Petersen}{Zhang et~al.}{2022}]{zhan:22}
Zhang, C., P.~Kokoszka, and A.~Petersen (2022).
\newblock {Wasserstein} autoregressive models for density time series.
\newblock {\em Journal of Time Series Analysis\/}~{\em 43\/}(2), 30--52.

\bibitem[\protect\citeauthoryear{Zhang, Xue, and Li}{Zhang et~al.}{2024}]{zhan:21:1}
Zhang, Q., L.~Xue, and B.~Li (2024).
\newblock Dimension reduction for {F}r{\'e}chet regression.
\newblock {\em Journal of the American Statistical Association\/}~{\em 119\/}(548), 2733--2747.

\bibitem[\protect\citeauthoryear{Zheng, Meng, Wang, Chen, Yu, Ma, and Liu}{Zheng et~al.}{2017}]{zhen:17}
Zheng, S., Q.~Meng, T.~Wang, W.~Chen, N.~Yu, Z.-M. Ma, and T.-Y. Liu (2017).
\newblock Asynchronous stochastic gradient descent with delay compensation.
\newblock In {\em Proceedings of the 34th International Conference on Machine Learning}, pp.\  4120--4129.

\bibitem[\protect\citeauthoryear{Zhong, Mueller, and Wang}{Zhong et~al.}{2022}]{zhon:22}
Zhong, Q., J.~Mueller, and J.-L. Wang (2022).
\newblock Deep learning for the partially linear {C}ox model.
\newblock {\em Annals of Statistics\/}~{\em 50\/}(3), 1348--1375.

\bibitem[\protect\citeauthoryear{Zhou and M\"{u}ller}{Zhou and M\"{u}ller}{2022}]{mull:22:11}
Zhou, Y. and H.-G. M\"{u}ller (2022).
\newblock {Network regression with graph {L}aplacians}.
\newblock {\em Journal of Machine Learning Research\/}~{\em 23\/}(320), 1--41.

\bibitem[\protect\citeauthoryear{Zhou and M{\"u}ller}{Zhou and M{\"u}ller}{2024}]{zhou:23}
Zhou, Y. and H.-G. M{\"u}ller (2024).
\newblock Wasserstein regression with empirical measures and density estimation for sparse data.
\newblock {\em Biometrics\/}~{\em 80\/}(4), ujae127.

\end{thebibliography}
\double
\newpage

\bigskip
\begin{center}
{\large\bf Supplementary Material}
\end{center}
\setcounter{subsection}{0}
\renewcommand{\thesubsection}{S.\arabic{subsection}}
\setcounter{prop}{0}
\setcounter{lem}{0}
\setcounter{defi}{0}
\setcounter{equation}{0}
\setcounter{figure}{0}
\setcounter{table}{0}
\renewcommand{\theprop}{S\arabic{prop}}
\renewcommand{\thelem}{S\arabic{lem}}
\renewcommand{\thedefi}{S\arabic{defi}}
\renewcommand{\theequation}{S\arabic{equation}}
\renewcommand{\thefigure}{S\arabic{figure}}
\renewcommand{\thetable}{S\arabic{table}}

\subsection{Covering Number}\label{supp:back}
\begin{defi}[Covering number]\label{def:covering}
    Let $(\Omega, d)$ be a metric space with a subset $S \subseteq \Omega$. For  $\varepsilon>0$, the $\varepsilon$-covering number of $S$, denoted by $N(\varepsilon, S, d)$, is the minimal number of balls of radius $\varepsilon$ needed to cover the set $S$, i.e.,
$$N(\varepsilon, S, d):=\min \{k:\text{ There exist }y_1, \ldots, y_k \in \Omega\text{ such that }S \subseteq \bigcup_{i=1}^k B_\varepsilon(y_i)\},
$$
where $B_\varepsilon(y)=\{y' \in \Omega: d(y, y') \leq \varepsilon\}$ denotes the closed $\varepsilon$-ball centered at $y \in \Omega$.
\end{defi}

\subsection{ISOMAP}\label{supp:mani}
Consider a totally bounded metric space $(\Omega, d)$ with distance function $d: \Omega\times\Omega \mapsto [0, \infty)$, and let $\mathcal{M}\subset\Omega$ be a manifold isomorphic to a subspace of the Euclidean space. There exists a bijective representation map of the manifold, $\bm{\psi}:\mathcal{M} \mapsto \mathbb{R}^r$, where $r$ is the intrinsic dimension of the manifold $\mathcal{M}$. Denote the low-dimensional representation of any object $Y\in\mathcal{M}$ as $\bm{\psi}(Y)\in\mathbb{R}^r$.

Consider the geodesic distance on the manifold $\mathcal{M}$:
\begin{equation*}
    d_g(y_1, y_2)=\inf_{\gamma} \{L(\gamma): \gamma(0)=y_1, \gamma(1)=y_2\},
\end{equation*}
where $\gamma:[0, 1]\mapsto\mathcal{M}$ is a continuous curve defined on the manifold with the length operator
\begin{equation*}
   L(\gamma)=\sup \sum_{i=0}^{n-1}d\{\gamma(s_{i+1}),\gamma(s_i)\}. 
\end{equation*}
The supremum is taken over all partitions of the interval $[0,1]$ with arbitrary break points $0=s_0<\cdots<s_n=1$. The geodesic distance is the length of the shortest path on $\mathcal{M}$ connecting the two points and therefore is adapted to $\mathcal{M}$.

In practice, only samples of random objects $\{Y_i\}_{i=1}^n\in\mathcal{M}$ are available, while the underlying low-dimensional representations $\bm{\psi}(Y_i)$, $\, i=1, \ldots, n,$ remain unknown. Following \citet{tene:00}, we employ the ISOMAP algorithm to estimate the unknown representation map $\bm{\psi}$ that preserves the geodesic distance over $\mathcal{M}$.

To estimate the geodesic distance using $\{Y_i\}_{i=1}^n$, one can implement Dijkstra's algorithm \citep{dijk:59} for the neighborhood graph constructed using the original distance $d$. The representation map $\bm{\psi}$ is then estimated at $\{Y_i\}_{i=1}^n$ via the following optimization problem:
\begin{equation*}
\argmin_{\{\bm{\psi}(Y_1), \ldots, \bm{\psi}(Y_n)\}}\sum_{i, j=1}^n\{\|\bm{\psi}(Y_i)-\bm{\psi}(Y_j)\|-\hat{d}_g(Y_i, Y_j)\}^2,  
\end{equation*}
where $\hat{d}_g$ is the estimated geodesic distance using Dijkstra's algorithm and the minimum is taken over the vectors $\bm{\psi}(Y_i) \in \mathbb{R}^r$, $i=1, \ldots, n$. The minimizer $\hat{\bm{\psi}}(Y_i)$, $i=1, \ldots, n$ can be obtained by multidimensional scaling (MDS) \citep{cox:01}. 

\subsection{Verification of Model Assumptions}
\label{supp:prop}
\begin{prop}
	\label{prop:wass}
    The Wasserstein space $(\Omega, d_{\mathcal{W}})$ defined in Example \ref{exm:was} satisfies Assumptions \ref{itm:l1}--\ref{itm:l3}.
\end{prop}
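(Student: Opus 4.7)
The plan is to reduce Proposition~\ref{prop:wass} to classical Hilbert-space arguments by invoking the isometric embedding of the one-dimensional Wasserstein space into $L^2([0,1])$ via the quantile map $\mu \mapsto F_\mu^{-1}$. Under this embedding, $(\Omega, d_{\mathcal{W}})$ is identified with the closed convex subset $\mathcal{C} \subset L^2([0,1])$ of nondecreasing functions taking values in the underlying closed interval $[a,b]$, and $d_{\mathcal{W}}$ coincides with the $L^2$ norm. All four Fr\'echet-type objectives in \eqref{eq:cfm}--\eqref{eq:lfrhat} thereby become weighted squared-$L^2$ functionals on $\mathcal{C}$, which makes each of \ref{itm:l1}--\ref{itm:l3} amenable either to projection-theoretic or to entropy-of-monotone-functions arguments.

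Next I would verify \ref{itm:l1} and \ref{itm:l3} simultaneously. Writing each candidate $y \in \Omega$ in quantile form $q = F_y^{-1}$ and setting $m(t) = E\{w(\bm{Z}^0,\bm{z},h)\, F_Y^{-1}(t)\}$, an expansion that uses the identity $E\{w(\bm{Z}^0,\bm{z},h)\}=1$ yields
\begin{equation*}
Q_h(y,\bm{z}) = c(\bm{z},h) + \|m - q\|_{L^2}^2,
\end{equation*}
where $c(\bm{z},h)$ does not depend on $y$; analogous identities hold for $Q$, $\tilde{Q}_h$ and $\hat{Q}_h$ with their own weighted mean-quantile processes. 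The minimizer over $\mathcal{C}$ is therefore the $L^2$-projection $\Pi_{\mathcal{C}}(m)$, which exists and is unique by closedness and convexity of $\mathcal{C}$; for $\tilde{v}_h$ and $\hat{v}_h$ the same argument applies almost surely on the event that the empirical local-linear normalizations are nonzero, which by \ref{itm:k1}--\ref{itm:p1} occurs with probability tending to $1$. The Hilbert-space projection inequality
\begin{equation*}
\|m - q\|_{L^2}^2 \geq \|m - \Pi_{\mathcal{C}}(m)\|_{L^2}^2 + \|q - \Pi_{\mathcal{C}}(m)\|_{L^2}^2 \quad \text{for all } q \in \mathcal{C}
\end{equation*}
then delivers, in each of the four cases,
\begin{equation*}
Q_h(y, \bm{z}) - Q_h(v_h(\bm{z}), \bm{z}) \geq d_{\mathcal{W}}^2(y, v_h(\bm{z})),
\end{equation*}
which simultaneously gives the strict separation statements \ref{itm:l1}(i)--(iv) and the curvature bounds \ref{itm:l3}(i)--(iii) with $C_i = 1$ and $\gamma_1 = \gamma_2 = \gamma_3 = 2$.

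For the entropy condition \ref{itm:l2}, I would exploit that $\mathcal{C}$ is contained in the class $\mathcal{F}$ of nondecreasing functions on $[0,1]$ taking values in $[a,b]$, for which classical results (e.g., Theorem~2.7.5 of \citet{well:23}) give $\log N_{[\,]}(\eta, \mathcal{F}, \|\cdot\|_{L^2}) \lesssim (b-a)/\eta$. The Wasserstein ball $B_\delta(y)$ corresponds under the quantile embedding to $\mathcal{C} \cap B_\delta^{L^2}(F_y^{-1})$: any $f$ in this set admits a decomposition $f = F_y^{-1} + g$ in which $g$ is a difference of two monotone functions with total variation at most $2(b-a)$ and $\|g\|_{L^2}\le\delta$. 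Covering such perturbations by piecewise constant approximations whose refinement level is calibrated to $\delta$ produces a local covering-number bound that keeps $\int_0^1 [1 + \log N(\delta\varepsilon, B_\delta(y), d_{\mathcal{W}})]^{1/2} d\varepsilon$ bounded uniformly as $\delta \to 0$.

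The main obstacle is precisely this last step. A crude application of the global monotone bracketing rate would give $\log N(\delta\varepsilon, B_\delta(y), d_{\mathcal{W}}) \lesssim 1/(\delta\varepsilon)$, under which the entropy integral behaves like $\delta^{-1/2}$ and blows up. The localization argument above is essential because it exploits the extra structure that $B_\delta(y)$ lies in an $L^2$-ball of radius $\delta$, not merely in $\mathcal{C}$; obtaining the localized bound with the uniformity in $\delta$ required by \ref{itm:l2} is the only substantive analytic step in the proof, with all remaining ingredients reducing to standard projection geometry in Hilbert space.
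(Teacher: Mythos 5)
Your treatment of \ref{itm:l1} and \ref{itm:l3} is correct and is essentially the paper's own argument: after the quantile embedding, each weighted criterion decomposes as a constant plus the squared $L^2$ distance to the corresponding (weighted) mean quantile function, because the local linear weights integrate (respectively average) to one; the minimizer is then the metric projection of that mean onto the closed convex set of quantile functions, and the variational characterization of the projection yields existence, uniqueness, the separation statements, and quadratic growth with constant $1$, giving $\gamma_1=\gamma_2=\gamma_3=2$. The only difference is bookkeeping: the paper carries out this projection argument only for the genuinely new items \ref{itm:l1}~(iv) and \ref{itm:l3}~(iii) (for $\hat{Q}_h$ and $\tilde{Q}_h$) and imports the remaining parts of \ref{itm:l1}, all of \ref{itm:l2}, and \ref{itm:l3}~(i)--(ii) from the verification in \citet{mull:22:8}, whereas you rederive the population-level parts directly; that portion of your argument is fine.

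The gap is \ref{itm:l2}, and it is more serious than an unfinished step: the localization you propose cannot produce a covering bound depending only on the ratio $\varepsilon$. Take $y$ uniform on the interval, so $F_y^{-1}(t)=t$, split $(0,1)$ into $m$ subintervals, and on any subset $S$ of them replace $F_y^{-1}$ by a staircase (constant equal to the left endpoint on the first half of the subinterval, then rejoining $t$). Each resulting function is a valid quantile function, hence an element of $\Omega$; each perturbed subinterval contributes squared $L^2$ distance of order $m^{-3}$, so all these elements lie in $B_\delta(y)$ with $\delta\asymp 1/m$, while a Varshamov--Gilbert selection of order $2^{cm}$ subsets with pairwise symmetric difference at least $m/4$ gives pairwise distances at least $\delta/2$. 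Consequently $\log N\{\delta/4, B_\delta(y), d_{\mathcal{W}}\}\gtrsim 1/\delta$, so no bound of the form $\log N\{\delta\varepsilon, B_\delta(y), d_{\mathcal{W}}\}\lesssim h(\varepsilon)$ uniformly in $\delta$ is available at constant $\varepsilon$, even though these perturbations satisfy exactly the constraints you invoke ($g=f-F_y^{-1}$ of bounded total variation with $\|g\|_{L^2}\le\delta$). Hence piecewise-constant approximation of such $g$, calibrated to $\delta$, cannot keep the normalized entropy integral bounded along the route you describe, and your ``only substantive analytic step'' is precisely the one that fails as planned. The paper does not attempt a fresh derivation of \ref{itm:l2}; it relies on the established verification in \citet{mull:22:8}, so as a self-contained proof your proposal is incomplete (and its proposed mechanism unworkable) exactly at this assumption.
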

\begin{proof}
    As discussed in Example \ref{exm:was} of \citet{mull:22:8}, the Wasserstein space $(\Omega, d_{\mathcal{W}})$ satisfies the first three parts of \ref{itm:l1}, \ref{itm:l2}, and the first two parts of \ref{itm:l3} with $\gamma_1=\gamma_2=2$. It suffices to show \ref{itm:l1} (iv) and \ref{itm:l3} (iii). For any probability distribution $y\in\Omega$, let $F_y^{-1}$ be the corresponding quantile function. Let $\langle\cdot, \cdot\rangle$, $\|\cdot\|_{L^2}$, and $d_{L^2}(\cdot, \cdot)$ be the inner product, norm, and distance on the Hilbert space $L^2(0, 1)$. For any $y\in\Omega$, the map from $y$ to $F_{y}^{-1}$ is an isometry from $\Omega$ to the subset of $L^2(0, 1)$ formed by equivalence classes of left-continuous nondecreasing functions on $(0, 1)$. The Wasserstein space $\Omega$ can thus be viewed as a subset of $L^2(0, 1)$, which has been shown to be convex and closed \citep{bigo:17}.
    
    Let
    \begin{equation}\label{eq:B:wass}
        \hat{B}(\bm{z})=\frac{1}{n}\sum_{i=1}^n\hat{w}(\hat{\bm{Z}}_i, \bm{z}, h)F_{Y_i}^{-1}.
    \end{equation}
    Since 
    \begin{align*}
        \hat{Q}_h(y, \bm{z})&=\frac{1}{n}\sum_{i=1}^n\hat{w}(\hat{\bm{Z}}_i, \bm{z}, h)d_{\mathcal{W}}^2(Y_i, y)\\&=\frac{1}{n}\sum_{i=1}^n\hat{w}(\hat{\bm{Z}}_i, \bm{z}, h)[d_{L^2}^2\{F_{Y_i}^{-1}, \hat{B}(\bm{z})\}+d_{L^2}^2\{\hat{B}(\bm{z}), F_y^{-1}\}+\\&\hspace{1.5em}2\langle F_{Y_i}^{-1}-\hat{B}(\bm{z}), \hat{B}(\bm{z})-F_y^{-1}\rangle_{L^2}]\\&=\hat{Q}_h(\hat{B}(\bm{z}), \bm{z})+d_{L^2}^2\{\hat{B}(\bm{z}), F_y^{-1}\}+\\&\hspace{1.5em}\frac{2}{n}\sum_{i=1}^n\hat{w}(\hat{\bm{Z}}_i, \bm{z}, h)\langle F_{Y_i}^{-1}-\hat{B}(\bm{z}), \hat{B}(\bm{z})-F_y^{-1}\rangle_{L^2}
    \end{align*}
    and
    \begin{align*}
        &\frac{1}{n}\sum_{i=1}^n\hat{w}(\hat{\bm{Z}}_i, \bm{z}, h)\langle F_{Y_i}^{-1}-\hat{B}(\bm{z}), \hat{B}(\bm{z})-F_y^{-1}\rangle_{L^2}\\&=\langle \frac{1}{n}\sum_{i=1}^n\hat{w}(\hat{\bm{Z}}_i, \bm{z}, h)F_{Y_i}^{-1}-\hat{B}(\bm{z}), \hat{B}(\bm{z})-F_y^{-1}\rangle_{L^2}\\&=\langle \hat{B}(\bm{z})-\hat{B}(\bm{z}), \hat{B}(\bm{z})-F_y^{-1}\rangle_{L^2}\\&=0,
    \end{align*}
    one has
    \[\hat{Q}_h(y, \bm{z})=\hat{Q}_h(\hat{B}(\bm{z}), \bm{z})+d_{L^2}^2\{\hat{B}(\bm{z}), F_y^{-1}\},\]
    whence
    \begin{equation}\label{eq:v_hat:wass}
        \hat{v}_h(\bm{z})=\argmin_{y\in\Omega}\hat{Q}_h(y, \bm{z})=\argmin_{y\in\Omega}d_{L^2}^2\{\hat{B}(\bm{z}), F_y^{-1}\}.
    \end{equation}
    Then by the convexity and closedness of the Wasserstein space, the minimizer $\hat{v}_h(\bm{z})$ exists and is unique for any $\bm{z}\in\mathbb{R}^r$. Hence \ref{itm:l1} (iv) is satisfied.
    
    In view of 
    \[\frac{1}{n}\sum_{i=1}^n\tilde{w}(\bm{Z}_i^0, \bm{z}, h)=1,\]
    one can similarly show that 
    \[\tilde{v}_h(\bm{z})=\argmin_{y\in\Omega}\tilde{Q}_h(y, \bm{z})=\argmin_{y\in\Omega}d_{L^2}^2\{\tilde{B}(\bm{z}), F_y^{-1}\},\]
    where
    \[\tilde{B}(\bm{z})=\frac{1}{n}\sum_{i=1}^n\tilde{w}(\bm{Z}_i^0, \bm{z}, h)F_{Y_i}^{-1}.\]
    Observe that $\tilde{v}_h(\bm{z})$, viewed as the best approximation of $\tilde{B}(\bm{z})$ in $\Omega$, is characterized by 
    \[\langle \tilde{B}(\bm{z})-F_{\tilde{v}_h(\bm{z})}^{-1}, F_y^{-1}-F_{\tilde{v}_h(\bm{z})}^{-1}\rangle_{L^2}\leq0,\quad\text{for all }y\in\Omega.\]
    It follows that
    \begin{align*}
        \tilde{Q}_h(y, \bm{z})&=\frac{1}{n}\sum_{i=1}^n\tilde{w}(\bm{Z}_i^0, \bm{z}, h)d_{\mathcal{W}}^2(Y_i, y)\\&=\tilde{Q}_h\{\tilde{v}_h(\bm{z}), \bm{z}\}+d_{\mathcal{W}}^2\{\tilde{v}_h(\bm{z}), y\}+\\&\hspace{1.5em}\frac{2}{n}\sum_{i=1}^n\tilde{w}(\bm{Z}_i^0, \bm{z}, h)\langle F_{Y_i}^{-1}-F_{\tilde{v}_h(\bm{z})}^{-1}, F_{\tilde{v}_h(\bm{z})}^{-1}-F_y^{-1}\rangle_{L^2}\\&=\tilde{Q}_h\{\tilde{v}_h(\bm{z}), \bm{z}\}+d_{\mathcal{W}}^2\{\tilde{v}_h(\bm{z}), y\}+2\langle \tilde{B}(\bm{z})-F_{\tilde{v}_h(\bm{z})}^{-1}, F_{\tilde{v}_h(\bm{z})}^{-1}-F_y^{-1}\rangle_{L^2}\\&\geq\tilde{Q}_h\{\tilde{v}_h(\bm{z}), \bm{z}\}+d_{\mathcal{W}}^2\{\tilde{v}_h(\bm{z}), y\}
    \end{align*}
    for all $y\in\Omega$. Therefore, \ref{itm:l3} (iii) holds for the Wasserstein space $(\Omega, d_{\mathcal{W}})$ for any $\eta_3>0$, $C_3=1$ and $\gamma_3=2$.
\end{proof}

\begin{prop}
\label{prop:mat}
    The metric space $(\Omega, d_{F})$ defined in Example \ref{exm:mat} satisfies Assumptions \ref{itm:l1}--\ref{itm:l3}, where $\Omega$ is the space of graph Laplacians of undirected weighted networks with a fixed number of nodes $m$ and bounded edge weights.
\end{prop}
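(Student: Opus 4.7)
The plan is to follow the same overall architecture as the proof of Proposition \ref{prop:wass}, exploiting the fact that the space of graph Laplacians sits naturally inside a Hilbert space. The key observation is that the vector space of $m \times m$ symmetric matrices equipped with the Frobenius inner product $\langle A, B\rangle = \trace(A^\T B)$ is a Hilbert space whose induced norm metric is $d_F$. Graph Laplacians of undirected weighted networks are characterized by symmetry, row sums equal to zero, and non-positive off-diagonal entries (all linear constraints), together with the boundedness constraint on edge weights; the intersection of these constraints is both convex and closed in Frobenius topology, so $\Omega$ is a convex closed subset of the ambient Hilbert space. This is the structural fact that will power the rest of the argument.

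I would then dispatch the first three parts of \ref{itm:l1}, together with \ref{itm:l2} and the first two parts of \ref{itm:l3}, by citing the analogous verification in \citet{mull:22:8} (or a simple direct check): because $\Omega$ is a bounded subset of a finite-dimensional Euclidean space after vectorization, the metric entropy integral in \ref{itm:l2} is trivially finite, while the quadratic form of $d_F^2$ immediately yields the curvature bounds with $\gamma_1 = \gamma_2 = 2$. The novel content is \ref{itm:l1}(iv) and \ref{itm:l3}(iii), and for these I would mirror the Wasserstein proof almost verbatim. Denote by $L_{Y_i}$ the graph Laplacian of $Y_i$ viewed as an element of the Hilbert space, and set
\[\hat{B}(\bm{z}) = \frac{1}{n}\sum_{i=1}^n \hat{w}(\hat{\bm{Z}}_i, \bm{z}, h)\, L_{Y_i}.\]
Since the local linear weights satisfy $n^{-1}\sum_i \hat{w}(\hat{\bm{Z}}_i, \bm{z}, h) = 1$, the cross term in expanding $d_F^2(L_{Y_i}, L_y)$ around $\hat{B}(\bm{z})$ vanishes, giving
\[\hat{Q}_h(y, \bm{z}) = \hat{Q}_h(\hat{B}(\bm{z}), \bm{z}) + d_F^2(L_y, \hat{B}(\bm{z})).\]
Thus $\hat{v}_h(\bm{z})$ is realized as the metric projection of $\hat{B}(\bm{z})$ onto the convex closed set $\Omega$, which exists and is unique by the standard best-approximation theorem in Hilbert spaces, yielding \ref{itm:l1}(iv). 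The same projection identity applied to $\tilde{v}_h(\bm{z})$, combined with the variational inequality $\langle \tilde{B}(\bm{z}) - L_{\tilde{v}_h(\bm{z})}, L_y - L_{\tilde{v}_h(\bm{z})}\rangle \le 0$, gives the quadratic growth bound with $\gamma_3 = 2$ and $C_3 = 1$, establishing \ref{itm:l3}(iii).

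The main obstacle will not be the central projection argument, which transfers essentially verbatim from the Wasserstein case, but rather the bookkeeping needed to verify that $\Omega$ is genuinely convex and closed under all the defining constraints simultaneously, and that the boundedness restriction on edge weights does not create difficulties (it does not, since the intersection of a convex closed set with a closed box is still convex and closed, and the projection onto such a set is still well-defined and characterized by the same variational inequality). A second minor point worth handling carefully is that the weights $\hat{w}(\hat{\bm{Z}}_i, \bm{z}, h)$ can be negative, so one should emphasize that the decomposition relies only on the weights summing to one, not on positivity, and hence the argument goes through without modification.
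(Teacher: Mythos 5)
Your proposal is correct and follows essentially the same route as the paper: delegate the first parts of \ref{itm:l1}--\ref{itm:l3} to prior work (the paper cites Theorem 3 of \citet{mull:22:11} rather than \citet{mull:22:8}), then show that $\hat{v}_h(\bm{z})$ and $\tilde{v}_h(\bm{z})$ are metric projections of the weighted averages $\hat{B}(\bm{z})$, $\tilde{B}(\bm{z})$ onto the convex closed set $\Omega$ in the Frobenius Hilbert space, with existence/uniqueness from the best-approximation theorem and \ref{itm:l3}(iii) from the variational inequality, yielding $\gamma_3=2$, $C_3=1$. Your added remarks on the weights summing to one (not requiring positivity) and on closedness/convexity under the edge-weight bound are exactly the points the paper uses implicitly.
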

\begin{proof}
    As discussed in Theorem 3 of \citet{mull:22:11}, the space of graph Laplacians $(\Omega, d_{F})$ satisfies the first three parts of \ref{itm:l1}, \ref{itm:l2}, and the first two parts of \ref{itm:l3} with $\gamma_1=\gamma_2=2$. It suffices to show \ref{itm:l1} (iv) and \ref{itm:l3} (iii). Let $\langle\cdot, \cdot\rangle_F$ and $\|\cdot\|_{F}$ be the Frobenius inner product and norm on $\Omega$. Define
    \begin{equation}\label{eq:B:mat}
        \hat{B}(\bm{z})=\frac{1}{n}\sum_{i=1}^n\hat{w}(\hat{\bm{Z}}_i, \bm{z}, h)Y_i,\quad\tilde{B}(\bm{z})=\frac{1}{n}\sum_{i=1}^n\tilde{w}(\bm{Z}_i^0, \bm{z}, h)Y_i.
    \end{equation}
    % \[\]
    Similar to the proof of Proposition \ref{prop:wass}, one can show that
    \begin{equation}\label{eq:v_hat:mat}
        \hat{v}_h(\bm{z})=\argmin_{y\in\Omega}d_{F}^2\{\hat{B}(\bm{z}), y\},\quad\tilde{v}_h(\bm{z})=\argmin_{y\in\Omega}d_{F}^2\{\tilde{B}(\bm{z}), y\}.
    \end{equation}
    % \[\]
    Then by the convexity and closedness of the space of graph Laplacians, the minimizer $\hat{v}_h(\bm{z})$ exists and is unique for any $\bm{z}\in\mathbb{R}^r$. Hence \ref{itm:l1} (iv) is satisfied.
    
    Observe that $\tilde{v}_h(\bm{z})$, viewed as the best approximation of $\tilde{B}(\bm{z})$ in $\Omega$, is characterized by
    \[\langle \tilde{B}(\bm{z})-\tilde{v}_h(\bm{z}), y-\tilde{v}_h(\bm{z})\rangle_{F}\leq0,\quad\text{for all }y\in\Omega.\]
    It follows that
    \begin{align*}
        \tilde{Q}_h(y, \bm{z})&=\frac{1}{n}\sum_{i=1}^n\tilde{w}(\bm{Z}_i^0, \bm{z}, h)d_{F}^2(Y_i, y)\\
         &=\tilde{Q}_h\{\tilde{v}_h(\bm{z}), \bm{z}\}+d_{F}^2\{\tilde{v}_h(\bm{z}), y\}+\\&\hspace{1.5em}\frac{2}{n}\sum_{i=1}^n\tilde{w}(\bm{Z}_i^0, \bm{z}, h)\langle Y_i-\tilde{v}_h(\bm{z}), \tilde{v}_h(\bm{z})-y\rangle_{F}\\
         &=\tilde{Q}_h\{\tilde{v}_h(\bm{z}), \bm{z}\}+d_{F}^2\{\tilde{v}_h(\bm{z}), y\}+2\langle \tilde{B}(\bm{z})-\tilde{v}_h(\bm{z}), \tilde{v}_h(\bm{z})-y\rangle_{F}\\
        &\geq\tilde{Q}_h\{\tilde{v}_h(\bm{z}), \bm{z}\}+d_{F}^2\{\tilde{v}_h(\bm{z}), y\}
    \end{align*}
    for all $y\in\Omega$. Therefore, \ref{itm:l3} (iii) holds for the space of graph Laplacians $(\Omega, d_{F})$ for any $\eta_3>0$, $C_3=1$ and $\gamma_3=2$.
\end{proof}

\begin{prop}
\label{prop:cov}
    The metric space $(\Omega, d_{F})$ defined in Example \ref{exm:cov} satisfies Assumptions \ref{itm:l1}--\ref{itm:l3}, where $\Omega$ is the space of $m$-dimensional covariance matrices with bounded diagonal entries or correlation matrices.
\end{prop}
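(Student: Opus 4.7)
The plan is to mirror the proof of Proposition~\ref{prop:mat} almost verbatim. The arguments there relied only on three structural features of the space of graph Laplacians: it is a closed, bounded, convex subset of the ambient Hilbert space of $m\times m$ symmetric matrices under the Frobenius inner product; the Frobenius metric is induced by that inner product; and the local linear weights $\hat w, \tilde w$ from \eqref{eq:what} and \eqref{eq:wtilde} sum to one, which can be read off directly from their definition (dividing the kernel-weighted identity by the normalizing constant $\tilde\mu_0 - \tilde\mu_1^\T\tilde\mu_2^{-1}\tilde\mu_1$ gives exactly $1$). Each of these features carries over to both spaces in Example~\ref{exm:cov}, so the same chain of reasoning applies.

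First I would verify closedness and convexity of $\Omega$. For bounded covariance matrices, $\Omega = \{\Sigma \in \mathbb{R}^{m\times m} : \Sigma^\T = \Sigma,\; \Sigma \succeq 0,\; \Sigma_{jj} \leq M_0\}$ is the intersection of the closed convex positive semi-definite cone with a box on the diagonal; by Cauchy--Schwarz, $|\Sigma_{jk}| \leq (\Sigma_{jj}\Sigma_{kk})^{1/2} \leq M_0$, so $\Omega$ is also bounded, hence a compact convex subset of $\mathbb{R}^{m(m+1)/2}$. Correlation matrices $\Omega = \{R : R^\T = R,\; R \succeq 0,\; R_{jj} = 1\}$ are likewise closed, convex, and bounded ($|R_{jk}| \leq 1$).

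Next I would check Assumptions~\ref{itm:l1}(i)--(iii), \ref{itm:l2}, and \ref{itm:l3}(i)--(ii). These follow from essentially the same arguments given in Theorem~3 of \citet{mull:22:11} and reused in the opening sentence of the proof of Proposition~\ref{prop:mat}, since they rely only on the Hilbert-space embedding together with total boundedness of $\Omega$. In particular, for \ref{itm:l2}, the $\varepsilon$-covering number of a $\delta$-ball in $\Omega \subset \mathbb{R}^{m(m+1)/2}$ satisfies $N(\delta\varepsilon, B_\delta(y), d_F) \lesssim \varepsilon^{-m(m+1)/2}$, so the bracketing integral is $O(1)$; and for \ref{itm:l3}(i)--(ii) one has $\gamma_1 = \gamma_2 = 2$ because the Hilbert-space projection inequality yields $Q(y,\bm{z}) - Q(v(\bm{z}),\bm{z}) \geq \|v(\bm{z}) - y\|_F^2$, and analogously for $Q_h$.

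Finally I would establish \ref{itm:l1}(iv) and \ref{itm:l3}(iii) by transcribing the corresponding step of Proposition~\ref{prop:mat}. Define the weighted means $\hat B(\bm{z}) = n^{-1}\sum_{i=1}^n \hat w(\hat{\bm Z}_i, \bm{z}, h) Y_i$ and $\tilde B(\bm{z}) = n^{-1}\sum_{i=1}^n \tilde w(\bm Z_i^0, \bm{z}, h) Y_i$ in the ambient symmetric-matrix Hilbert space; since the weights sum to one, one obtains $\hat Q_h(y,\bm{z}) = \hat Q_h(\hat B(\bm{z}),\bm{z}) + \|\hat B(\bm{z}) - y\|_F^2$ and $\tilde Q_h(y,\bm{z}) \geq \tilde Q_h(\tilde v_h(\bm{z}),\bm{z}) + \|\tilde v_h(\bm{z}) - y\|_F^2$, the latter via the variational inequality $\langle \tilde B(\bm{z}) - \tilde v_h(\bm{z}),\, y - \tilde v_h(\bm{z})\rangle_F \leq 0$ characterising the projection onto the closed convex set $\Omega$. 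Existence and uniqueness of $\hat v_h(\bm{z})$ then follow from the Hilbert-space projection theorem, giving \ref{itm:l1}(iv), while \ref{itm:l3}(iii) holds with arbitrary $\eta_3 > 0$, $C_3 = 1$, and $\gamma_3 = 2$. I do not foresee any genuinely new obstacle beyond those already dispensed with in Proposition~\ref{prop:mat}; the only point needing minor care is confirming that the two specific $\Omega$ are closed and convex in the Hilbert ambient, which is immediate from the constraints above.
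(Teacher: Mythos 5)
Your proposal matches the paper's treatment: the paper simply states that the proof of Proposition~\ref{prop:cov} is analogous to that of Proposition~\ref{prop:mat} and omits it, and your write-up correctly transcribes that argument, with the only substantive additions being the (correct) verification that both spaces in Example~\ref{exm:cov} are closed, bounded, convex subsets of the symmetric matrices under the Frobenius inner product and that the local linear weights sum to one. No gaps; this is the same approach, just spelled out.
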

The proof of Proposition \ref{prop:cov} is similar to that of Proposition \ref{prop:mat} and is omitted.

\subsection{Implementation of Local \f Regression}\label{supp:lfr:implement}
We describe the implementation of local \f regression as per \eqref{eq:lfrhat} for several commonly encountered metric spaces. For the Wasserstein space, the minimization problem simplifies to \eqref{eq:v_hat:wass}, as shown in the proof of Proposition \ref{prop:wass}, using standard properties of the $L^2(0,1)$ inner product. The resulting minimizer is the projection of $\hat{B}(\bm{z})$ as per \eqref{eq:B:wass} onto the Wasserstein space. Due to the convexity and closedness of the Wasserstein space, this projection is guaranteed to exist and be unique for any $\bm{z}$. The \texttt{osqp} package \citep{stel:20} can be used to compute this projection, as detailed in subsection 5.1 of \citet{zhou:23}. 

Similarly, for the space of graph Laplacians, the minimization problem in \eqref{eq:lfrhat} simplifies to \eqref{eq:v_hat:mat}, as discussed in the proof of Proposition \ref{prop:mat}. Here, the minimizer $\hat{v}_h(\bm{z})$ is the projection of $\hat{B}(\bm{z})$ as per \eqref{eq:B:mat} onto the space of graph Laplacians, with existence and uniqueness guaranteed by the convexity and closedness of this space. The \texttt{osqp} package \citep{stel:20} is also applicable for this projection, as outlined in subsection 5.1 of \citet{mull:22:11}. 

In the space of $m$-dimensional covariance matrices with bounded diagonal entries, including the special case of correlation matrices, the projection of $\hat{B}(\bm{z})$ is similarly obtained. The \texttt{nearPD} function from the \texttt{Matrix} package in R provides a practical tool for projecting onto the nearest symmetric positive semi-definite matrix.

\subsection{Proof of Theorem \ref{thm:DNN}}
To establish the convergence of the estimation error, \citet{schm:20} applied an oracle-type inequality in Theorem 2, supported by Lemma 4. In contrast, we handle dependent sub-Gaussian noise with a vanishing bias, rather than the independent Gaussian noise with mean 0 assumed by \citet{schm:20}. Furthermore, we examine both out-of-sample and in-sample errors, since the output of the deep neural networks serves as the input for the local \f regression. Understanding the in-sample error is key for analyzing the convergence rate of the local \f regression.

\begin{lem}\label{lem:sub-gaussian}
    Let $\xi_1, \ldots, \xi_M$ be sub-Gaussian random variables with mean 0 and a common parameter $C$, then, as $M \rightarrow \infty$,  \[E(\max_{l=1,\ldots,M }\xi_l^2)\lesssim \log M.\]
\end{lem}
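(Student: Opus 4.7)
The plan is to exploit the classical tail-integration formula for the expectation of a nonnegative random variable, combined with a union bound over the $M$ sub-Gaussians and their shared tail inequality. Writing $W = \max_{l=1,\ldots,M} \xi_l^2$, I would start from
\[
E(W) = \int_0^\infty P(W > t)\, dt,
\]
and use the two complementary bounds $P(W>t)\leq 1$ for small $t$ and $P(W>t) \leq \sum_{l=1}^M P(\xi_l^2 > t) = \sum_{l=1}^M P(|\xi_l| > \sqrt{t})$ for large $t$. The sub-Gaussian hypothesis with parameter $C$ gives a tail bound of the form $P(|\xi_l|>s) \leq 2\exp(-s^2/C')$ for some constant $C'$ depending only on $C$, so the second bound becomes $P(W>t) \leq 2M\exp(-t/C')$.

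Next, I would split the integral at a threshold $t_0$ chosen so that the tail part contributes $O(1)$. The natural choice is $t_0 = C' \log M$, because then $2M\exp(-t_0/C') = 2$, and for $t \geq t_0$
\[
\int_{t_0}^\infty 2M\exp(-t/C')\, dt = 2M C' \exp(-t_0/C') = 2C'.
\]
The part on $[0,t_0]$ is bounded by $t_0 = C'\log M$ using $P(W>t)\leq 1$. Putting these two together yields $E(W) \leq C'\log M + 2C'$, which is $O(\log M)$ as $M \to \infty$, giving the claim $E(\max_l \xi_l^2) \lesssim \log M$.

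The argument is essentially routine and I do not anticipate a substantive obstacle: the only minor point to be careful about is to make sure the constant hidden in the sub-Gaussian tail (which depends on the precise convention for the ``parameter $C$'') is absorbed into the $\lesssim$ symbol, so that the final bound depends on $C$ only through a multiplicative constant independent of $M$. A cleaner alternative would be to invoke the moment-generating-function characterization of sub-Gaussianity and apply Jensen's inequality to $\exp(\lambda W)$, optimizing over $\lambda$, but the tail-integration route above is the shortest and most transparent and matches the level of detail the paper uses elsewhere.
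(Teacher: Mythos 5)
Your argument is correct and is essentially the same as the paper's proof: both split the tail-integration formula at a threshold of order $\log M$, apply a union bound with the common sub-Gaussian tail $P(|\xi_l|\geq\sqrt{t})\leq 2e^{-t/C^2}$, and bound the remaining integral by a constant. The only cosmetic difference is that the paper keeps a general cutoff $u$ and optimizes at the end ($u\asymp\log M$), while you plug in $t_0=C'\log M$ directly.
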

\begin{proof}
By the sub-Gaussian assumption, we have $P(|\xi_l|\geq \sqrt{t})\leq 2 e^{-t/C^2}$. For any $u>0$, we can apply the union bound and observe the following:
\begin{align*}
    E(\max_{l=1,\ldots,M} \xi_l^2) &\leq u + \int_{u}^\infty P(\max_{l=1,\ldots,M} \xi_l^2 \geq t) dt\\
    &\leq u + \sum_{l=1}^M\int_{u}^\infty P(|\xi_l| \geq \sqrt{t}) dt\\
    &\leq u + 2M\int_{u}^\infty e^{-t/C^2}dt\\& = u + 2MC^2e^{-u/C^2}.
\end{align*}
By selecting $u\asymp\log M$, we conclude that 
\[E(\max_{l=1,\ldots,M} \xi_l^2)\lesssim \log M.\]
\end{proof}

\begin{lem}\label{lem:lemma4 extension}
Consider the $p$-variate nonparametric regression model \eqref{eq:our DNN} with unknown regression function $g_{0j}$. Let $\tilde{g}_j$ be any estimator taking values in $\mathcal{G}=\mathcal{G}(L, s, \mathbf{p}, D)$ as per \eqref{eq:DNN Class} and $N_\delta=N(\delta,\mathcal{G},\|\cdot\|_\infty)$ be the covering number of $\mathcal{G}$ using balls of size $\delta$. Define
$$\Delta_n= E[\frac{1}{n}\sum_{i=1}^n \{Z_{ij} - \tilde{g}_j(\bm{X}_i)\}^2 - \inf_{g\in\mathcal{G}}\frac{1}{n}\sum_{i=1}^n\{Z_{ij}-g(\bm{X}_i)\}^2]$$
and assume $\|g_{0j}\|_\infty\leq D $. Then for any $i=1, \ldots, n$ and $j=1, \ldots, r$, we have
\begin{equation}
\label{eq:lemma4}
        E[\{\tilde{g}_j(\bm{X}_i) - g_{0j}(\bm{X}_i)\}^2]\lesssim \inf_{g\in\mathcal{G}}E[\{g(\bm{X})-g_{0j}(\bm{X})\}^2] + \frac{\log N_\delta }{n} + \Delta_n + \delta + |u_{nj}|
\end{equation}
and
\begin{equation}
\label{eq:lemma4 predicted}
        E[\{\tilde{g}_j(\bm{X}) - g_{0j}(\bm{X})\}^2]\lesssim \inf_{g\in\mathcal{G}}E[\{g(\bm{X})-g_{0j}(\bm{X})\}^2] + \frac{\log N_\delta }{n} + \Delta_n + \delta + |u_{nj}|,
\end{equation}
with $\bm{X}$ being an independent copy of $\bm{X}_1$.
\end{lem}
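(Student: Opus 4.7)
The plan is to follow the oracle-inequality strategy underlying Lemma 4 of \citet{schm:20}, modifying it to accommodate the multiplicative factor $\pi_j$ in the noise and the additive bias $u_{nj}$. The key observation is that, conditional on $(\bm{X}_1,\ldots,\bm{X}_n)$, the multiplier $\pi_j(\bm{X}_1,\ldots,\bm{X}_n)$ is deterministic and bounded by $C_{\pi_j}$, while the $\epsilon_{ij}$ remain i.i.d.\ sub-Gaussian and independent of the design. This allows the standard empirical-process machinery to go through conditionally on the $\bm{X}_i$'s, with the $\pi_j$ absorbed into constants.

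First I would treat the in-sample error (\ref{eq:lemma4}). By permutation invariance of the least-squares estimator, the left-hand side equals $E[n^{-1}\sum_{k=1}^{n}\{\tilde g_j(\bm X_k)-g_{0j}(\bm X_k)\}^{2}]$. For any $g^*\in\mathcal G$, the definition of $\Delta_n$ gives
\[
E\Bigl[\tfrac{1}{n}\sum_{k=1}^n\{Z_{kj}-\tilde g_j(\bm X_k)\}^{2}\Bigr]\le E\Bigl[\tfrac{1}{n}\sum_{k=1}^n\{Z_{kj}-g^*(\bm X_k)\}^{2}\Bigr]+\Delta_n.
\]
Plugging in $Z_{kj}=g_{0j}(\bm X_k)+\pi_j(\epsilon_{kj}+u_{nj})$ and rearranging yields
\[
E\Bigl[\tfrac{1}{n}\sum_{k=1}^n\{\tilde g_j-g_{0j}\}^2(\bm X_k)\Bigr]\le E\Bigl[\tfrac{1}{n}\sum_{k=1}^n\{g^*-g_{0j}\}^2(\bm X_k)\Bigr]+2T_n+\Delta_n,
\]
where $T_n=E[n^{-1}\sum_k \pi_j(\epsilon_{kj}+u_{nj})(\tilde g_j-g^*)(\bm X_k)]$ is the cross-term to be controlled.

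The bias piece of $T_n$ is handled crudely via $|\pi_j|\le C_{\pi_j}$ and $\|\tilde g_j-g^*\|_\infty\le 2D$, producing a contribution of order $|u_{nj}|$ and matching the corresponding term in the bound. For the stochastic piece $T_n^s=E[n^{-1}\sum_k \pi_j\epsilon_{kj}(\tilde g_j-g^*)(\bm X_k)]$, I fix a $\delta$-covering $\{g^*_1,\ldots,g^*_{N_\delta}\}$ of $\mathcal G$ in $\|\cdot\|_\infty$ and let $\hat g_\delta$ denote the nearest cover element to $\tilde g_j$. The decomposition $\tilde g_j-g^*=(\tilde g_j-\hat g_\delta)+(\hat g_\delta-g^*)$ contributes a term bounded by $\delta\cdot C_{\pi_j}n^{-1}\sum_k E|\epsilon_{kj}|\lesssim \delta$, plus the cover-indexed term $n^{-1}E[\max_l |\sum_k \pi_j\epsilon_{kj}(g^*_l-g^*)(\bm X_k)|]$. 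Conditional on $(\bm X_1,\ldots,\bm X_n)$, each sum $\sum_k \pi_j\epsilon_{kj}(g^*_l-g^*)(\bm X_k)$ is a linear combination of independent sub-Gaussians with parameter at most $2DC_{\pi_j}\sqrt n$; by Lemma~\ref{lem:sub-gaussian} applied conditionally and then averaged, the max is of order $\sqrt{n\log N_\delta}$ in expectation. Applying Young's inequality $2ab\le\rho a^{2}+\rho^{-1}b^{2}$ converts this into a small constant multiple of $E[n^{-1}\sum_k\{\tilde g_j-g_{0j}\}^2(\bm X_k)]+E[n^{-1}\sum_k\{g^*-g_{0j}\}^2(\bm X_k)]$ plus $O(\log N_\delta/n)$. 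Absorbing the first of these into the left-hand side and taking the infimum over $g^*\in\mathcal G$ establishes (\ref{eq:lemma4}).

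For the out-of-sample statement (\ref{eq:lemma4 predicted}), since $\bm X$ is independent of the data, $E[\{\tilde g_j(\bm X)-g_{0j}(\bm X)\}^{2}]=E[\|\tilde g_j-g_{0j}\|_{L^{2}(F_{\bm X})}^{2}]$. To pass from the empirical norm of (\ref{eq:lemma4}) to this population norm, I would apply a uniform concentration bound over $\mathcal G$: using the same $\delta$-covering and the uniform envelope $\|g\|_\infty\le D$, a Hoeffding-plus-union-bound argument gives $\sup_{g\in\mathcal G}\bigl|n^{-1}\sum_k\{g-g_{0j}\}^2(\bm X_k)-\|g-g_{0j}\|_{L^{2}(F_{\bm X})}^{2}\bigr|=O_p(\sqrt{\log N_\delta/n}+\delta)$, which is dominated by the rate already appearing in (\ref{eq:lemma4}). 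The main obstacle is the stochastic cross-term analysis: the dependence introduced by the common multiplier $\pi_j$ could in principle break the usual maximal-inequality bounds, but conditioning on $(\bm X_1,\ldots,\bm X_n)$ resolves this cleanly because $\epsilon_{ij}$ is independent of the predictors; the argument then reduces to the independent sub-Gaussian case up to the multiplicative constant $C_{\pi_j}$, with the extra bias $u_{nj}$ tracked through an additive term throughout.
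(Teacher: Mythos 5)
Your overall strategy (condition on $\bm{X}_1,\ldots,\bm{X}_n$ so that the common multiplier $\pi_j$ becomes a bounded constant, cover $\mathcal{G}$ in sup-norm, and track the bias $u_{nj}$ additively) is the same as the paper's, but two of your key steps would fail as written. First, your one-line reduction of the in-sample error to the empirical norm via ``permutation invariance of the least-squares estimator'' is not available here: the lemma is stated for \emph{any} estimator taking values in $\mathcal{G}$, not just the ERM, and even for the ERM the data pairs $(\bm{X}_i,Z_{ij})$ are not exchangeable in general, because Assumption (C2) does not require $\pi_j$ to be a symmetric function of its arguments (e.g.\ $\pi_j(\bm{x}_1,\ldots,\bm{x}_n)$ depending only on $\bm{x}_1$ makes the law of $(\bm{X}_1,Z_{1j})$ differ from that of $(\bm{X}_2,Z_{2j})$ jointly with the sample). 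The paper instead proves the pointwise in-sample bound by a covering argument applied to $f_{l}(\bm{X}_k,\bm{X}_i)=\{g_{l}(\bm{X}_i)-g_{0j}(\bm{X}_i)\}^2-\{g_{l}(\bm{X}_k)-g_{0j}(\bm{X}_k)\}^2$, using mean-zero, boundedness and the variance bound $\mathrm{Var}\{f_l\}\lesssim D^2R_l^2$ together with a Bernstein-type maximal inequality, exactly as in Lemma 4(I) of \citet{schm:20}; some such localized concentration step is needed and cannot be replaced by exchangeability.

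Second, your quantitative bounds lose the crucial factor that makes the rate $\log N_\delta/n$ rather than $\sqrt{\log N_\delta/n}$. In the cross-term, you bound $E[\max_l|\sum_k\pi_j\epsilon_{kj}(g_l^*-g^*)(\bm{X}_k)|]$ by $\sqrt{n\log N_\delta}$ using the worst-case sub-Gaussian parameter $2DC_{\pi_j}\sqrt{n}$; this yields a standalone additive term of order $\sqrt{\log N_\delta/n}$ with no multiplicative factor of the empirical norm left, so the subsequent appeal to Young's inequality has nothing to act on. The paper avoids this by normalizing, setting $\xi_l=\sum_k\epsilon_{kj}\pi_j\{g_l(\bm{X}_k)-g_{0j}(\bm{X}_k)\}/\{\sqrt{n}\,\|g_l-g_{0j}\|_n\}$, so that conditionally the $\xi_l$ are sub-Gaussian with a common parameter, Lemma \ref{lem:sub-gaussian} gives $E(\max_l\xi_l^2)\lesssim\log N_\delta$, and Cauchy--Schwarz produces the product $\sqrt{E(\|\tilde g_j-g_{0j}\|_n^2)\log N_\delta/n}$ that can be absorbed into the left-hand side. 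Similarly, for the out-of-sample statement your proposed Hoeffding-plus-union bound gives a uniform deviation of order $\sqrt{\log N_\delta/n}+\delta$, which is \emph{not} dominated by $\log N_\delta/n$ (when $\log N_\delta\ll n$ it dominates it, and in the application $\sqrt{\log N_\delta/n}\asymp\kappa_n\log n\gg\kappa_n^2\log^3 n$), so the claimed inequality \eqref{eq:lemma4 predicted} would not follow; one again needs the localized Bernstein argument with the self-normalization by $R_l$ as in Lemma 4(I) of \citet{schm:20}, which is what the paper invokes. With these two repairs (concentration in place of exchangeability, and normalized maximal inequalities in place of the crude $\sqrt{n\log N_\delta}$ and Hoeffding bounds), your outline coincides with the paper's proof.
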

\begin{proof} 
Define $\|g\|^2_n = \frac{1}{n}\sum_{k=1}^n g(\bm{X}_k)^2$. Since $E[\{\tilde{g}_j(\bm{X}_i) - g_{0j}(\bm{X}_i)\}^2]\leq 4 D^2$ and $E[\{\tilde{g}_j(\bm{X}) - g_{0j}(\bm{X})\}^2]\leq 4 D^2$, the inequality in \eqref{eq:lemma4} and \eqref{eq:lemma4 predicted} hold trivially if $\log N_\delta \geq n$. In the following, we focus on the case $\log N_\delta < n$.

Similar to Lemma 4 of \citet{schm:20}, the proof is divided into three parts which are denoted by (I)--(III).
\begin{enumerate}[label=(\Roman*)]
    \item Relate the in-sample error $E[\{\tilde{g}_j(\bm{X}_i) - g_{0j}(\bm{X}_i)\}^2]$ and predicted error $E[\{\tilde{g}_j(\bm{X}) - g_{0j}(\bm{X})\}^2]$ to $E(\|\tilde{g}_j-g_{0j}\|_n^2)$ via the inequality
\begin{equation}
\label{eq:link fitted}
    E[\{\tilde{g}_j(\bm{X}_i) - g_{0j}(\bm{X}_i)\}^2] \lesssim E(\|\tilde{g}_j-g_{0j}\|_n^2) + \frac{\log N_\delta}{n} + \delta .
\end{equation}
\begin{equation}
\label{eq:link predcited}
    E[\{\tilde{g}_j(\bm{X}) - g_{0j}(\bm{X})\}^2] \lesssim E(\|\tilde{g}_j-g_{0j}\|_n^2) + \frac{\log N_\delta}{n} + \delta .
\end{equation}

    \item For any estimator $\tilde{g}_j$ taking values in $\mathcal{G}$,
\begin{equation*}
        | E\{ \frac{1}{n}   \sum_{k=1}^n \epsilon_{kj} \pi_j(\bm{X}_1,\ldots,\bm{X}_n) \tilde{g}_j(\bm{X}_k)\}| \lesssim \sqrt{\frac{E(\|\tilde{g}_j-g_{0j}\|_n^2)  \log N_\delta}{n}} + \delta
\end{equation*}

    \item Show that
\begin{equation*}
    E(\|\tilde{g}_j-g_{0j}\|_n^2) \lesssim \inf _{g \in \mathcal{G}} E[\{g(\bm{X})-g_{0j}(\bm{X})\}^2]+\frac{\log N_\delta}{n}+\Delta_n + \delta + |u_{nj}|.
\end{equation*}

\end{enumerate}
The inequality in \eqref{eq:lemma4} and \eqref{eq:lemma4 predicted} follow from (I) and (III).

(I): Since the inequality \eqref{eq:link predcited} has been shown by Lemma 4(I) in \citet{schm:20}, we only need to show the inequality \eqref{eq:link fitted}. Given a minimal $\delta$-covering of $\mathcal{G}$, denote the centers of the balls by $g_l$. By construction, there exists a $l^*$ such that $\|\tilde{g}-g_{l^*}\|_{\infty} \leq \delta$. Without loss of generality, one can assume that $\|g_l\|_{\infty}\leq D$ by the definition of $\mathcal{G}$. It follows that 
\begin{align*}
& \bigg|E[\{\tilde{g}_j(\bm{X}_i)-g_{0j}(\bm{X}_i)\}^2] - E\{\|\tilde{g}_j-g_{0j}\|_n^2\}\bigg|\\
&= \bigg|E[\frac{1}{n}\sum_{k=1}^n\{\tilde{g}_j(\bm{X}_i)-g_{0j}(\bm{X}_i)\}^2 - \{\tilde{g}_j(\bm{X}_k)-g_{0j}(\bm{X}_k)\}^2]\bigg|\\
& = \bigg|\frac{1}{n}\sum_{k=1}^n E[\{\tilde{g}_j(\bm{X}_i)-g_{l^*}(\bm{X}_i)\}^2 + \{g_{l^*}(\bm{X}_i) - g_{0j}(\bm{X}_i)\}^2 \\
& + 2\{\tilde{g}_j(\bm{X}_i)-g_{l^*}(\bm{X}_i)\}\{g_{l^*}(\bm{X}_i) - g_{0j}(\bm{X}_i)\} - \{\tilde{g}_j(\bm{X}_k)-g_{l^*}(\bm{X}_k)\}^2 \\
& - \{g_{l^*}(\bm{X}_k) - g_{0j}(\bm{X}_k)\}^2 - 2\{\tilde{g}_j(\bm{X}_k)-g_{l^*}(\bm{X}_k)\}\{g_{l^*}(\bm{X}_k) - g_{0j}(\bm{X}_k)\}]\bigg|\\
&\lesssim E\{|\frac{1}{n} \sum_{k=1}^n f_{l^*}(\bm{X}_k, \bm{X}_i)| \} + \delta
\end{align*}
where $f_{l^*}(\bm{X}_k, \bm{X}_i)=\{g_{l^*}(\bm{X}_i)-g_{0j}(\bm{X}_i)\}^2-\{g_{l^*}(\bm{X}_k)-g_{0j}(\bm{X}_k)\}^2$. Furthermore, define $R_{l^*}=D\sqrt{n^{-1} \log N_\delta} \vee \sqrt{E[\{g_{l^*}(\bm{X}_i)-g_{0j}(\bm{X}_i)\}^2]}$. Using $\|g_{l^*}-\tilde{g}_j\|_{\infty} \leq \delta$ and the triangle inequality, we have $R_{l^*}\leq D\sqrt{n^{-1} \log N_\delta}+\sqrt{E[\{\tilde{g}_{j}(\bm{X}_i)-g_{0j}(\bm{X}_i)\}^2]} +\delta$. Similarly, we can define $f_l$ and $R_l$ in the same way with $l^*$ being replaced by $l$.

Set $U=\sqrt{E[\{\tilde{g}_{j}(\bm{X}_i)-g_{0j}(\bm{X}_i)\}^2]}$ and $T=\max_{l=1,\ldots,N_\delta}|\sum_{i=1}^n f_l(\bm{X}_k, \bm{X}_i) /(R_l D)|$, we have
\begin{align*}
    &\bigg|E[\{\tilde{g}_j(\bm{X}_i)-g_{0j}(\bm{X}_i)\}^2] - E[\|\tilde{g}_j-g_{0j}\|_n^2]\bigg| \\
    &\lesssim E\{|\frac{1}{n} \sum_{k=1}^n f_{l^*}(\bm{X}_k, \bm{X}_i)/(R_{l^*} D)| R_{l^*} \} + \delta \\
    & \lesssim \frac{1}{n}E\{T (U + \sqrt{n^{-1} \log N_\delta} + \delta) \} + \delta\\
    & \lesssim \frac{1}{n}\sqrt{E(T^2)} \sqrt{E(U^2)} + \frac{1}{n}(\sqrt{n^{-1} \log N_\delta} + \delta)E(T) + \delta,
\end{align*}
where the last inequality follows from the Cauchy-Schwarz inequality. Observe that $E[f_l(\bm{X}_k, \bm{X}_i)]=0,|f_l(\bm{X}_k, \bm{X}_i)| \leq 4 D^2$ and
\begin{align*}
\Var\{f_l(\bm{X}_k, \bm{X}_i)\} & =2  \Var[\{g_l(\bm{X}_i)-g_{0j}(\bm{X}_i)\}^2] \cdot \mathbf{1}_{k \ne i}\\
& \leq 2 E[\{g_l(\bm{X}_i)-g_{0j}(\bm{X}_i)\}^4] \\
& \leq 8 D^2 R_l^2 .
\end{align*}
Up to now, all the settings match Lemma 4(I) of \citet{schm:20}. Therefore, using the same technique, we can show that
\begin{equation*}
    E[\{\tilde{g}_j(\bm{X}_i) - g_{0j}(\bm{X}_i)\}^2] \lesssim E(\|\tilde{g}_j-g_{0j}\|_n^2) + \frac{\log N_\delta}{n} + \delta .
\end{equation*}

(II): For any estimator $\tilde{g}_j$ taking values in $\mathcal{G}$, using Jensen's inequality and Assumption \ref{itm:d1}, we have
\begin{align*}
&|E\{\sum_{k=1}^n \epsilon_{kj} \pi_{j}(\bm{X}_1,\ldots,\bm{X}_n)\{\tilde{g}_j(\bm{X}_k)-g_{l^{*}}(\bm{X}_k)\}\}| \\
&\leq C_{\pi_j}\delta \sum_{k=1}^n E(|\epsilon_{kj}|) \\
&\leq n \delta C_{\pi_j}\sqrt{E(\epsilon_{1j}^2)}\\
&\leq n \delta C_{\pi_j}.
\end{align*}

Since 
\begin{align*}
    &E\{\epsilon_{kj} \pi_j(\bm{X}_1,\ldots,\bm{X}_n) g_{0j}(\bm{X}_k)\}\\&=E[E\{\epsilon_{kj}\pi_j(\bm{X}_1,\ldots,\bm{X}_n) g_{0j}(\bm{X}_k) | \bm{X}_1, \ldots, \bm{X}_n\}]\\&=0,
\end{align*}
we also find
\begin{align}\label{eq:C.5}
& \bigg|E[\frac{1}{n} \sum_{k=1}^n \epsilon_{kj} \pi_{j}(\bm{X}_1,\ldots,\bm{X}_n) \tilde{g}_j(\bm{X}_k)]\bigg| \nonumber \\
& =\bigg|E[\frac{1}{n} \sum_{k=1}^n \epsilon_{kj}\pi_{j}(\bm{X}_1,\ldots,\bm{X}_n) \{\tilde{g}_j(\bm{X}_k)-g_{0j}(\bm{X}_k)\}]\bigg| \nonumber \\
& \leq\bigg|E[\frac{1}{n} \sum_{k=1}^n \epsilon_{kj}\pi_{j}(\bm{X}_1,\ldots,\bm{X}_n) \{\tilde{g}_j(\bm{X}_k)-g_{l^*}(\bm{X}_k) \} ]\bigg|\nonumber\\
&+ \bigg|E[\frac{1}{n} \sum_{k=1}^n \epsilon_{kj}\pi_{j}(\bm{X}_1,\ldots,\bm{X}_n) \{g_{l^*}(\bm{X}_k) - g_{0j}(\bm{X}_k)\}]\bigg| \nonumber \\
& \leq \delta C_{\pi_j}+\bigg|\frac{1}{\sqrt{n}}E[\frac{\sum_{k=1}^n \epsilon_{kj}\pi_{j}(\bm{X}_1,\ldots,\bm{X}_n) \{g_{l^*}(\bm{X}_k) - g_{0j}(\bm{X}_k)\}}{\sqrt{n}\|g_{l^*}-g_{0j}\|_n} \|g_{l^*}-g_{0j}\|_n ]\bigg| \nonumber \\
& \leq \delta C_{\pi_j}+\frac{1}{\sqrt{n}}E\{|\xi_{l^{*}}|(\|\tilde{g}_j-g_{0j}\|_n+\delta)\},
\end{align}
with
$$
\xi_l=\frac{\sum_{k=1}^n \epsilon_{kj}\pi_{j}(\bm{X}_1,\ldots,\bm{X}_n)\{g_l(\bm{X}_k)-g_{0j}(\bm{X}_k)\}}{\sqrt{n}\|g_l-g_{0j}\|_n} .
$$
Given $\bm{X}_1, \ldots, \bm{X}_n$, $\xi_l$ are sub-Gaussian with common parameter $C_{\pi_j}$. It follows from Lemma \ref{lem:sub-gaussian} that $E(\xi_{l^{*}}^2) \leq$ $E(\max_{l=1, \ldots, N_\delta} \xi_l^2) \lesssim \log N_\delta$. Using the Cauchy-Schwarz inequality,
\begin{equation}\label{eq:C.6}
    E\{|\xi_{l^{*}}|(\|\tilde{g}_j-g_{0j}\|_n+\delta)\} \lesssim \sqrt{\log N_\delta}\{\sqrt{E(\|\tilde{g}_j-g_{0j}\|_n^2)}+\delta\}.
\end{equation}
Together with \eqref{eq:C.5} and \eqref{eq:C.6}, the inequality in (II) follows.

(III): For any fixed $g \in \mathcal{G}$, it follows from the definition of $\Delta_n$ that
$$E[\frac{1}{n} \sum_{k=1}^n\{Z_{kj}-\tilde{g}_j(\bm{X}_k)\}^2] \leq E[\frac{1}{n} \sum_{k=1}^n\{Z_{kj}-g(\bm{X}_k)\}^2]+\Delta_n.$$ 
Using the inequality in (II), we have
\begin{align*}
E(\|\tilde{g}_j-g_{0j}\|_n^2) & = E[\frac{1}{n}\sum_{k=1}^n \{\tilde{g}_j(\bm{X}_k) - Z_{kj} + (\epsilon_{kj}+u_{nj})\pi_j(\bm{X}_1,\ldots,\bm{X}_n)\}^2]\\
&= E[\frac{1}{n}\sum_{k=1}^n \{\tilde{g}_j(\bm{X}_k) - Z_{kj}\}^2 + \frac{1}{n}\sum_{k=1}^n (\epsilon_{kj}+u_{nj})^2\pi_j^2(\bm{X}_1,\ldots,\bm{X}_n) \\
& + \frac{2}{n}\sum_{k=1}^n (\epsilon_{kj}+u_{nj})\pi_j(\bm{X}_1,\ldots,\bm{X}_n) \{\tilde{g}_j(\bm{X}_k) - Z_{kj}\}]\\
& \leq \Delta_n + E[\frac{1}{n}\sum_{k=1}^n \{g(\bm{X}_k) - Z_{kj}\}^2 + \frac{1}{n}\sum_{k=1}^n (\epsilon_{kj}+u_{nj})^2\pi_j^2(\bm{X}_1,\ldots,\bm{X}_n) \\
& + \frac{2}{n}\sum_{k=1}^n (\epsilon_{kj}+u_{nj})\pi_j(\bm{X}_1,\ldots,\bm{X}_n) \{\tilde{g}_j(\bm{X}_k) - Z_{kj}\}]\\
& \leq \Delta_n + E(\|g-g_{0j}\|_n^2) + E\{\frac{2}{n}\sum_{k=1}^n (\epsilon_{kj}+u_{nj})^2\pi_j^2(\bm{X}_1,\ldots,\bm{X}_n)\} \\
& + E[\frac{2}{n}\sum_{k=1}^n (\epsilon_{kj}+u_{nj})\pi_j(\bm{X}_1,\ldots,\bm{X}_n) \{\tilde{g}_j(\bm{X}_k) - Z_{kj} - g(\bm{X}_k) + g_{0j}(\bm{X}_k)\}]\\
& = \Delta_n +  E[\{g(\bm{X})-g_{0j}(\bm{X})\}^2] \\
& + E[\frac{2}{n}\sum_{k=1}^n (\epsilon_{kj}+u_{nj})\pi_j(\bm{X}_1,\ldots,\bm{X}_n) \{\tilde{g}_j(\bm{X}_k) - g(\bm{X}_k)\}]\\
& \lesssim \Delta_n + E[\{g(\bm{X})-g_{0j}(\bm{X})\}^2]\\
& + E\{\frac{1}{n} \sum_{k=1}^n \epsilon_{kj} \pi_j(\bm{X}_1,\ldots,\bm{X}_n) \tilde{g}_j(\bm{X}_k)\} + |u_{nj}| \\
& \lesssim \Delta_n +  E[\{g(\bm{X})-g_{0j}(\bm{X})\}^2]\\
& +\sqrt{\frac{E\{\|\tilde{g}_j-g_{0j}\|_n^2\} \log N_\delta}{n}} + \delta + |u_{nj}|. 
\end{align*}

Finally, applying (C.4) in \citet{schm:20}, the result follows.
\end{proof}

\begin{proof}[Proof of Theorem \ref{thm:DNN}]
Using Lemma \ref{lem:lemma4 extension} to replace Lemma 4 of \citet{schm:20}, the remaining proof follows the proof of Theorem 1 and Corollary 1 in \citet{schm:20}. Let $\tilde{g}_j = \hat{g}_j$ be the minimizer of empirical risk \eqref{eq:DNN Ln}, one has $\Delta_n = 0$. Then for any $i=1,\ldots,n$ and $j=1,\ldots,r$, we have
\begin{equation*}
    E[\{\hat{g}_j(\bm{X}_i)-g_{0j}(\bm{X}_i)\}^2] \lesssim |u_{nj}| + \kappa_n^2 \log^3 n
\end{equation*}
and
\begin{equation*}
    E[\{\hat{g}_j(\bm{X})-g_{0j}(\bm{X})\}^2] \lesssim |u_{nj}| + \kappa_n^2 \log^3 n.
\end{equation*}
The result follows.
\end{proof}

\subsection{Proof of Proposition \ref{prop:lfrpc}}
We present the elementary results of an auxiliary Lemma \ref{lem:muj tauj} and its proof, extending the well-known results from \citet{fan:96}. The key quantities of interest are $\mu_j = E\{K_h(\bm{Z}^0-\bm{z})(\bm{Z}^0-\bm{z})^{\oplus j}\}$, $\tau_j(y) = E\{K_h(\bm{Z}^0-\bm{z})(\bm{Z}^0-\bm{z})^{\oplus j}|Y=y\}$ and the estimators $\tilde\mu_j=\frac{1}{n}\sum_{i=1}^n K_h(\bm{Z}^0_i-\bm{z})(\bm{Z}^0_i-\bm{z})^{\oplus j}$, for $j=0,1,2$. These quantities are fundamental for local linear smoothing and relate to the bias and variance properties of local \f regression.

\begin{lem}\label{lem:muj tauj}
    Suppose \ref{itm:k1} and \ref{itm:p1} hold. Then,
\begin{align*}
        \mu_0 & = f_{\bm{Z}^0}(\bm{z}) K_{10} + O(h^2),\\
        \mu_1 & = h^2  K_{12}\frac{\partial f_{\bm{Z}^0}(\bm{z})}{\partial \bm{z}} + O(h^3 \mathbf{1}),\\
        \mu_2 & = h^2 f_{\bm{Z}^0}(\bm{z}) K_{12} + O(h^4 \mathbf{1}\mathbf{1}^\T),
\end{align*}
where $K_{1j} = \int_{\mathbb{R}^r} K(\bm{u}) \bm{u}^{\oplus j} d\bm{u}$ for $j=0,2$. Furthermore, we have
\[\tilde{\mu}_j=\mu_j+O_p\{(h^{2 j-r} n^{-1})^{1 / 2}\mathbf{1}^{\oplus j}\}\] and
\begin{align*}
        \tau_0(y) & = f_{\bm{Z}^0|Y}(\bm{z},y) K_{10} + O(h^2),\\
        \tau_1(y) & = h^2 K_{12} \frac{\partial f_{\bm{Z}^0|Y}(\bm{z},y)}{\partial \bm{z}} + O(h^3\mathbf{1}),\\
        \tau_2(y) & = h^2 f_{\bm{Z}^0|Y}(\bm{z},y) K_{12} + O(h^4 \mathbf{1}\mathbf{1}^\T),
\end{align*}
where the order term is uniform over $y \in \Omega$.
\end{lem}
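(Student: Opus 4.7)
The plan is to prove all three expansions by a single change of variables $\bm{u} = (\bm{Z}^0 - \bm{z})/h$, followed by Taylor expansion of the relevant density around $\bm{z}$ and by exploiting the symmetry of $K$ (so odd moments of $K$ vanish). For the stochastic bound on $\tilde{\mu}_j - \mu_j$, I would use a direct variance computation and Chebyshev's inequality.

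For $\mu_j$, the change of variables $\bm{u}=(\bm{Z}^0-\bm{z})/h$ together with the compact support of $K$ from \ref{itm:k1} gives
\begin{equation*}
\mu_j \;=\; h^{j}\int_{[-1,1]^r} K(\bm{u})\,\bm{u}^{\oplus j}\, f_{\bm{Z}^0}(\bm{z}+h\bm{u})\, d\bm{u}.
\end{equation*}
Taylor-expanding $f_{\bm{Z}^0}(\bm{z}+h\bm{u})$ to second order (valid by the twice continuous differentiability in \ref{itm:p1}) and using the finite moment conditions $\int K \bm{u}^{\oplus k}$ for $k\le 4$ in \ref{itm:k1}, I match the three cases: for $j=0$ the leading term is $f_{\bm{Z}^0}(\bm{z})K_{10}$ with the $O(h)$ term killed by symmetry, leaving the $O(h^2)$ remainder; for $j=1$ the $h\cdot f_{\bm{Z}^0}(\bm{z})\int K(\bm{u})\bm{u}\,d\bm{u}$ term vanishes by symmetry so that the next-order term $h^2 K_{12}\,\partial f_{\bm{Z}^0}(\bm{z})/\partial\bm{z}$ becomes the leading one; for $j=2$ the $h^2 f_{\bm{Z}^0}(\bm{z}) K_{12}$ term leads, and the $h^3$ correction vanishes because $\int K(\bm{u})\bm{u}^{\oplus 2}\bm{u}^\T d\bm{u}=0$ by symmetry, so the remainder is $O(h^4)$.

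The expansions for $\tau_j(y)$ are obtained identically, replacing $f_{\bm{Z}^0}$ by $f_{\bm{Z}^0| Y}(\cdot, y)$. The uniformity of the remainder over $y\in\Omega$ is supplied by \ref{itm:p1}, which guarantees $\sup_{\bm{z},y}|(\partial^2 f_{\bm{Z}^0| Y}/\partial \bm{z}^2)(\bm{z}, y)|<\infty$, so the integral form of Taylor's remainder is uniformly bounded in $y$. For the stochastic term, observe that $\tilde\mu_j$ is an i.i.d.\ average with mean $\mu_j$, and
\begin{equation*}
\Var(\tilde\mu_j) \;\le\; \frac{1}{n}\, E\{K_h^2(\bm{Z}^0-\bm{z})\,(\bm{Z}^0-\bm{z})^{\oplus 2j}\} \;=\; \frac{h^{2j-r}}{n}\int K^2(\bm{u})\, \bm{u}^{\oplus 2j}\, f_{\bm{Z}^0}(\bm{z}+h\bm{u})\,d\bm{u},
\end{equation*}
which is $O(h^{2j-r}/n)$ entrywise by \ref{itm:k1} and the local boundedness of $f_{\bm{Z}^0}$. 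Chebyshev's inequality applied entrywise to the tensor $\tilde{\mu}_j - \mu_j$ then yields $\tilde{\mu}_j = \mu_j + O_p\{(h^{2j-r}n^{-1})^{1/2}\mathbf{1}^{\oplus j}\}$, with $\mathbf{1}^{\oplus j}$ reflecting that every entry has the same order.

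There is no substantial obstacle here; this is essentially the classical multivariate local-linear calculation. The only points requiring care are bookkeeping for the tensor orders $\bm{v}^{\oplus j}$ in dimension $r$ (making sure that symmetry kills the right cross terms), and maintaining uniformity over $y\in\Omega$ in the $\tau_j$ expansions, both of which are handled directly by the kernel assumptions in \ref{itm:k1} and the density regularity in \ref{itm:p1}.
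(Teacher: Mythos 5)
Your proposal is correct and follows essentially the same route as the paper's proof: the change of variables $\bm{u}=H^{-1}(\bm{x}-\bm{z})$, a second-order Taylor expansion of $f_{\bm{Z}^0}$ (resp.\ $f_{\bm{Z}^0|Y}$) with the odd kernel moments vanishing by symmetry, uniformity in $y$ from the uniform bound on $\partial^2 f_{\bm{Z}^0|Y}/\partial\bm{z}^2$ in \ref{itm:p1}, and the entrywise second-moment bound $E\{K_h^2(\bm{Z}^0-\bm{z})(Z_l-z_l)^{2j}\}=O(h^{2j-r})$ combined with Chebyshev for $\tilde{\mu}_j-\mu_j$. No gaps worth noting.
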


\begin{proof}
By definition, we have
\begin{align*}
        \mu_j & = E\{K_h(\bm{Z}^0-\bm{z})(\bm{Z}^0-\bm{z})^{\oplus j}\}\\
        & = \int h^{-r}K\{H^{-1}(\bm{x}-\bm{z})\}  (\bm{x}-\bm{z})^{\oplus j} f_{\bm{Z}^0}(\bm{x}) d\bm{x},
\end{align*}
\begin{align*}
        \tau_j(y) & = E\{K_h(\bm{Z}^0-\bm{z})(\bm{Z}^0-\bm{z})^{\oplus j}|Y=y\}\\
        & = \int h^{-r}K\{H^{-1}(\bm{x}-\bm{z})\}  (\bm{x}-\bm{z})^{\oplus j} f_{\bm{Z}^0|Y}(\bm{x},y) d\bm{x}.
\end{align*}
The statements regarding $\mu_j$ and $\tau_j(y)$ follow from Assumptions \ref{itm:k1} and \ref{itm:p1} using a second-order Taylor expansion of the densities $f_{\bm{Z}^0}$ and $f_{\bm{Z}^0|Y}$,
\begin{align*}
        \mu_0 & = \int h^{-r}K\{H^{-1}(\bm{x}-\bm{z})\}f_{\bm{Z}^0}(\bm{x}) d\bm{x} \\ 
        & = \int K(\bm{u})f_{\bm{Z}^0}(\bm{z} + H\bm{u}) d\bm{u} \\
        & = f_{\bm{Z}^0}(\bm{z}) K_{10} + h (\frac{\partial f_{\bm{Z}^0}(\bm{z})}{\partial \bm{z}})^\T K_{11} + O(h^2)\\
        & = f_{\bm{Z}^0}(\bm{z}) K_{10} + O(h^2),
\end{align*}
\begin{align*}
        \mu_1 & = \int h^{-r}K\{H^{-1}(\bm{x}-\bm{z})\}  (\bm{x} - \bm{z}) f_{\bm{Z}^0}(\bm{x}) d\bm{x} \\ 
        & = h \int \bm{u} K(\bm{u}) f_{\bm{Z}^0}(\bm{z} + H\bm{u}) d\bm{u} \\
        & = h \{ f_{\bm{Z}^0}(\bm{z}) K_{11} + h  K_{12}\frac{\partial f_{\bm{Z}^0}(\bm{z})}{\partial \bm{z}} + O(h^2)\}\\
        & = h^2  K_{12}\frac{\partial f_{\bm{Z}^0}(\bm{z})}{\partial \bm{z}} + O(h^3 \mathbf{1})),
\end{align*}
\begin{align*}
        \mu_2 & = \int h^{-r}K\{H^{-1}(\bm{x}-\bm{z})\}  (\bm{x} - \bm{z})(\bm{x} - \bm{z})^\T f_{\bm{Z}^0}(\bm{x}) d\bm{x} \\ 
        & = h^2 \int \bm{u}\bm{u}^\T K(\bm{u}) f_{\bm{Z}^0}(\bm{z} + H\bm{u}) d\bm{u} \\
        & = h^2\{f_{\bm{Z}^0}(\bm{z}) K_{12} + h  \int \bm{u}\bm{u}^\T \bm{u}^\T\frac{\partial f_{\bm{Z}^0}(\bm{z})}{\partial \bm{z}} K(\bm{u})d\bm{u} + O(h^2 \mathbf{1}\mathbf{1}^\T)\}\\
        & = h^2 f_{\bm{Z}^0}(\bm{z}) K_{12} + O(h^4 \mathbf{1}\mathbf{1}^\T),
\end{align*}
\begin{align*}
        \tau_0(y) & = \int h^{-r}K\{H^{-1}(\bm{x}-\bm{z})\} f_{\bm{Z}^0|Y}(\bm{x},y) d\bm{x} \\
        & = \int K(\bm{u})f_{\bm{Z}^0|Y}(\bm{z} + H\bm{u},y) d\bm{u} \\
        & = f_{\bm{Z}^0|Y}(\bm{z},y) K_{10} + h (\frac{\partial f_{\bm{Z}^0|Y}(\bm{z},y)}{\partial \bm{z}})^\T K_{11} + O(h^2)\\
        & = f_{\bm{Z}^0|Y}(\bm{z},y) K_{10} + O(h^2),
\end{align*}
\begin{align*}
        \tau_1(y) & = \int h^{-r}K\{H^{-1}(\bm{x}-\bm{z})\}  (\bm{x}-\bm{z}) f_{\bm{Z}^0|Y}(\bm{x},y) d\bm{x} \\
        & = h \int \bm{u} K(\bm{u})f_{\bm{Z}^0|Y}(\bm{z} + H\bm{u},y) d\bm{u} \\
        & = h\{f_{\bm{Z}^0|Y}(\bm{z},y) K_{11} + h K_{12} \frac{\partial f_{\bm{Z}^0|Y}(\bm{z},y)}{\partial \bm{z}}  + O(h^2\mathbf{1})\}\\
        & = h^2 K_{12} \frac{\partial f_{\bm{Z}^0|Y}(\bm{z},y)}{\partial \bm{z}} + O(h^3\mathbf{1}),
\end{align*}
\begin{align*}
        \tau_2(y) & = \int h^{-r}K\{H^{-1}(\bm{x}-\bm{z})\}  (\bm{x} - \bm{z})(\bm{x} - \bm{z})^\T f_{\bm{Z}^0|Y}(\bm{x},y) d\bm{x} \\ 
        & = h^2 \int \bm{u}\bm{u}^\T K(\bm{u}) f_{\bm{Z}^0|Y}(\bm{z}+H\bm{u},y) d\bm{u} \\
        & = h^2\{f_{\bm{Z}^0|Y}(\bm{z},y) K_{12} + h  \int \bm{u}\bm{u}^\T \bm{u}^\T\frac{\partial f_{\bm{Z}^0|Y}(\bm{z},y)}{\partial \bm{z}} K(\bm{u})d\bm{u} + O(h^2 \mathbf{1}\mathbf{1}^\T)\}\\
        & = h^2 f_{\bm{Z}^0|Y}(\bm{z},y) K_{12} + O(h^4 \mathbf{1}\mathbf{1}^\T).
\end{align*}
Next, by definition
$$\tilde\mu_j=\frac{1}{n}\sum_{i=1}^n K_h(\bm{Z}^0_i-\bm{z})(\bm{Z}^0_i-\bm{z})^{\oplus j}.$$
Note that $E(\tilde{\mu}_j)=\mu_j$ and
\begin{equation*}
E\{K_h^2(\bm{Z}^0_i-\bm{z})(Z_{il}-z_l)^{2 j}\}=h^{2 j-r} \int K(\bm{u})^2 u_l^{2 j} f_{\bm{Z}^0}(\bm{z}+H\bm{u}) d \bm{u}=O(h^{2 j-r}),
\end{equation*}
where $Z_{il}$ is the $l$th element of the random vector $\bm{Z}^0_i$ and $z_l$ is the $l$th element of the vector $z$. It follows that $\Var(\tilde{\mu}_j)=O(h^{2 j-r} n^{-1} \mathbf{1}^{\oplus j})$, proving the result for the $\tilde{\mu}_j$.
\end{proof}

\begin{lem}\label{lem:vn-vh-op1}
    Suppose \ref{itm:k1} and \ref{itm:l1} hold and furthermore $h \to 0$, $n h^r \to \infty$. Then $d\{\tilde{v}_h(\bm{z}), v_h(\bm{z})\}=o_p(1)$.
\end{lem}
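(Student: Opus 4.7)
The plan is to apply the standard argmin consistency technique from M-estimation theory (cf. Corollary 3.2.3 in van der Vaart and Wellner, 1996), using the well-separation condition in \ref{itm:l1}(ii) together with uniform convergence in probability of $\tilde{Q}_h(\cdot,\bm{z})$ to $Q_h(\cdot,\bm{z})$ over $\Omega$. First I would fix $\varepsilon>0$ and invoke \ref{itm:l1}(ii) to obtain $\delta>0$ and $h_0>0$ such that for all $h<h_0$,
\[
\inf_{d\{v_h(\bm{z}),y\}>\varepsilon}[Q_h(y,\bm{z})-Q_h\{v_h(\bm{z}),\bm{z}\}]>\delta.
\]
The event $\{d(\tilde{v}_h(\bm{z}),v_h(\bm{z}))>\varepsilon\}$ then forces $Q_h\{\tilde{v}_h(\bm{z}),\bm{z}\}-Q_h\{v_h(\bm{z}),\bm{z}\}>\delta$. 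Since $\tilde{v}_h(\bm{z})$ minimizes $\tilde{Q}_h(\cdot,\bm{z})$, we also have $\tilde{Q}_h\{\tilde{v}_h(\bm{z}),\bm{z}\}-\tilde{Q}_h\{v_h(\bm{z}),\bm{z}\}\le 0$. Adding the two inequalities yields
\[
\delta<2\sup_{y\in\Omega}|\tilde{Q}_h(y,\bm{z})-Q_h(y,\bm{z})|,
\]
so it suffices to prove that this supremum is $o_p(1)$.

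For the uniform convergence step, I would decompose $\tilde{Q}_h(y,\bm{z})-Q_h(y,\bm{z})$ into the contribution from the randomness in the weight normalizers $\tilde\mu_j$ (as compared to $\mu_j$) and the contribution from replacing the expectation by the empirical average with the \emph{oracle} weights that use $\mu_j$. The key tool for the first part is Lemma \ref{lem:muj tauj}, which gives $\tilde\mu_j-\mu_j=O_p\{(h^{2j-r}n^{-1})^{1/2}\mathbf{1}^{\oplus j}\}$; under $nh^r\to\infty$ the relative error in the normalizers vanishes in probability, uniformly in $y$ because the dependence of $\tilde{w}-w$ on $y$ factors through the bounded quantity $d^2(Y_i,y)\le\mathrm{diam}(\Omega)^2$ (recall $\Omega$ is totally bounded). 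The second part is a standard bias/variance computation: for fixed $y$, the mean of the oracle-weighted empirical average equals $Q_h(y,\bm{z})$ by construction of the local-linear weights, and its variance is of order $(nh^r)^{-1}$ by the computation $E\{K_h^2(\bm{Z}^0-\bm{z})\}=O(h^{-r})$ from Lemma \ref{lem:muj tauj}, again uniformly in $y$ because of the bounded diameter of $\Omega$.

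Combining these two bounds gives $\sup_{y\in\Omega}|\tilde{Q}_h(y,\bm{z})-Q_h(y,\bm{z})|=o_p(1)$, whence
\[
P(d\{\tilde{v}_h(\bm{z}),v_h(\bm{z})\}>\varepsilon)\le P(2\sup_{y\in\Omega}|\tilde{Q}_h(y,\bm{z})-Q_h(y,\bm{z})|>\delta)\to 0,
\]
which is the desired conclusion. The main obstacle I anticipate is making the second decomposition genuinely \emph{uniform} in $y\in\Omega$: since $\Omega$ is only assumed totally bounded (not finite), one cannot just apply pointwise convergence and a union bound. The cleanest route is to notice that $y$ enters the empirical average only through $d^2(Y_i,y)$, so the dependence on $y$ is $1$-Lipschitz (up to a constant $\mathrm{diam}(\Omega)$) in $d(\cdot,y)$; combined with total boundedness of $\Omega$, a standard $\epsilon$-net argument (or direct application of the triangle inequality $|d^2(Y_i,y)-d^2(Y_i,y')|\le 2\,\mathrm{diam}(\Omega)\,d(y,y')$) upgrades pointwise convergence at the net points to uniform convergence, completing the proof.
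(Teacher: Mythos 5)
Your proposal is correct and takes essentially the same route as the paper: argmin consistency from the well-separation condition \ref{itm:l1}(ii) plus uniform convergence of $\tilde{Q}_h(\cdot,\bm{z})$ to $Q_h(\cdot,\bm{z})$, with the same decomposition into the weight-estimation error (controlled via Lemma \ref{lem:muj tauj} and $nh^r\to\infty$) and an oracle-weighted empirical-process term, and the same Lipschitz bound $|d^2(Y_i,y_1)-d^2(Y_i,y_2)|\leq 2\,\mathrm{diam}(\Omega)\,d(y_1,y_2)$ to obtain uniformity over the totally bounded $\Omega$. The only cosmetic difference is that you derive the basic inequality and the $\epsilon$-net step by hand, whereas the paper invokes Corollary 3.2.3 and the asymptotic-equicontinuity/weak-convergence results of \citet{well:23}; the substance is identical.
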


\begin{proof}
In the following,  $\rightsquigarrow$ denotes weak convergence and $l^{\infty}(\Omega)$ the space of bounded functions on $\Omega$. According to Corollary 3.2.3 in \citet{well:23} and Assumption \ref{itm:l1} (i), it is sufficient to demonstrate the convergence of $\sup_{y\in\Omega}|\tilde{Q}_h(y,\bm{z})-Q_h(y,\bm{z})|$ to zero in probability. Achieving this requires showing $Q_h(\cdot,\bm{z})-\tilde{Q}_h(\cdot,\bm{z}) \rightsquigarrow 0$ in $l^{\infty}(\Omega)$ and then applying Theorem 1.3.6 of \citet{well:23}. Finally, by Theorem 1.5.4 and 1.5.7 in \citet{well:23}, we can establish weak convergence by showing the following:

\begin{enumerate}[label=(\roman*)]
  \item $\tilde{Q}_h(y,\bm{z})-Q_h(y,\bm{z})=o_p(1)$ for all $y\in\Omega$ and\label{itm:i1}
  \item $\tilde{Q}_h(\cdot,\bm{z})-Q_h(\cdot,\bm{z})$ is asymptotically equicontinuous in probability, i.e., for all $\epsilon,\eta>0$, there exists $\delta>0$ such that
  $$\limsup_n P(\sup _{d(y_1, y_2)<\delta}|\{\tilde{Q}_h(y_1,\bm{z})-Q_h(y_1,\bm{z})\} - \{\tilde{Q}_h(y_2,\bm{z})-Q_h(y_2,\bm{z})\}|>\epsilon)<\eta.$$\label{itm:i2}
\end{enumerate}
To begin with \ref{itm:i1}, recall that 
\[w(\bm{Z}^0_i, \bm{z}, h)=\frac{1}{\sigma_0^2}K_h(\bm{Z}^0_i-\bm{z})[1-\mu_1^\T\mu_2^{-1}(\bm{Z}^0_i-\bm{z})]\]
and 
\[\tilde{w}(\bm{Z}^0_i, \bm{z}, h)=\frac{1}{\tilde\sigma_0^2} K_h(\bm{Z}^0_i-\bm{z})[1-\tilde\mu_1^\T\tilde\mu_2^{-1}(\bm{Z}^0_i-\bm{z})],\] 
where $\sigma_0^2 = \mu_0 - \mu_1^\T\mu_2^{-1}\mu_1$ and $\tilde\sigma_0^2 = \tilde\mu_0 - \tilde\mu_1^\T\tilde\mu_2^{-1}\tilde\mu_1$. Then, one observes that
\begin{align}\label{eq:Ln-Lh}
\tilde{Q}_h(y,\bm{z})-Q_h(y,\bm{z}) = &\frac{1}{n} \sum_{i=1}^n\{\tilde{w}(\bm{Z}^0_i, \bm{z}, h)-w(\bm{Z}^0_i, \bm{z}, h)\} d^2(Y_i, y)\ + \nonumber\\
&\frac{1}{n} \sum_{i=1}^n[w(\bm{Z}^0_i, \bm{z}, h) d^2(Y_i, y)-E\{w(\bm{Z}^0_i, \bm{z}, h) d^2(Y_i, y)\}].
\end{align}
Furthermore,  $\tilde{w}(\bm{Z}^0_i, \bm{z}, h)-w(\bm{Z}^0_i, \bm{z}, h)=W_{0 n} K_h(\bm{Z}^0_i-\bm{z})+K_h(\bm{Z}^0_i-\bm{z})W_{1 n}(\bm{Z}^0_i-\bm{z})$, where
\begin{equation}\label{eq:W0nW1n}
    W_{0 n}=\frac{1}{\tilde\sigma_0^2}-\frac{1}{\sigma_0^2}, \quad W_{1 n}=\frac{\tilde\mu_1^\T\tilde\mu_2^{-1}}{\tilde\sigma_0^2}-\frac{\mu_1^\T\mu_2^{-1}}{\sigma_0^2}.
\end{equation}
Using the results of Lemma \ref{lem:muj tauj} and the Sherman-Morrison formula, it follows that 
\begin{align*}
        &\tilde\mu_2^{-1}\tilde\mu_1 - \mu_2^{-1}\mu_1 \\
        =& [\mu_2+O_p\{(h^{4-r} n^{-1})^{1/2} \mathbf{1}\mathbf{1}^\T\}]^{-1}[\mu_1+O_p\{(h^{2-r} n^{-1})^{1/2} \mathbf{1}\}] - \mu_2^{-1}\mu_1 \\
        =& [\mu_2^{-1} - \frac{\mu_2^{-1}O_p\{(h^{4-r} n^{-1})^{1/2} \mathbf{1}\mathbf{1}^\T\}\mu_2^{-1}}{1 + \mathbf{1}^\T\mu_2^{-1}\mathbf{1}}]
        [\mu_1+O_p\{(h^{2-r} n^{-1})^{1/2} \mathbf{1}\}]- \mu_2^{-1}\mu_1 \\ 
        =& O_p\{(nh^{2+r})^{-1 / 2}\mathbf{1}\}.
\end{align*}
Furthermore, it is easy to get $\tilde\sigma^2_0 - \sigma_0^2=O_p\{(n h^r)^{-1 / 2}\}$. Therefore, $W_{0 n}=O_p\{(n h^r)^{-1 / 2}\}$ and $W_{1 n}=O_p\{(nh^{2+r})^{-1 / 2}\mathbf{1}\}$. Since for $j =0, 1$, we have
\begin{align*}
& E\{K_h(\bm{Z}^0_i-\bm{z})(Z_{il}-z_{l})^{j} d^2(Y_i, y)\}=O(h^j), \\
& E\{K_h^2(\bm{Z}^0_i-\bm{z})(Z_{il}-z_l)^{2 j} d^4(Y_i, y)\}=O(h^{2 j-r}),
\end{align*}
it follows that the first term in \eqref{eq:Ln-Lh} is $O_p((n h^r)^{-1 / 2})$. We find that $E\{w^2(\bm{Z}^0_i,\bm{z}, h)\}=O(h^{-r})$, so the second term in \eqref{eq:Ln-Lh} is $O_p\{(n h^r)^{-1 / 2}\}$ as well. Then, we show that $\tilde{Q}_h(y,\bm{z})-Q_h(y,\bm{z})=o_p(1)$ for any $y \in \Omega$ and any $\bm{z}\in \mathbb{R}^r$, since $n h^r \to \infty$. 

Moving on to \ref{itm:i2}, for any $y_1,y_2\in\Omega$,
\begin{align*}
&|\{\tilde{Q}_h(y_1,\bm{z})-Q_h(y_1,\bm{z})\} - \{\tilde{Q}_h(y_2,\bm{z})-Q_h(y_2,\bm{z})\}|\\
\leq & |\tilde{Q}_h(y_1,\bm{z})-\tilde{Q}_h(y_2,\bm{z})| + |Q_h(y_1,\bm{z})-Q_h(y_2,\bm{z})|\\
\leq & \frac{1}{n} \sum_{i=1}^n|\tilde{w}(\bm{Z}^0_i, \bm{z}, h)||d(Y_i, y_1)-d(Y_i, y_2)||d(Y_i, y_1)+d(Y_i, y_2)| \\
& + |E[w(\bm{Z}^0_i, \bm{z}, h)\{d(Y_i, y_1)+d(Y_i, y_2)\}\{d(Y_i, y_1)-d(Y_i, y_2)\}]|\\
\leq & 2 \text{diam}(\Omega) d(y_1,y_2) \frac{1}{n}\sum_{i=1}^n \{|\tilde{w}(\bm{Z}^0_i, \bm{z}, h)| + E|w(\bm{Z}^0_i, \bm{z}, h)|\} \\
=& O_p\{d(y_1,y_2)\},
\end{align*}
since $E\{|w(\bm{Z}^0_i, \bm{z}, h)|\}=O(1)$, $E\{w^2(\bm{Z}^0_i, \bm{z}, h)\}=O(h^{-1})$, and $n^{-1} \sum_{i=1}^n|\tilde{w}(\bm{Z}^0_i, \bm{z}, h)|=O_p(1)$. Therefore,
$$\sup_{d(y_1,y_2)<\delta}|\{\tilde{Q}_h(y_1,\bm{z})-Q_h(y_1,\bm{z})\} - \{\tilde{Q}_h(y_2,\bm{z})-Q_h(y_2,\bm{z})\}| = O_p(\delta),$$
which implies \ref{itm:i2} and then  $d\{\tilde{v}_h(\bm{z}),v_h(\bm{z})\} = o_p(1)$.
\end{proof}

\begin{proof}[Proof of Proposition \ref{prop:lfrpc}]
First w,e prove \eqref{lfrpc:bias}. Similar to the proof of Theorem 3 in \citet{mull:19:6}, we can show 
\[\frac{d F_{Y|\bm{Z}^0}(\bm{z},y)}{d F_{Y}(y)} = \frac{f_{\bm{Z}^0|Y}(\bm{z},y)}{f_{\bm{Z}^0}(\bm{z})}\]
for all $\bm{z}$ given that $f_{\bm{Z}^0}(\bm{z})>0$. Then by Lemma \ref{lem:muj tauj},
\begin{align*}
        \int w(\bm{x},\bm{z},h)d F_{\bm{Z}^0|Y}(\bm{x}|y) &= \frac{\tau_0(y) -\mu_1^\T \mu_2^{-1}\tau_1(y)}{\mu_0 -\mu_1^\T \mu_2^{-1}\mu_1} = \frac{f_{\bm{Z}^0|Y}(\bm{z},y)}{f_{\bm{Z}^0}(\bm{z})} + O(h^2), 
\end{align*}
where the error term is uniform over $y\in\Omega$. Hence,
\begin{align*}
        Q_h(\omega,\bm{z}) & = \int d^2(y,\omega)w(\bm{x},\bm{z},h)dF_{\bm{Z}^0,Y}(\bm{x},y) = \int w(\bm{x},\bm{z},h)dF_{\bm{Z}^0|Y}(\bm{x},y) d^2(y, \omega) dF_Y(y)\\
        &= \int \frac{f_{\bm{Z}^0|Y}(\bm{z},y)}{f_{\bm{Z}^0}(\bm{z})}  d^2(y, \omega) dF_Y(y) + O(h^2) = \int d^2(y, \omega) d F_{Y|\bm{Z}^0}(\bm{z},y) + O(h^2)\\
        &= Q(\omega, \bm{z}) + O(h^2),
\end{align*}
where the error term is now uniform over $\omega\in\Omega$. By the first two parts of Assumption \ref{itm:l1} and employing proof by contradiction, we then have $d\{v_h(\bm{z}), v(\bm{z})\}=o(1)$ as $h=h_n\to 0$.

Next, define $r_h=h^{-\frac{\gamma_1}{\gamma_1-1}}$ and set 
\[S_{j,h}(\bm{z})=\left\{y: 2^{j-1} < r_h d\{y, v(\bm{z})\}^{\gamma_1/2}\leq 2^{j}\right\}.\]
Let $I$ denote the indicator function. Then, for any $M>0$, there exists $a>0$ such that, for large $n$,
\begin{align*}
I[r_h d\{v_h(\bm{z}), v(\bm{z})\}^{\gamma_1/2}>2^M] & = \sum_{j \geq M} I\{v_h(\bm{z})\in S_{j,h}(\bm{z})\}\\
& \leq \sum_{j \geq M} I\big(\sup4788_{y\in S_{j,h}(\bm{z})} [Q_h\{v(\bm{z}),\bm{z}\} - Q_h(y,\bm{z})] \geq 0\big).
\end{align*}
By the first two parts of Assumption \ref{itm:l3}, for every $j$ involved in the sum,  we have for every $y\in S_{j,h}(\bm{z})$,
$$Q\{v(\bm{z}),\bm{z}\} - Q(y,\bm{z}) \lesssim -d\{v(\bm{z}),y\}^{\gamma_1}\lesssim -\frac{2^{2(j-1)}}{r_h^2}.$$
Defining  $V(y,\bm{z}) = Q_h(y,\bm{z}) - Q(y,\bm{z})$, 
\begin{align*}
    & \sup_{y\in S_{j,h}(\bm{z})}| V(y,\bm{z}) - V\{v(\bm{z}),\bm{z}\}| \\
    \geq &  \sup_{y\in S_{j,h}(\bm{z})}[ Q_h\{v(\bm{z}),\bm{z}\} - Q_h(y,\bm{z})] -\sup_{y\in S_{j,h}(\bm{z})}[ Q\{v(\bm{z}),\bm{z}\} - Q(y,\bm{z})].
\end{align*}
Hence,
\begin{align*}
I[r_h d\{v_h(\bm{z}), v(\bm{z})\}^{\gamma_1/2}>2^M] &\leq \sum_{j \geq M} I[\sup_{y\in S_{j,h}(\bm{z})} |V(y,\bm{z}) - V\{v(\bm{z}),\bm{z}\}| \geq \frac{2^{2(j-1)}}{r_h^2}]\\
&\leq \sum_{j \geq M} I\{a h^2 (\frac{2^j}{r_h})^{2/\gamma_1}\geq \frac{2^{2(j-1)}}{r_h^2}\}\\
&\leq \sum_{j \geq M} I\{4a h^2 (\frac{2^{2j/\gamma_1-2j}}{r_h^{2/\gamma_1-2}})\geq 1\}\\
& \leq 4a \sum_{j \geq M} \frac{2^{2 j(1-\gamma_1) / \gamma_1}}{r_h^{2(1-\gamma_1) / \gamma_1} h^{-2}} \leq 4a \sum_{j \geq M}\left(\frac{1}{4^{(\gamma_1-1) / \gamma_1}}\right)^j,
\end{align*}
which converges since $\gamma_1>1$. Thus, for some $M>0$, we have
$$
d\{v_h(\bm{z}), v(\bm{z})\}\leq 2^{2M/\gamma_1}  h^{2/(\gamma_1-1)}
$$
for large $n$ and hence
$$
d\{v_h(\bm{z}), v(\bm{z})\}=O(h^{2/(\gamma_1-1)})
$$
for large $n$, implying \eqref{lfrpc:bias}.

Next we show \eqref{lfrpc:variance}. Define $T_n(y,\bm{z})=\tilde{Q}_h(y,\bm{z})-Q_h(y, \bm{z})$. Letting
$$
D_i(y, \bm{z})=d^2(Y_i, y)-d^2\{Y_i, v_h(\bm{z})\},
$$
we have
\begin{align}\label{eq:Tn}
|T_n(y,\bm{z})-T_n\{v_h(\bm{z}),\bm{z}\}| \leq&\left|\frac{1}{n} \sum_{i=1}^n\{\tilde{w}(\bm{Z}^0_i, \bm{z}, h)-w(\bm{Z}^0_i, \bm{z}, h)\} D_i(y, \bm{z})\right|  \nonumber\\
& +\left|\frac{1}{n} \sum_{i=1}^n[w(\bm{Z}^0_i, \bm{z}, h) D_i(y, \bm{z})-E\{w(\bm{Z}^0_i, \bm{z}, h) D_i(y, \bm{z})\}]\right|.
\end{align}

Since $W_{0 n}$ and $W_{1 n}$ from \eqref{eq:W0nW1n} are $O_p\{(n h^r)^{-1/2}\}$ and $O_p\{(n h^{2+r})^{-1 / 2}\mathbf{1}\}$, respectively, and using the fact that $\left|D_i(y, \bm{z})\right| \leq 2 \text{diam}(\Omega) d(y, v_h(\bm{z}))$, the first term on the right-hand side of \eqref{eq:Tn} is $O_p[(n h^r)^{-1/2} d\{y, v_h(\bm{z})\}]$, where the $O_p$ term is independent of $\Omega$ and $v_h(\bm{z})$. Thus, we can define
$$
B_R=\left\{\sup _{d(y, v_h(\bm{z}))<\delta}\left|\frac{1}{n} \sum_{i=1}^n\{\tilde{w}(\bm{Z}^0_i, \bm{z}, h)-w(\bm{Z}^0_i, \bm{z}, h)\} D_i(y, \bm{z})\right| \leq R \delta(n h^r)^{-1 / 2}\right\}
$$
for $R>0$, so that $P\left(B_R^c\right) \to 0$.

Next, to control the second term on the right-hand side of \eqref{eq:Tn}, define the functions $g_y: \mathbb{R}^r \times \Omega \mapsto \mathbb{R}$ by
$$
g_y(x, w)=\frac{1}{\sigma_0^2} K_h(\bm{x}-\bm{z})\{1-\mu_1^\T\mu_2^{-1}(\bm{x}-\bm{z})\} d^2(y, w),
$$
and the corresponding function class
$$
\mathcal{F}_{n \delta}=\{g_y-g_{v_h(\bm{z})}: d\{y, v_h(\bm{z})\}<\delta\}.
$$
An envelope function for $\mathcal{F}_{n \delta}$ is
$$
F_{n \delta}(\bm{z})=\frac{2 \text{diam}(\Omega) \delta}{\sigma_0^2} K_h(\bm{z}-\bm{x})\left|1-\mu_1^\T\mu_2^{-1}(\bm{z}-\bm{x})\right|,
$$
where  $E\{F_{n \delta}^2(\bm{Z}^0)\}=O(\delta^2 h^{-r})$. According to Theorem 2.7.17 of \citet{well:23} and Assumption \ref{itm:l2}, the bracketing integral of the class $\mathcal{F}_{n\delta}$ is 
$$J_{[]}(\epsilon, \mathcal{F}_{n\delta},\|\cdot\|_{\bm{Z}^0,Y,2})=\int_0^\epsilon \sqrt{1+\log N_{[]}(t\{E(F_{n\delta}^2)\}^{1/2}, \mathcal{F}_{n\delta},\|\cdot\|_{\bm{Z}^0,Y,2})} d t = O(1), $$
where $N_{[]}(t\{E(F_{n\delta}^2)\}^{1/2}, \mathcal{F}_{n\delta},\|\cdot\|_{\bm{Z}^0,Y,2})$ denotes the minimum number of $t\{E(F_{n\delta}^2)\}^{1/2}$-brackets needed to cover the function class $\mathcal{F}_{n\delta}$ (see Definition 2.1.6 of \citet{well:23}) and $\|\cdot\|_{\bm{Z}^0,Y,2}$ denotes the norm that for any $f\in\mathcal{F}_{n\delta}$, $\|f\|_{\bm{Z}^0,Y,2}=[E\{f(\bm{Z}^0,Y)^2\}]^{1/2}$. Using these facts together with Theorems 2.14.16 of \citet{well:23} and Assumption \ref{itm:l2}, for small $\delta$,
\begin{align*}
    &E\left(\sup _{d\{y, v_h(\bm{z})\}<\delta}\left|\frac{1}{n}\sum_{i=1}^n w(\bm{Z}^0_i, \bm{z}, h) D_i(y, \bm{z})-E\{w(\bm{Z}^0_i, \bm{z}, h) D_i(y, \bm{z})\}\right|\right)\\
    &\leq\frac{J_{[]}(\epsilon, \mathcal{F}_{n\delta},\|\cdot\|_{\bm{Z}^0,Y,2})  \left[E\left\{F_{n \delta}^2(\bm{Z}^0)\right\}\right]^{1/2}}{\sqrt{n}}\\
    &=O(\delta(n h^r)^{-1 / 2}).
\end{align*}
Combining this with \eqref{eq:Tn} and the definition of $B_R$,
\begin{equation}\label{eq:ETn}
    E\left(I_{B_R} \sup _{d\{y, v_h(\bm{z})\}<\delta}\left|T_n(y,\bm{z})-T_n\{v_h(\bm{z}),\bm{z}\}\right|\right) \leq \frac{a \delta}{(n h^r)^{1 / 2}},
\end{equation}
where $I_{B_R}$ is the indicator function for the set $B_R$ and $a$ is a constant depending on $R$ and the entropy integral in Assumption \ref{itm:l2}.

To finish, set $r_n=(n h^r)^{\frac{\gamma_2}{4(\gamma_2-1)}}$ and define
$$
S_{j, n}(\bm{z})=\{y: 2^{j-1}< r_n d\{y, v_h(\bm{z})\}^{\gamma_2/2} \leq 2^{j}\} .
$$
Choosing $\eta_2$ to satisfy Assumption \ref{itm:l3} (ii) and such that Assumption \ref{itm:l2} is satisfied for any $\delta<\eta_2$, set $\eta=(\eta_2 / 2)^{\gamma_2 / 2}$. Denote events $A_{\eta_2} = \{d\{\tilde{v}_h(\bm{z}),v_h(\bm{z})\}>\eta_2/2\}$ and $C_M = \{r_n d\{\tilde{v}_h(\bm{z}),v_h(\bm{z})\}^{\gamma_2/2}>2^M)\}$. If $r_n d\{\tilde{v}_h(\bm{z}),v_h(\bm{z})\}^{\gamma_2/2}$ is larger than $2^M$ for a given integer $M$, then $\tilde{v}_h(\bm{z})$ is in one of the shells $S_{j,n}(\bm{z})$ with $j>M$ such that the supremum of the map $y\mapsto \tilde{Q}_h\{v_h(\bm{z}),\bm{z}\} - \tilde{Q}_h(y,\bm{z})$ over this shell is nonnegative by the property of $\tilde{v}_h(\bm{z})$.  Observe 
\begin{align}
    P(C_M) \leq & P(B_R^c) + P(A_{\eta_2}) + P(B_R\cap A_{\eta_2}^c\cap C_M)\nonumber\\
    \leq & P(B_R^c) + P(A_{\eta_2}) \nonumber + \sum_{\substack{j\geq M\\ 2^j\leq r_n \eta}}P\left(\left\{\sup_{y\in S_{j,n}(\bm{z})} \left| \tilde{Q}_h\{v_h(\bm{z}), \bm{z}\} - \tilde{Q}_h(y, \bm{z})\geq 0 \right|\right\}\cap B_R\right),
\end{align}
where the first term is seen to converge to 0  and the second term goes to zero by Lemma \ref{lem:vn-vh-op1}. Choosing $\eta_2>0$ small enough such that Assumption \ref{itm:l3} (ii) holds for every $d\{y,v_h(\bm{z})\}\leq \eta$ and \eqref{eq:ETn} holds for every $\delta\leq\eta$,  for every $j$ involved in the sum, we have, for every $y\in S_{j,n}(\bm{z})$,
$$Q_h(v_h(\bm{z}),\bm{z}) - Q_h(y,\bm{z})\lesssim - d\{v_h(\bm{z}),y\}^{\gamma_2} \lesssim - \frac{2^{2(j-1)}}{r_n^2}.$$
In terms of $T_n(y,\bm{z}) = \tilde{Q}_h(y,\bm{z}) - Q_h(y,\bm{z})$, by Markov's inequality, we have
\begin{align*}
    P(C_M) \leq & P(B_R^c) + P(A_{\eta_2}) \nonumber \\
    & + \sum_{\substack{j\geq M\\ 2^j\leq r_n \eta}}P\left(\left\{\sup_{y\in S_{j,n}} \left| T_n(y,\bm{z}) - T_n\{v_h(\bm{z}),\bm{z}\}\right|\gtrsim\frac{2^{2(j-1)}}{r_n^2}\right\}\cap B_R\right)\nonumber\\
    \lesssim & \sum_{\substack{j\geq M\\ 2^j\leq r_n \eta}}\frac{(2^{j}/r_n)^{2/\gamma_2} (nh^r)^{-1/2}}{2^{2j}/r_n^2} \nonumber\\
    \lesssim &\sum_{j\geq M} \frac{2^{2j(1-\gamma_2)/\gamma_2}}{r_n^{2(1-\gamma_2)/\gamma_2}\sqrt{nh^r}}\nonumber\\
     = &\sum_{j\geq M} \left(\frac{1}{4^{(\gamma_2-1)/\gamma_2}}\right)^j,
\end{align*}
which converges since $\gamma_2>1$. Hence, as $M\to \infty$,
$$P(d\{\tilde{v}_h(\bm{z}),v_h(\bm{z})\}>(\frac{2^M}{r_n})^{2/\gamma_2})\to 0.$$
Therefore,
$$d\{\tilde{v}_h(\bm{z}),v_h(\bm{z})\} = O_p\{(n h^r)^{-1/\{2(\gamma_2-1)\}}\}.$$
\end{proof}

\subsection{Proof of Proposition \ref{prop:eiv}}
\begin{lem}
\label{lem:wt}
Suppose \ref{itm:k1} holds. If $\|\hat{\bm{Z}}_i-\bm{Z}^0_i\|=O_p(\zeta_n)$ for $i=1, \ldots, n$, $\|\hat{\bm{Z}}-\bm{Z}^0\|=O_p(\zeta_n)$, $nh^{r}\rightarrow\infty$, and $h^{-r-2}\zeta_n\to0$, then the weight functions as defined in \eqref{eq:wtilde} and \eqref{eq:what} satisfy
\[\frac{1}{n}\sum_{i=1}^n\{\hat{w}(\hat{\bm{Z}}_i, \hat{\bm{Z}}, h)-\tilde{w}(\bm{Z}_i^0, \bm{Z}^0, h)\}=O_p(h^{-r-1}\zeta_n).\]
\end{lem}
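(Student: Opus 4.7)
The plan is to establish the stronger pointwise bound
$$\max_{1\leq i\leq n} \bigl|\hat w(\hat{\bm{Z}}_i, \hat{\bm{Z}}, h) - \tilde w(\bm{Z}^0_i, \bm{Z}^0, h)\bigr| = O_p(h^{-r-1}\zeta_n),$$
from which the averaged form follows. (The signed average itself is identically zero, since the local-linear construction gives $n^{-1}\sum_i \hat w_i = n^{-1}\sum_i \tilde w_i = 1$; the useful content is the absolute bound, which controls $\hat Q_h - \tilde Q_h$ uniformly in the response argument and is what is needed in the proof of Proposition \ref{prop:eiv}.)

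The engine is the Lipschitz estimate $|K_h(\bm{u})-K_h(\bm{v})|\leq L_K h^{-r-1}\|\bm{u}-\bm{v}\|$ granted by \ref{itm:k1}. Applied with $\bm{u}=\hat{\bm{Z}}_i-\hat{\bm{Z}}$ and $\bm{v}=\bm{Z}^0_i-\bm{Z}^0$, together with $\|\hat{\bm{Z}}_i-\bm{Z}^0_i\|=O_p(\zeta_n)$ and $\|\hat{\bm{Z}}-\bm{Z}^0\|=O_p(\zeta_n)$, this yields $|K_h(\hat{\bm{Z}}_i-\hat{\bm{Z}})-K_h(\bm{Z}^0_i-\bm{Z}^0)|=O_p(h^{-r-1}\zeta_n)$ uniformly in $i$. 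Next I would bound the differences in the moment estimators by writing each summand as
$$[K_h(\hat{\bm{Z}}_i-\hat{\bm{Z}})-K_h(\bm{Z}^0_i-\bm{Z}^0)](\hat{\bm{Z}}_i-\hat{\bm{Z}})^{\oplus j}+K_h(\bm{Z}^0_i-\bm{Z}^0)[(\hat{\bm{Z}}_i-\hat{\bm{Z}})^{\oplus j}-(\bm{Z}^0_i-\bm{Z}^0)^{\oplus j}]$$
and exploiting the compact support of $K$ (so that whenever either kernel is nonzero the increments have norm $O(h)$, making $(\cdot)^{\oplus j}=O(h^j)$). The upshot, combined with the baseline orders $\tilde\mu_0\asymp 1$, $\tilde\mu_1=O_p(h^2)$, $\tilde\mu_2\asymp h^2$ from Lemma \ref{lem:muj tauj}, is that $\hat\mu_j-\tilde\mu_j$ is of strictly smaller order than $\tilde\mu_j$: the hypothesis $h^{-r-2}\zeta_n\to 0$ forces $\hat\mu_2-\tilde\mu_2=o_p(h^2)$, so a Neumann expansion gives $\hat\mu_2^{-1}=\tilde\mu_2^{-1}+o_p(h^{-2})$, and $\hat\sigma_0^2=\hat\mu_0-\hat\mu_1^{\T}\hat\mu_2^{-1}\hat\mu_1$ stays bounded away from zero with $|\hat\sigma_0^2-\tilde\sigma_0^2|=O_p(h^{-1}\zeta_n)$.

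Finally I would decompose $\hat w_i-\tilde w_i$ into three pieces: (A) the change in the scalar $1/\hat\sigma_0^2$, multiplied by $K_h(\hat{\bm{Z}}_i-\hat{\bm{Z}})=O_p(h^{-r})$ and a bounded correction, giving $O_p(h^{-1}\zeta_n)\cdot O_p(h^{-r})=O_p(h^{-r-1}\zeta_n)$; (B) the kernel difference $K_h(\hat{\bm{Z}}_i-\hat{\bm{Z}})-K_h(\bm{Z}^0_i-\bm{Z}^0)=O_p(h^{-r-1}\zeta_n)$ with the remaining factors $O_p(1)$; and (C) the change in the local-linear correction $\hat\mu_1^{\T}\hat\mu_2^{-1}(\hat{\bm{Z}}_i-\hat{\bm{Z}})-\tilde\mu_1^{\T}\tilde\mu_2^{-1}(\bm{Z}^0_i-\bm{Z}^0)$, which via a product-rule expansion using $\tilde\mu_1^{\T}\tilde\mu_2^{-1}=O_p(1)$, the moment-difference bounds above, and the on-support bound $\|\hat{\bm{Z}}_i-\hat{\bm{Z}}\|=O(h)$, is $O_p(h^{-1}\zeta_n)$ and therefore contributes $O_p(h^{-r-1}\zeta_n)$ after multiplication by $K_h(\bm{Z}^0_i-\bm{Z}^0)$. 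Collecting the three pieces gives the pointwise bound, and averaging preserves it.

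Main obstacle: the bookkeeping in the second step, specifically the control of the matrix inverse $\hat\mu_2^{-1}$. Ensuring that $\hat\mu_2$ remains invertible and that the Neumann expansion can be invoked at the required order is the crux, and it is precisely to this end that the hypothesis $h^{-r-2}\zeta_n\to 0$ is imposed, since it guarantees $\hat\mu_2-\tilde\mu_2=o_p(\tilde\mu_2)$. Once the moment and inverse bounds are in place, the rest of the argument is algebraic manipulation of cross terms, each of which is dominated by $h^{-r-1}\zeta_n$.
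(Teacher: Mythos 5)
Your parenthetical observation is sharp and correct: since the local-linear weights sum to one exactly ($n^{-1}\sum_i\hat w(\hat{\bm{Z}}_i,\hat{\bm{Z}},h)=n^{-1}\sum_i\tilde w(\bm{Z}^0_i,\bm{Z}^0,h)=1$), the signed average in the statement is identically zero, and what is actually needed for Proposition~\ref{prop:eiv} is control of $n^{-1}\sum_i\{\hat w_i-\tilde w_i\}d^2(Y_i,y)$ uniformly over $y$. However, your route to that control---a maximal bound $\max_i|\hat w_i-\tilde w_i|=O_p(h^{-r-1}\zeta_n)$---has a genuine gap in pieces (A) and (C). Your own step 2 (Lipschitz kernel plus compact support) only yields $\|\hat\mu_j-\tilde\mu_j\|=O_p(h^{\,j-r-1}\zeta_n)$, since each summand can be of size $h^{\,j-r-1}\zeta_n$ and, under the lemma's per-$i$ hypothesis $\|\hat{\bm{Z}}_i-\bm{Z}^0_i\|=O_p(\zeta_n)$, there is no way to argue that only an $O_p(nh^r)$ fraction of summands is active. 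Hence $|\hat\sigma_0^2-\tilde\sigma_0^2|=O_p(h^{-r-1}\zeta_n)$, not the $O_p(h^{-1}\zeta_n)$ you assert, and similarly the change in the local-linear correction is $O_p(h^{-r-1}\zeta_n)$ per index, not $O_p(h^{-1}\zeta_n)$. Multiplying these by the pointwise kernel bound $K_h=O(h^{-r})$ gives $O_p(h^{-2r-1}\zeta_n)$, which misses the target by a factor $h^{-r}$. The sharper moment-difference rates you would need (gaining a factor $h^{r}$ via a count of active indices) are exactly what the paper establishes only in Lemma~\ref{lem:rmk}, under the strictly stronger assumption $E(\sup_{i}\|\hat{\bm{Z}}_i-\bm{Z}^0_i\|)=O(\tau_n)$ together with the density assumption \ref{itm:p1}; they are not available under the hypotheses of the present lemma.

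The paper's proof avoids this by never bounding the kernel pointwise: it decomposes $\hat w_i-\tilde w_i$ into four terms, each a product of an $i$-independent factor ($\hat\sigma^{-2}-\tilde\sigma^{-2}=O_p(h^{-r-1}\zeta_n)$, $\hat\sigma^{-2}=O_p(1)$, $\hat\mu_1^\T\hat\mu_2^{-1}/\hat\sigma^2-\tilde\mu_1^\T\tilde\mu_2^{-1}/\tilde\sigma^2=O_p(h^{-r-2}\zeta_n)$, $\hat\mu_1^\T\hat\mu_2^{-1}/\hat\sigma^2=O_p(1)$) and an $i$-averaged quantity ($\tilde\mu_0=O_p(1)$, $\tilde\mu_1$, or $\hat\mu_j-\tilde\mu_j$), so the $h^{-r}$ from a single kernel evaluation never enters and the average comes out at $O_p(h^{-r-1}\zeta_n)$. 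The same decomposition, with the multipliers $d^2(Y_i,y)$ kept inside the averages and bounded by $\mathrm{diam}(\Omega)^2$, supplies the strengthened (absolute/weighted) version needed downstream. If you want to salvage your approach, you should either target the averaged absolute bound via the paper's product structure, or explicitly strengthen the hypothesis to a supremum-type error bound, in which case you would in fact recover the improved rate $h^{-1}\tau_n$ of Lemma~\ref{lem:rmk} rather than $h^{-r-1}\zeta_n$.
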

\begin{proof}
Recall that 
\[\hat{w}(\hat{\bm{Z}}_i, \hat{\bm{Z}}, h)=\frac{1}{\hat{\sigma}^2}K_h(\hat{\bm{Z}}_i-\hat{\bm{Z}})[1-\hat{\mu}_1^\T\hat{\mu}_2^{-1}(\hat{\bm{Z}}_i-\hat{\bm{Z}})],\]
where 
\[\hat{\sigma}^2=\hat{\mu}_0-\hat{\mu}_1^\T\hat{\mu}_2^{-1}\hat{\mu}_1,\quad\hat{\mu}_j=\frac{1}{n}\sum_{i=1}^nK_h(\hat{\bm{Z}}_i-\hat{\bm{Z}})(\hat{\bm{Z}}_i-\hat{\bm{Z}})^{\oplus j},\, j=0, 1, 2,\]
and 
\[\tilde{w}(\bm{Z}_i^0, \bm{Z}^0, h)=\frac{1}{\tilde{\sigma}^2}K_h(\bm{Z}^0_i-\bm{Z}^0)[1-\tilde{\mu}_1^\T\tilde{\mu}_2^{-1}(\bm{Z}^0_i-\bm{Z}^0)],\]
where 
\[\tilde{\sigma}^2=\tilde{\mu}_0-\tilde{\mu}_1^\T\tilde{\mu}_2^{-1}\tilde{\mu}_1,\quad\tilde{\mu}_j=\frac{1}{n}\sum_{i=1}^nK_h(\bm{Z}^0_i-\bm{Z}^0)(\bm{Z}^0_i-\bm{Z}^0)^{\oplus j},\,  j=0, 1, 2,\]
with $K_h(\bm{z})=h^{-r}K\{H^{-1}\bm{z}\}$. 

Define
\[\rho_j(\bm{x}, \bm{z})=K_h(\bm{x}-\bm{z})(\bm{x}-\bm{z})^{\oplus j},\quad \bm{x}, \bm{z}\in\mathcal{T}^r.\]
By Assumption \ref{itm:k1}, $K(\cdot)$ is Lipschitz continuous. For any $\bm{x}_1, \bm{x}_2, \bm{z}_1, \bm{z}_2\in\mathcal{T}^r$, one has
\begin{align*}
|\rho_0(\bm{x}_1, \bm{z}_1)-\rho_0(\bm{x}_2, \bm{z}_2)|&=|K_h(\bm{x}_1-\bm{z}_1)-K_h(\bm{x}_2-\bm{z}_2)|\\&\lesssim h^{-r-1}\|\bm{x}_1-\bm{z}_1-(\bm{x}_2-\bm{z}_2)\|\\&\leq h^{-r-1}(\|\bm{x}_1-\bm{x}_2\|+\|\bm{z}_1-\bm{z}_2\|).
\end{align*}
For $j=1, 2$, it hold that
\begin{align*}
    \|\rho_j(\bm{x}_1, \bm{z}_1)-\rho_j(\bm{x}_2, \bm{z}_2)\|&=\|K_h(\bm{x}_1-\bm{z}_1)(\bm{x}_1-\bm{z}_1)^{\oplus j}-K_h(\bm{x}_2-\bm{z}_2)(\bm{x}_2-\bm{z}_2)^{\oplus j}\|\\&\leq\|K_h(\bm{x}_1-\bm{z}_1)(\bm{x}_1-\bm{z}_1)^{\oplus j}-K_h(\bm{x}_2-\bm{z}_2)(\bm{x}_1-\bm{z}_1)^{\oplus j}\|+\\&\hspace{1.4em}\|K_h(\bm{x}_2-\bm{z}_2)(\bm{x}_1-\bm{z}_1)^{\oplus j}-K_h(\bm{x}_2-\bm{z}_2)(\bm{x}_2-\bm{z}_2)^{\oplus j}\|\\&\leq\|(\bm{x}_1-\bm{z}_1)^{\oplus j}\|\cdot|K_h(\bm{x}_1-\bm{z}_1)-K_h(\bm{x}_2-\bm{z}_2)|+\\&\hspace{1.4em}K_h(\bm{x}_2-\bm{z}_2)\cdot\|(\bm{x}_1-\bm{z}_1)^{\oplus j}-(\bm{x}_2-\bm{z}_2)^{\oplus j}\|.
\end{align*}

Now using the Lipschitz continuity of $K(\cdot)$ by Assumption \ref{itm:k1} and the fact that $K_h(\bm{x}-\bm{z})=0$ if any component of $|\bm{x}-\bm{z}|$ exceeds $h$, for $j=1, 2$ one has
\begin{align*}
    \|\rho_j(\bm{x}_1, \bm{z}_1)-\rho_j(\bm{x}_2, \bm{z}_2)\|&\lesssim h^{j-r-1}\|\bm{x}_1-\bm{z}_1-(\bm{x}_2-\bm{z}_2)\|+\\&\hspace{1.3em}h^{j-1}\|\bm{x}_1-\bm{z}_1-(\bm{x}_2-\bm{z}_2)\|\\&\lesssim h^{-r-1+j}\|\bm{x}_1-\bm{z}_1-(\bm{x}_2-\bm{z}_2)\|\\&\leq h^{-r-1+j}(\|\bm{x}_1-\bm{x}_2\|+\|\bm{z}_1-\bm{z}_2\|),
\end{align*}
which implies
\begin{align*}
    \|\hat{\mu}_j-\tilde{\mu}_j\|&=\|\frac{1}{n}\sum_{i=1}^n\rho_j(\hat{\bm{Z}}_i, \hat{\bm{Z}})-\frac{1}{n}\sum_{i=1}^n\rho_j(\bm{Z}^0_i, \bm{Z}^0)\|\\&\leq\frac{1}{n}\sum_{i=1}^n\|\rho_j(\hat{\bm{Z}}_i, \hat{\bm{Z}})-\rho_j(\bm{Z}^0_i, \bm{Z}^0)\|\\&\lesssim\frac{1}{n}\sum_{i=1}^nh^{-r-1+j}(\|\hat{\bm{Z}}_i-\bm{Z}^0_i\|+\|\hat{\bm{Z}}-\bm{Z}^0\|).
\end{align*}
We conclude that $\|\hat{\mu}_j-\tilde{\mu}_j\|=O_p(h^{-r-1+j}\zeta_n)$ for $j=0, 1, 2$. Since $nh^r\to\infty$, by Lemma \ref{lem:muj tauj} one has $\tilde{\mu}_0=O_p(1), \tilde{\mu}_1=O_p(h^2)$, and $\tilde{\mu}_2=O_p(h^2)$. It follows that 
\begin{align*}
    |\hat{\sigma}^2-\tilde{\sigma}^2|&=|(\hat{\mu}_0-\hat{\mu}_1^\T\hat{\mu}_2^{-1}\hat{\mu}_1)-(\tilde{\mu}_0-\tilde{\mu}_1^\T\tilde{\mu}_2^{-1}\tilde{\mu}_1)|\\&=O_p(h^{-r-1}\zeta_n)
\end{align*}
and 
\[\|\frac{\hat{\mu}_1^\T\hat{\mu}_2^{-1}}{\hat{\sigma}^2}-\frac{\tilde{\mu}_1^\T\tilde{\mu}_2^{-1}}{\tilde{\sigma}^2}\|=O_p(h^{-r-2}\zeta_n).\]
Observe that 
\begin{align*}
    \hat{w}(\hat{\bm{Z}}_i, \hat{\bm{Z}}, h)-\tilde{w}(\bm{Z}_i^0, \bm{Z}^0, h)&=\{\hat{\sigma}^{-2}-\tilde{\sigma}^{-2}\}\rho_0(\bm{Z}^0_i, \bm{Z}^0)+\hat{\sigma}^{-2}\{\rho_0(\hat{\bm{Z}}_i, \hat{\bm{Z}})-\rho_0(\bm{Z}^0_i, \bm{Z}^0)\}-\\
    &\hspace{1.5em}\{\frac{\hat{\mu}_1^\T\hat{\mu}_2^{-1}}{\hat{\sigma}^2}-\frac{\tilde{\mu}_1^\T\tilde{\mu}_2^{-1}}{\tilde{\sigma}^2}\}\rho_1(\bm{Z}^0_i, \bm{Z}^0)-\\
    &\hspace{1.5em}\frac{\hat{\mu}_1^\T\hat{\mu}_2^{-1}}{\hat{\sigma}^2}\{\rho_1(\hat{\bm{Z}}_i, \hat{\bm{Z}})-\rho_1(\bm{Z}^0_i, \bm{Z}^0)\}
\end{align*}
and $E\{\rho_j(\bm{Z}^0_i, \bm{Z}^0)\}=O(h^j\mathbf{1}^{\oplus j}), E[\{\rho_j(\bm{Z}^0_i, \bm{Z}^0)\}^{\oplus 2}]=O(h^{2j-r}\mathbf{1}^{\oplus 2j})$ for $j=0, 1$. 
We have
\[\frac{1}{n}\sum_{i=1}^n\{\hat{w}(\hat{\bm{Z}}_i, \hat{\bm{Z}}, h)-\tilde{w}(\bm{Z}_i^0, \bm{Z}^0, h)\}=O_p(h^{-r-1}\zeta_n).\]
\end{proof}

\begin{lem}
    \label{lem:eivop1}
    Suppose \ref{itm:k1} and the last two parts of \ref{itm:l1}. If $\|\hat{\bm{Z}}_i-\bm{Z}^0_i\|=O_p(\zeta_n)$ for $i=1, \ldots, n$, $\|\hat{\bm{Z}}-\bm{Z}^0\|=O_p(\zeta_n)$, $nh^r\to\infty$, and $h^{-r-2}\zeta_n\to0$, then % the following holds,
    \[d\{\hat{v}_h(\hat{\bm{Z}}), \tilde{v}_h(\bm{Z}^0)\}=o_p(1).\]
\end{lem}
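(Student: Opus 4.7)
The plan is to adapt the proof strategy of Lemma \ref{lem:vn-vh-op1}, routing through the (conditionally) deterministic minimizer $v_h(\bm{Z}^0)$ of $Q_h(\cdot, \bm{Z}^0)$ so as to avoid working directly with the random well-separation margin in Assumption \ref{itm:l1}(iii). By the triangle inequality,
\begin{equation*}
d\{\hat{v}_h(\hat{\bm{Z}}), \tilde{v}_h(\bm{Z}^0)\} \leq d\{\hat{v}_h(\hat{\bm{Z}}), v_h(\bm{Z}^0)\} + d\{\tilde{v}_h(\bm{Z}^0), v_h(\bm{Z}^0)\},
\end{equation*}
and the second summand is $o_p(1)$ by Lemma \ref{lem:vn-vh-op1} applied conditionally at $\bm{z} = \bm{Z}^0$.

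To control the first summand, I would uniformly bound the difference of criterion functions. By the definitions of $\hat{Q}_h$ and $\tilde{Q}_h$,
\begin{equation*}
\hat{Q}_h(y, \hat{\bm{Z}}) - \tilde{Q}_h(y, \bm{Z}^0) = \frac{1}{n}\sum_{i=1}^n \{\hat{w}(\hat{\bm{Z}}_i, \hat{\bm{Z}}, h) - \tilde{w}(\bm{Z}^0_i, \bm{Z}^0, h)\} d^2(Y_i, y),
\end{equation*}
and total boundedness of $(\Omega, d)$ gives $d^2(Y_i, y) \lesssim 1$ uniformly. An absolute-value version of Lemma \ref{lem:wt} (whose proof proceeds termwise via Lipschitz and triangle inequalities, and so applies verbatim with absolute values inside the sum) then yields
\begin{equation*}
\sup_{y \in \Omega} |\hat{Q}_h(y, \hat{\bm{Z}}) - \tilde{Q}_h(y, \bm{Z}^0)| = O_p(h^{-r-1}\zeta_n) = o_p(1),
\end{equation*}
using $h^{-r-2}\zeta_n \to 0$. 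Combining this with the bound $\sup_y |\tilde{Q}_h(y, \bm{Z}^0) - Q_h(y, \bm{Z}^0)| = o_p(1)$ already established inside the proof of Lemma \ref{lem:vn-vh-op1} gives $\sup_y |\hat{Q}_h(y, \hat{\bm{Z}}) - Q_h(y, \bm{Z}^0)| = o_p(1)$. Conditioning on $\bm{Z}^0 = \bm{z}^0$, Assumption \ref{itm:l1}(ii) supplies a deterministic strictly-positive well-separation margin around $v_h(\bm{z}^0)$, and Corollary 3.2.3 of \citet{well:23} yields $d\{\hat{v}_h(\hat{\bm{Z}}), v_h(\bm{z}^0)\} = o_p(1)$ conditionally; unconditioning via bounded convergence delivers the first summand.

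The principal obstacle is the coexistence of random evaluation points $(\hat{\bm{Z}}, \bm{Z}^0)$ and a random target $\tilde{v}_h(\bm{Z}^0)$, which blocks a direct argmin-continuity argument under the merely-random well-separation of \ref{itm:l1}(iii). The triangle-inequality route above converts the claim into two standard M-estimation subproblems, each with a target that is deterministic once $\bm{Z}^0$ is conditioned on, for which the deterministic well-separation in \ref{itm:l1}(ii) suffices. A secondary but necessary technical point is upgrading Lemma \ref{lem:wt} to its absolute-value form; this is immediate by pulling absolute values inside the sum before applying the Lipschitz estimates used in that lemma's proof.
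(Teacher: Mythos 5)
Your proposal is correct in substance, and its first half coincides with the paper's: the paper likewise writes $\hat{Q}_h(y,\hat{\bm{Z}})-\tilde{Q}_h(y,\bm{Z}^0)=\frac{1}{n}\sum_{i=1}^n\{\hat{w}(\hat{\bm{Z}}_i,\hat{\bm{Z}},h)-\tilde{w}(\bm{Z}_i^0,\bm{Z}^0,h)\}d^2(Y_i,y)$ and bounds it by $O_p(h^{-r-1}\zeta_n)$, uniformly in $y$, using Lemma \ref{lem:wt} together with total boundedness of $\Omega$ — and your observation that this really requires the termwise (absolute-value) form of Lemma \ref{lem:wt}, which its proof delivers, is a fair and useful remark that applies to the paper's own argument as well. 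Where you diverge is the second half. The paper stops there: having $\sup_y|\hat{Q}_h(y,\hat{\bm{Z}})-\tilde{Q}_h(y,\bm{Z}^0)|=o_p(1)$, it invokes only the last two parts of \ref{itm:l1} (existence/uniqueness of $\tilde{v}_h$, $\hat{v}_h$ and the in-probability well-separation of $\tilde{v}_h$) and concludes by contradiction that $d\{\hat{v}_h(\hat{\bm{Z}}),\tilde{v}_h(\bm{Z}^0)\}=o_p(1)$, which is exactly why the lemma is stated under \ref{itm:k1} and \ref{itm:l1}(iii)--(iv) alone. You instead insert the intermediate target $v_h(\bm{Z}^0)$, use Lemma \ref{lem:vn-vh-op1} (conditionally on $\bm{Z}^0$, which is legitimate since the new $\bm{X}$ is independent of the sample) for $d\{\tilde{v}_h(\bm{Z}^0),v_h(\bm{Z}^0)\}$, and run the argmax theorem against the deterministic margin in \ref{itm:l1}(ii) for $d\{\hat{v}_h(\hat{\bm{Z}}),v_h(\bm{Z}^0)\}$. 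This buys you a cleaner well-separation step — you avoid arguing against the merely in-probability margin of \ref{itm:l1}(iii), which is the delicate point in the paper's ``proof by contradiction'' — but it costs hypotheses the lemma does not state: you need \ref{itm:l1}(i)--(ii), $h\to0$ (for the liminf in (ii) to bite), and, through Lemma \ref{lem:vn-vh-op1} and Lemma \ref{lem:muj tauj}, the distributional assumption \ref{itm:p1}. These are all available where the lemma is actually deployed (Theorem \ref{thm:pc}), so your argument is sound in context, but as a proof of the lemma exactly as stated it is strictly less economical than the paper's; it would be worth flagging the extra assumptions explicitly rather than presenting the detour as a pure simplification.
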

\begin{proof} 
For any $y\in\Omega$, observe that
\begin{align*}
    \hat{Q}_h(y, \hat{\bm{Z}})-\tilde{Q}_h(y, \bm{Z}^0)&=\frac{1}{n}\sum_{i=1}^n\hat{w}(\hat{\bm{Z}}_i, \hat{\bm{Z}}, h)d^2(Y_i, y)-\frac{1}{n}\sum_{i=1}^n\tilde{w}(\bm{Z}_i^0, \bm{Z}^0, h)d^2(Y_i, y)\\
    &=\frac{1}{n}\sum_{i=1}^n\{\hat{w}(\hat{\bm{Z}}_i, \hat{\bm{Z}}, h)-\tilde{w}(\bm{Z}_i^0, \bm{Z}^0, h)\}d^2(Y_i, y)\\
    &=O_p(h^{-r-1}\zeta_n)
\end{align*}
by Lemma \ref{lem:wt} and the fact that $\Omega$ is totally bounded. By the last two parts of Assumption \ref{itm:l1} and employing proof by contradiction, we conclude that $d\{\hat{v}_h(\hat{\bm{Z}}), \tilde{v}_h(\bm{Z}^0)\}=o_p(1)$.
\end{proof}

\begin{proof}[Proof of Proposition 2]
Define $V_n(y)=\hat{Q}_h(y, \hat{\bm{Z}})-\tilde{Q}_h(y, \bm{Z}^0)$. Letting $D_i(y, \bm{Z}^0)=d^2(Y_i, y)-d^2\{Y_i, \tilde{v}_h(\bm{Z}^0)\}$, we have
\begin{equation}
    \label{eq:vn}
    |V_n(y)-V_n\{\tilde{v}_h(\bm{Z}^0)\}|=|\frac{1}{n}\sum_{i=1}^n\{\hat{w}(\hat{\bm{Z}}_i, \hat{\bm{Z}}, h)-\tilde{w}(\bm{Z}_i^0, \bm{Z}^0, h)\}D_i(y, \bm{Z}^0)|.
\end{equation}
By Lemma \ref{lem:wt}, and using the fact that $|D_i(y, \bm{Z}^0)|\leq2\mathrm{diam}(\Omega)d\{y, \tilde{v}_h(\bm{Z}^0)\}$, the right-hand side of \eqref{eq:vn} is $O_p[d\{y, \tilde{v}_h(\bm{Z}^0)\} h^{-r-1}\zeta_n]$. Thus, we can define
\[B_R=\{\sup_{d\{y, \tilde{v}_h(\bm{Z}^0)\}<\delta}|V_n(y)-V_n\{\tilde{v}_h(\bm{Z}^0)\}|\leq R\delta h^{-r-1}\zeta_n\}\]
for $R>0$, so that $P(B_R^C)\to0$.

Set $t_n=(h^{-r-1}\zeta_n)^{-\gamma_3/2(\gamma_3-1)}$. For each $n$, the metric space $\Omega$, excepting  the point $\tilde{v}_h(\bm{Z}^0)$,  can be partitioned into the ``shells'', 
\[S_{j, n}(\bm{Z}^0)=\{y: 2^{j-1}<t_nd\{y, \tilde{v}_h(\bm{Z}^0)\}^{\gamma_3/2}\leq2^j\},\]
with $j$ varying over the integers. Choose $\eta=(\eta_3/2)^{\gamma_3/2}>0$ small enough that Assumption \ref{itm:l3} (iii) holds and $\delta<\eta_3$. If $t_nd\{\hat{v}_h(\hat{\bm{Z}}), \tilde{v}_h(\bm{Z}^0)\}^{\gamma_3/2}$ is larger than $2^M$ for a given integer $M$, then $\hat{v}_h(\hat{\bm{Z}})$ is in one of the shells $S_{j, n}(\bm{Z}^0)$ with $j>M$. In that case the supremum of the map $y\mapsto\hat{Q}_h(\tilde{v}_h(\bm{Z}^0), \hat{\bm{Z}})-\hat{Q}_h(y, \hat{\bm{Z}})$ over this shell is nonnegative by the optimality of $\hat{v}_h(\hat{\bm{Z}})$. Thus for every $\eta>0$,
\begin{align}
    &P(t_nd\{\hat{v}_h(\hat{\bm{Z}}), \tilde{v}_h(\bm{Z}^0)\}^{\gamma_3/2}>2^M)\nonumber\\
    \leq &\,\, P(B_R^C)+P(2d\{\hat{v}_h(\hat{\bm{Z}}), \tilde{v}_h(\bm{Z}^0)\}>\eta_3)\label{eq:ptn}\\
    & + \sum_{\substack{j>M\\2^j\leq\eta t_n}}P(\{\sup_{y\in S_{j, n}(\bm{Z}^0)}[\hat{Q}_h\{\tilde{v}_h(\bm{Z}^0), \hat{\bm{Z}}\}-\hat{Q}_h(y, \hat{\bm{Z}})]\geq0\}\cap B_R),\nonumber
\end{align}
where $P(B_R^C)\to0$ as discussed previously and the second term goes to zero by Lemma \ref{lem:eivop1}. Then for every $j$ involved in the sum, by Assumption \ref{itm:l3} (iii) we have, for every $y\in S_{j, n}(\bm{Z}^0)$,
\[\tilde{Q}_h\{\tilde{v}_h(\bm{Z}^0), \bm{Z}^0\}-\tilde{Q}_h(y, \bm{Z}^0)\lesssim-\frac{2^{2(j-1)}}{t_n^2}.\]
In terms of the centered process $V_n$, by Markov's inequality the right-hand side of \eqref{eq:ptn} may be bounded by
\begin{align*}
&\sum_{\substack{j>M\\2^j\leq\eta t_n}}P(\{\sup_{y\in S_{j, n}(\bm{Z}^0)}[V_n(y)-V_n\{\tilde{v}_h(\bm{Z}^0)\}]\gtrsim\frac{2^{2(j-1)}}{t_n^2}\}\cap B_R)\\&\lesssim\sum_{j>M}\frac{t_n^2}{2^{2j}}(\frac{2^j}{t_n})^{2/\gamma_3} h^{-r-1}\zeta_n\\&=\sum_{j>M}(\frac{1}{4^{(\gamma_3-1)/\gamma_3}})^{j},
\end{align*}
which converges to zero for every $M=M_n\to\infty$ since $\gamma_3>1$. Hence
\[d\{\hat{v}_h(\hat{\bm{Z}}), \tilde{v}_h(\bm{Z}^0)\}=O_p(t_n^{-2/\gamma_3})=O_p\{(h^{-r-1}\zeta_n)^{1/(\gamma_3-1)}\}.\]
\end{proof}

\subsection{Proof of Theorem \ref{thm:pc}}
\begin{proof}[Proof of Theorem \ref{thm:pc}]
Combining Theorems, \ref{thm:DNN} and Proposition \ref{prop:lfrpc}, and \ref{prop:eiv}, it follows from the triangle inequality that
\begin{align*}
    d\{\hat{m}(\bm{X}), m(\bm{X})\} & = d\{\hat{v}_h(\hat{\bm{Z}}), v(\bm{Z}^0)\} \\
    & \leq d\{\tilde{v}_h(\bm{Z}^0), v(\bm{Z}^0)\}+d\{\hat{v}_h(\hat{\bm{Z}}), \tilde{v}_h(\bm{Z}^0)\}\\
    & = O_p\{h^{2/(\gamma_1-1)}+(nh^r)^{-1/\{2(\gamma_2-1)\}}+(h^{-r-1}\zeta_n)^{1/(\gamma_3-1)}\},
\end{align*}
where $\bm{Z}^0 = \bm{g}_0(\bm{X})$ and $\hat{\bm{Z}} = \hat{\bm{g}}(\bm{X})$.
\end{proof}

\subsection{Additional Proof}\label{subsec:remark2}
If a supremum rate of the type   $E(\sup_{i=1,\ldots,n}\|\hat{\bm{Z}}_i-\bm{Z}^0_i\|)=O(\tau_n)$ is available, one can substitute Lemma \ref{lem:wt} with Lemma \ref{lem:rmk} below. This substitution would lead to an improvement in the convergence rate outlined in Proposition \ref{prop:eiv}, resulting in the rate  $(h^{-1}\tau_n)^{1/(\gamma_3-1)}$.  An improved convergence rate would also be achieved in Theorem \ref{thm:pc}, with the new rate  $h^{2/(\gamma_1-1)}+(nh^r)^{-1/{2(\gamma_2-1)}}+(h^{-1}\tau_n)^{1/(\gamma_3-1)}$.

\begin{lem}
\label{lem:rmk}
Suppose \ref{itm:k1} and \ref{itm:p1} hold. If $E(\sup_{i=1,\ldots,n}\|\hat{\bm{Z}}_i-\bm{Z}^0_i\|)=O(\tau_n)$, $nh^{r}\rightarrow\infty$ and $h^{-r-1}\tau_n\to0$, then the weight functions as defined in \eqref{eq:wtilde} and \eqref{eq:what} satisfy
\[\frac{1}{n}\sum_{i=1}^n\{\hat{w}(\hat{\bm{Z}}_i, \hat{\bm{Z}}, h)-\tilde{w}(\bm{Z}_i^0, \bm{Z}^0, h)\}=O_p(h^{-1}\tau_n).\]
\end{lem}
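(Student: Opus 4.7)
The plan is to mirror the algebraic structure of the proof of Lemma \ref{lem:wt}, but to sharpen every estimate of $\|\hat{\mu}_j - \tilde{\mu}_j\|$ by an extra factor of $h^r$, using the compact support of the kernel from Assumption \ref{itm:k1} together with the density regularity in Assumption \ref{itm:p1}. The supremum hypothesis $E(\sup_i \|\hat{\bm{Z}}_i - \bm{Z}_i^0\|) = O(\tau_n)$, combined with Markov's inequality, yields $\sup_i \|\hat{\bm{Z}}_i - \bm{Z}_i^0\|_\infty = O_p(\tau_n)$; together with $\|\hat{\bm{Z}} - \bm{Z}^0\| = O_p(\tau_n)$ this confines, with probability tending to one, all perturbations within a band of width $c\tau_n$. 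Since $h^{-r-1}\tau_n \to 0$ implies $\tau_n \ll h$, we may work on a high-probability event $E_n$ where every displacement is bounded by $c\tau_n$ for some constant $c$.

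The key new step is to exploit support localization. Because $K$ has compact support $[-1,1]^r$, the difference $\rho_j(\hat{\bm{Z}}_i, \hat{\bm{Z}}) - \rho_j(\bm{Z}_i^0, \bm{Z}^0)$ vanishes unless at least one of $K_h(\hat{\bm{Z}}_i - \hat{\bm{Z}})$ or $K_h(\bm{Z}_i^0 - \bm{Z}^0)$ is non-zero. On $E_n$ this localization forces $\|\bm{Z}_i^0 - \bm{Z}^0\|_\infty \leq h + c\tau_n$. For each contributing index, the same Lipschitz computation as in Lemma \ref{lem:wt} delivers $\|\rho_j(\hat{\bm{Z}}_i, \hat{\bm{Z}}) - \rho_j(\bm{Z}_i^0, \bm{Z}^0)\| \lesssim h^{-r-1+j}\tau_n$. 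However, the empirical fraction of contributing indices is now controlled: by Assumption \ref{itm:p1} the density $f_{\bm{Z}^0}$ is bounded, so $P(\|\bm{Z}_i^0 - \bm{Z}^0\|_\infty \leq h + c\tau_n) \lesssim h^r$, and since $nh^r \to \infty$, a conditional Chebyshev argument gives $n^{-1}\sum_{i=1}^n \mathbf{1}_{\|\bm{Z}_i^0 - \bm{Z}^0\|_\infty \leq h + c\tau_n} = O_p(h^r)$. Multiplying yields the sharpened bounds $\|\hat{\mu}_j - \tilde{\mu}_j\| = O_p(h^{j-1}\tau_n)$ for $j=0,1,2$, which are precisely a factor of $h^r$ smaller than their counterparts in Lemma \ref{lem:wt}.

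The remainder is to cascade these sharpened estimates through the same Sherman--Morrison-type algebra used in Lemma \ref{lem:wt}. Using $\tilde{\mu}_0 = O_p(1)$, $\tilde{\mu}_1 = O_p(h^2)$, and $\tilde{\mu}_2 = O_p(h^2)$ from Lemma \ref{lem:muj tauj}, one obtains $|\hat{\sigma}^2 - \tilde{\sigma}^2| = O_p(h^{-1}\tau_n)$ and $\|\hat{\mu}_1^\T\hat{\mu}_2^{-1}/\hat{\sigma}^2 - \tilde{\mu}_1^\T\tilde{\mu}_2^{-1}/\tilde{\sigma}^2\| = O_p(h^{-2}\tau_n)$. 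Substituting into the four-term decomposition of $\hat{w}(\hat{\bm{Z}}_i, \hat{\bm{Z}}, h) - \tilde{w}(\bm{Z}_i^0, \bm{Z}^0, h)$ from Lemma \ref{lem:wt} and averaging over $i$ -- so that averages of $\rho_j$ collapse to $\hat{\mu}_j$ and $\tilde{\mu}_j$ already analyzed -- each of the four contributions is $O_p(h^{-1}\tau_n)$, yielding the stated bound.

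The principal obstacle is the coupling between the random event bounding $\sup_i\|\hat{\bm{Z}}_i - \bm{Z}_i^0\|$ and the deterministic support condition of the kernel. Unlike the pointwise bound used in Lemma \ref{lem:wt}, where the perturbation enters through a union bound inside the sum and no localization is exploited, here the uniform-in-$i$ control is essential: it permits a single enlargement of the kernel support by $c\tau_n$ that applies simultaneously to all summands, after which the density bound from Assumption \ref{itm:p1} supplies the decisive $h^r$ factor. Once this coupling is in place the computation becomes a routine bookkeeping exercise tracking how the $h^r$ gain propagates through each step of the original proof.
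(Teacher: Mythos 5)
Your proposal is correct and follows essentially the same route as the paper's proof: exploit the compact support of $K$ so that only indices with $\|\bm{Z}_i^0-\bm{Z}^0\|\lesssim h$ contribute, use the uniform (supremum) control of the perturbations together with the density in \ref{itm:p1} to show the fraction of contributing indices is $O_p(h^r)$, obtain the sharpened bounds $\|\hat{\mu}_j-\tilde{\mu}_j\|=O_p(h^{j-1}\tau_n)$, and push them through the same decomposition as in Lemma \ref{lem:wt}. The only cosmetic difference is that you localize on a high-probability event with $\tau_n\ll h$ and enlarge the kernel support band by $c\tau_n$, while the paper bounds the perturbed indicator by $\mathbf{I}(\|\bm{Z}_i^0-\bm{Z}^0\|\leq 3h)$ plus two Markov-controlled remainder indicators; these devices are interchangeable.
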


\begin{proof}
Recall that in Lemma \ref{lem:wt}, for any $\bm{x}_1, \bm{x}_2, \bm{z}_1, \bm{z}_2\in\mathcal{T}^r$, one has
\begin{equation*}
    \|\rho_j(\bm{x}_1, \bm{z}_1)-\rho_j(\bm{x}_2, \bm{z}_2)\| \lesssim h^{-r-1+j}(\|\bm{x}_1-\bm{x}_2\|+\|\bm{z}_1-\bm{z}_2\|).
\end{equation*}
One thus has
\begin{align*}
    \|\hat{\mu}_j-\tilde{\mu}_j\|&=\|\frac{1}{n}\sum_{i=1}^n\rho_j(\hat{\bm{Z}}_i, \hat{\bm{Z}})-\frac{1}{n}\sum_{i=1}^n\rho_j(\bm{Z}^0_i, \bm{Z}^0)\|\\
    &\leq \frac{1}{n}\sum_{i=1}^n\|\rho_j(\hat{\bm{Z}}_i, \hat{\bm{Z}})-\rho_j(\bm{Z}^0_i, \bm{Z}^0)\|\cdot[\mathbf{I}\{\rho_j(\hat{\bm{Z}}_i, \hat{\bm{Z}})\ne 0\} + \mathbf{I}\{\rho_j(\bm{Z}^0_i, \bm{Z}^{0})\ne 0\}] \\
    &\leq\frac{1}{n h^{r+1-j}} ( \sup_{i=1,\ldots,n}\|\hat{\bm{Z}}_i-\bm{Z}_i^0\| + \|\hat{\bm{Z}}-\bm{Z}^0\|)\\
    & \hspace{5em} \sum_{i=1}^n[\mathbf{I}(\|\hat{\bm{Z}}_i-\hat{\bm{Z}}\|\leq h) + \mathbf{I}(\|\bm{Z}_i^0-\bm{Z}^0\|\leq h)]\\
    &\lesssim\frac{\tau_n}{n h^{r+1-j}}\sum_{i=1}^n [\mathbf{I}(\|\hat{\bm{Z}}_i-\hat{\bm{Z}}\|\leq h) + \mathbf{I}(\|\bm{Z}_i^0-\bm{Z}^0\|\leq h) ].
\end{align*}
Given $\bm{Z}^0$, observe 
\[\frac{1}{nh^r}\sum_{i=1}^n \mathbf{I}(\|\bm{Z}_i^0-\bm{Z}^0\|\leq h)\stackrel{P}{\rightarrow} c_1 f_{\bm{Z}^0}(\bm{Z}^0)\]
for some constant $c_1<\infty$, and that
\begin{align*}
    &\frac{1}{nh^r}\sum_{i=1}^n\mathbf{I}(\|\hat{\bm{Z}}_i-\hat{\bm{Z}}\|\leq h)\\
    \leq & \frac{1}{nh^r}\sum_{i=1}^n\mathbf{I}(\|\bm{Z}^0_i-\bm{Z}^0\| \leq 3h) + \mathbf{I}(\|\hat{\bm{Z}}_i-\bm{Z}_i^0\| \geq h) + \mathbf{I}(\|\hat{\bm{Z}} -\bm{Z}^0\|\geq h)
\end{align*}
converges to $c_2 f(\bm{Z}_i^0)$ in probability for some constant $c_2<\infty$. Here
\[\frac{1}{n h^{r}}\sum_{i=1}^n\mathbf{I}(\|\hat{\bm{Z}}_i-\bm{Z}_i^0\| \geq h)\text{ and }\frac{1}{nh^{r}}\sum_{i=1}^n\mathbf{I}(\|\hat{\bm{Z}} - \bm{Z}^0\|\geq h)\] are both $o_p(1)$ since by the Markov inequality and Theorem \ref{thm:DNN},
\begin{align*}
    &E\{\frac{1}{n h^{r}}\sum_{i=1}^n\mathbf{I}(\|\hat{\bm{Z}}_i-\bm{Z}_i^0\| \geq h)\}\leq \frac{1}{n h^{r+1}}\sum_{i=1}^n E\|\hat{\bm{Z}}_i-\bm{Z}_i^0\|\leq \frac{1}{n}\sum_{i=1}^n h^{-r-1} \tau_n \rightarrow 0,\\
    &E\{\frac{1}{n h^{r}}\sum_{i=1}^n\mathbf{I}(\|\hat{\bm{Z}} - \bm{Z}^0\| \geq h)\}\leq \frac{1}{n h^{r+1}}\sum_{i=1}^n E\|\hat{\bm{Z}} - \bm{Z}^0\|\leq \frac{1}{n}\sum_{i=1}^n h^{-r-1} \tau_n \rightarrow 0.
\end{align*}
This implies that $\|\hat{\mu}_j-\tilde{\mu}_j\|=O_p(h^{j-1}\tau_n)$ for $j=0, 1, 2$. It follows that 
\begin{equation*}
    |\hat{\sigma}^2-\tilde{\sigma}^2|=O_p(h^{-1}\tau_n) \quad\text{and}\quad \|\frac{\hat{\mu}_1^\T\hat{\mu}_2^{-1}}{\hat{\sigma}^2}-\frac{\tilde{\mu}_1^\T\tilde{\mu}_2^{-1}}{\tilde{\sigma}^2}\|=O_p(h^{-2}\tau_n).
\end{equation*}
Thus, due to $E\{\rho_j(\bm{Z}^0_i, \bm{Z}^0)\}=O(h^j\mathbf{1}^{\oplus j})$ and $E[\{\rho_j(\bm{Z}^0_i, \bm{Z}^0)\}^{\oplus 2}]=O(h^{2j-r}\mathbf{1}^{\oplus 2j})$ for $j=0, 1$, we have
\[\frac{1}{n}\sum_{i=1}^n\{\hat{w}(\hat{\bm{Z}}_i, \hat{\bm{Z}}, h)-\tilde{w}(\bm{Z}_i^0, \bm{Z}^0, h)\}=O_p(h^{-1}\tau_n).\]
\end{proof}

\subsection{Simulations for Networks}
\label{supp:network}
We consider graph Laplacians of undirected weighted networks with a fixed number of nodes $m$ and bounded edge weights as a second type of metric space-valued responses; see Example \ref{exm:mat}. The space of graph Laplacians \citep{mull:22:11} is
\begin{equation}\label{eq:laplacian}
    \Omega=\{Y=(y_{i j}): Y=Y^{\mathrm{T}} ; Y 1_m=0_m ;-W \leq y_{i j} \leq 0 \text{ for }i \neq j\},
\end{equation}
where $1_m$ and $0_m$ are the $m$-vectors of ones and zeroes, respectively. We aim to construct a generative model that produces random graph Laplacians $Y$ along with an Euclidean predictor $\bm{X}\in\mathbb{R}^9$. 

Denote the half vectorization excluding the diagonal of a symmetric and centered matrix by vech, with inverse operation $\text{vech}^{-1}$. By the symmetry and centrality of graph Laplacians as per \eqref{eq:laplacian}, every graph Laplacian $Y$ is fully determined by its upper (or lower) triangular part, which can be further vectorized into $\text{vech}(Y)$, a vector of length $d=m(m-1) / 2$. The true regression function can thus be defined as
$$m(\bm{X})=\text{vech}^{-1}[\{-E(l_1|\bm{X}), \ldots, -E(l_d|\bm{X})\}^\T].$$

To generate the random response $Y$, each entry of the random vector $(l_1, \ldots, l_d)^\T$ is first generated using beta distributions whose parameters depend on the predictor $\bm{X}$. The random response $Y$ is then generated conditional on $\bm{X}$ through an inverse half vectorization $\text{vech}^{-1}$ applied to $(-l_1, \ldots,-l_d)^{\T}$. The space of graph Laplacians $\Omega$ is not a vector space. Instead, it is a bounded, closed, and convex subset in $\mathbb{R}^{m^2}$ of dimension $m(m-1)/2$. To ensure that the random response $Y$ generated in simulations resides in $L_m$, the off-diagonal entries $l_j$, $j = 1, \ldots, d$, need to be nonpositive and bounded below as per \eqref{eq:laplacian}. To this end, we consider
\begin{align*}
    &Y=\text{vech}^{-1}(-l_1, \ldots,-l_d), \text{ where } l_j \stackrel{\mathrm{i.i.d.}}{\sim} \text{Beta}(\alpha_1, \alpha_2),\\
    &\alpha_1 = X_8\sin(\pi X_1) +(1-X_8)\cos(\pi X_2),\\
    &\alpha_2 = X_4^2 X_7+X_5^2(1-X_7),
\end{align*}
where components of the predictor $\bm{X}\in \mathbb{R}^9$ are distributed as
\begin{align*}
        &X_1 \sim U(0,1),\ X_2 \sim U(-1/2,1/2),\ X_3 \sim U(1,2)\\
        &X_4 \sim N(0,1),\ X_5 \sim N(0,1),\ X_6 \sim N(5,5)\\
        &X_7 \sim \mathrm{Ber}(0.4),\ X_8 \sim \mathrm{Ber}(0.3),\ X_9 \sim \mathrm{Ber}(0.6).
\end{align*}
The true regression function is thus
\[m(\bm{X})=\text{vech}^{-1}(-\alpha_1/(\alpha_1 + \alpha_2), \ldots, -\alpha_1/(\alpha_1 + \alpha_2)).\]

\begin{figure}[t]
\single
    \centering
    \includegraphics[width=0.8\linewidth]{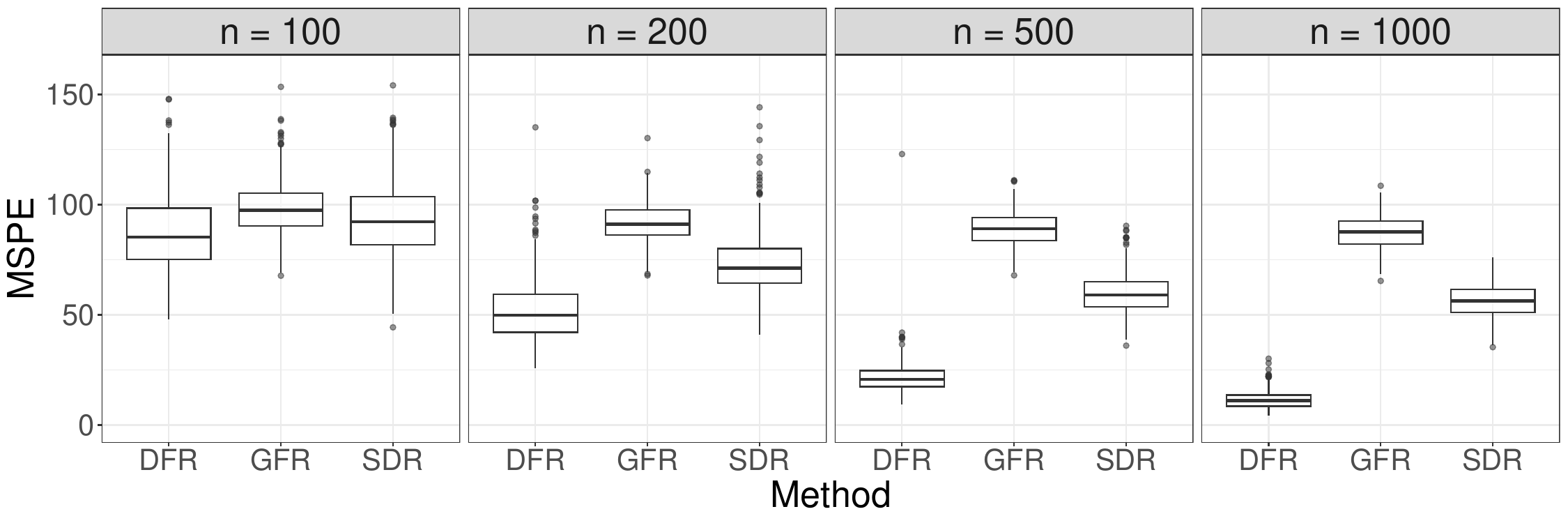}
   \caption{Boxplot of mean squared prediction errors for $Q=500$ Monte Carlo runs using deep \f regression (DFR), global \f regression (GFR) \citep{mull:22:11} and sufficient dimension reduction (SDR) \citep{zhan:21:1} for network responses.}
   \label{fig:simu2}
\end{figure}

Similar to simulations for the distributional data, we investigated sample sizes $n = 100, 200, 500, 1000$, with $Q=500$ Monte Carlo runs and set $r = 2$ as the dimension of low-dimensional representations of random objects. We compare the proposed DFR model with global \f regression (GFR) and sufficient dimension reduction (SDR). Figure \ref{fig:simu2} displays the MSPE for various sample sizes across all Monte Carlo runs, while Table \ref{tab:simu2} summarizes their averages. We observe that MSPE decreases with increasing sample size, highlighting the convergence of the proposed methods. Both boxplots and the table illustrate the superior performance of the proposed DFR model over GFR and SDR, even with small sample sizes. Despite the simulated network data not residing on a 2-dimensional manifold, the DFR model exhibits remarkable robustness as the sample size increases. Indeed, as the sample size grows, the DFR model proves to be a superior method for handling multivariate predictors and network-valued responses in this simulation example.

\begin{table}[tb]
\single
\centering
\caption{Average mean squared prediction error of deep \f regression (DFR), global \f regression (GFR) \citep{mull:22:11} and sufficient dimension reduction (SDR) \citep{zhan:21:1} for network responses.}
\label{tab:simu2}
\begin{tabular}{c|ccc}
\hline
$n$ & DFR & GFR & SDR\\
\hline
100 & 88.025 & 97.994 & 94.843\\
200 & 52.404 & 91.872 & 73.035\\
500 & 21.486 & 88.940 & 59.609\\
1000 & 11.701 & 87.416 & 56.257\\
\hline
\end{tabular}
\end{table}

\subsection{Comparison of Different Manifold Learning Methods}
\label{supp:sim:manifold}
In this section, we assess the performance of the DFR model across various mainstream manifold learning methods, including t-SNE \citep{van:08}, UMAP \citep{mcin:18}, Laplacian eigenmaps (LE) \citep{belk:03}, and diffusion maps (DM) \citep{coif:06}. The goal is to evaluate the impact of these methods on the predictive performance of the DFR model.

The simulations follow the same generative models and settings as described in subsection \ref{sec:simu:dist} of the main text. In each case, we replaced the ISOMAP step in the DFR model with one of the alternative manifold learning methods, while keeping the rest of the model unchanged. This consistent setup ensures a fair comparison of the effectiveness of each manifold learning method. 

For t-SNE, we used the \texttt{Rtsne} function from the \texttt{Rtsne} package in R and for UMAP the \texttt{umap} function from the \texttt{uwot} package in R. Both functions accept a distance matrix as input and directly output the low-dimensional representation. No modifications were made to the code, and all parameters were set to their default values. For LE and DM, no R functions natively accept a distance matrix as input, requiring us to adapt existing algorithms. For LE, we referred to the algorithm described in \citet{ghoj:23} and used the \texttt{do.llle} function from the \texttt{Rdimtools} package as a reference. For DM, we referred to the algorithm in \citet{ghoj:23} and adapted the \texttt{diffuse} function from the \texttt{diffusionMap} package. Similarly, all parameters for LE and DM were set to their default values as specified in the reference implementations.

Tables~\ref{tab:simu3} and \ref{tab:simu3:median} summarize the average mean squared prediction errors (AMSPE) and median mean squared prediction errors (MMSPE), respectively, for different manifold learning methods. Both tables show that ISOMAP consistently achieves the lowest AMSPE and MMSPE across all sample sizes. The alternative manifold learning methods did not perform as well as ISOMAP in our simulations. In Table~\ref{tab:simu3}, we observe that t-SNE and LE fail to converge when $n = 1000$. These convergence issues arise due to the presence of outliers in the simulation runs: t-SNE has 3 outliers out of 500 runs and LE has 1. Excluding these outliers, the results demonstrate convergence for all methods. Table~\ref{tab:simu3:median} further supports this observation, as MMSPE values exclude the impact of outliers, showcasing consistent performance across different methods. 

\begin{table}[tb]
\single
\centering
\caption{Average mean squared prediction error of deep \f regression using various manifold learning techniques, including ISOMAP \citep{tene:00}, t-SNE \citep{van:08}, UMAP \citep{mcin:18}, Laplacian eigenmaps (LE) \citep{belk:03} and diffusion maps (DM) \citep{coif:06}.}
\label{tab:simu3}
\begin{tabular}{c|ccccc}
\hline
$n$& ISOMAP & tSNE & UMAP & LE & DM\\
\hline
100 & 34.000 & 59.814 & 52.939 & 85.936 & 37.819\\
200 & 20.976 & 27.387 & 27.983 & 26.932 & 26.545\\
500 & 12.544 & 14.473 & 15.743 & 16.822 & 22.070\\
1000 & 7.874 & 23.845 & 12.367 & 18.194 & 16.113\\
\hline
\end{tabular}
\end{table}

\begin{table}[tb]
\single
\centering
\caption{Median mean squared prediction error of deep \f regression using various manifold learning techniques, including ISOMAP \citep{tene:00}, t-SNE \citep{van:08}, UMAP \citep{mcin:18}, Laplacian eigenmaps (LE) \citep{belk:03} and diffusion maps (DM) \citep{coif:06}.}
\label{tab:simu3:median}
\begin{tabular}{c|ccccc}
\hline
$n$ & ISOMAP & tSNE & UMAP & LE & DM\\
\hline
100 & 26.792 & 30.773 & 38.193 & 36.015 & 29.772\\
200 & 16.182 & 17.010 & 21.715 & 21.960 & 18.063\\
500 & 8.363 & 9.790 & 11.018 & 12.635 & 9.931\\
1000 & 4.674 & 6.228 & 5.998 & 8.707 & 6.298\\
\hline
\end{tabular}
\end{table}

\subsection{Robustness to Violations of Manifold Assumption}
To evaluate the robustness of DFR model under potential violations of the low-dimensional manifold assumption, we conducted additional simulations introducing varying levels of perturbation. The generative model is similar to that described in Section \ref{sec:simu:dist}, but includes a perturbation parameter $\nu$ to alter the mean of the true probability distribution, creating deviations from a perfect low-dimensional manifold.

Consider sample sizes $n=100,200,500,1000$, with $Q=500$ Monte Carlo runs. In each run, predictors $\bm{X}_i\in\mathbb{R}^{12}$ are generated independently as
\begin{align*}
        &X_{i1} \sim U(-1,0),\ X_{i2} \sim U(0,1),\ X_{i3} \sim U(1,2)\\
        &X_{i4} \sim N(1,1),\ X_{i5} \sim N(-1,1),\ X_{i6} \sim N(0,3)\\
        &X_{i7} \sim \mathrm{Bernoulli}(0.6),\ X_{i8} \sim \mathrm{Bernoulli}(0.7),\ X_{i9} \sim \mathrm{Bernoulli}(0.3)\\
        &X_{i10}\sim \mathrm{Beta}(2,2), X_{i11} \sim \mathrm{Beta}(3,2), X_{i12}\sim \mathrm{Beta}(2,3).
\end{align*}
For each $i$, the mean $\eta_i$ and standard deviation $\sigma_i$ of the Gaussian distribution are generated conditionally on $\bm{X}_i$ as
\begin{align*}
         &\eta_i|\bm{X}_i\sim N(\mu, \nu^2) \text{ where }\\
         &\mu=3 X_{i10} + 3 X_{i8}\{\cos (\pi X_{i1})+\sin(\pi X_{i2})\}  + X_{i7}(X_{i4}+X_{i5}) ,\\
         & \nu \in \{0,0.5,1,1.5,2\},\\
         &\sigma_i= 3X_{i11} + 2 X_{i8}\{ \sin (\pi X_{i1}) +\cos(\pi X_{i2})\} + X_{i7}(X_{i4}^2 + X_{i5}^2).
\end{align*}
The corresponding distributional response is constructed as $Y_i = \eta_i + \sigma_i \Phi^{-1}$. Increasing the perturbation parameter $\nu$ leads to greater deviation from the low-dimensional manifold, with $\nu=0$ representing a perfect manifold structure. To mimic real-world scenarios where direct observations of the underlying distribution are unavailable, we sampled 100 observations $\{y_{ij}\}_{j=1}^{100}$ from each distribution $Y_i$ and use the empirical measure in lieu of the true response \citep{zhou:23}.

We compared the proposed deep \f regression (DFR) against global \f regression (GFR) \citep{mull:19:6} and sufficient dimension reduction (SDR) \citep{zhan:21:1}. The quality of the estimated regression function was assessed using the average mean squared prediction error (AMSPE), averaged over $Q=500$ Monte Carlo runs. Table \ref{tab:simu5} summarizes the AMSPEs across varying perturbation levels and sample sizes. We observe that DFR is quite robust to perturbations and consistently outperforms GFR and SDR across different perturbation levels, especially for larger sample sizes. The only exception is the case of $n=100$ and $\nu=2$ where DFR is slightly inferior to GFR. As $n$ increases, DFR effectively detects the true signal and mitigates the impact of the perturbation, demonstrating its capability to de-noise the data.

\begin{table}[tb]
\single
\centering
\caption{Average mean squared prediction error of deep \f regression (DFR), global \f regression (GFR) \citep{mull:19:6} and sufficient dimension reduction (SDR) \citep{zhan:21:1} for distributional responses with different perturbation levels $\nu$.}
\label{tab:simu5}
\begin{tabular}{c|c|ccc}
\hline
$n$ & $\nu$ & DFR & GFR & SDR \\
\hline
\multirow{5}{*}{100} & 0.0 & 5.737 & 8.025 & 8.127\\
& 0.5 & 5.961 & 8.058 & 8.172\\
& 1.0 & 6.621 & 8.167 & 8.419\\
& 1.5 & 7.848 & 8.351 & 8.858\\
& 2.0 & 9.376 & 8.613 & 9.557\\
\hline
\multirow{5}{*}{200} & 0.0 & 3.281 & 7.376 & 6.433\\
& 0.5 & 3.428 & 7.386 & 6.437\\
& 1.0 & 3.935 & 7.434 & 6.530\\
& 1.5 & 4.767 & 7.517 & 6.701\\
& 2.0 & 5.715 & 7.635 & 6.975\\
\hline
\multirow{5}{*}{500} & 0.0 & 1.716 & 6.967 & 5.547\\
& 0.5 & 1.921 & 6.977 & 5.550\\
& 1.0 & 2.461 & 6.997 & 5.576\\
& 1.5 & 3.118 & 7.032 & 5.628\\
& 2.0 & 4.044 & 7.079 & 5.714\\
\hline
\multirow{5}{*}{1000} & 0.0 & 1.322 & 7.029 & 5.329\\
& 0.5 & 1.450 & 7.033 & 5.337\\
& 1.0 & 1.867 & 7.044 & 5.339\\
& 1.5 & 2.497 & 7.061 & 5.353\\
& 2.0 & 3.383 & 7.086 & 5.385\\
\hline
\end{tabular}
\end{table}

\subsection{Choice of Hyperparameters}
\label{supp:hyper}
The hyperparameters for ISOMAP and deep neural networks can be selected using a grid search over the candidate values listed in Table \ref{tab:hyper}. The optimal combination of hyperparameters is chosen to minimize the mean squared prediction error for the validation dataset.

\begin{table}[t]
    \single
    \centering
    \caption{Hyperparameter settings.}
    \label{tab:hyper}
    \begin{tabular}{lcccc}
    \hline
    Number of neighbors & 10 & 20 & 30 & 50 \\
    Number of layers    &  3 & 4  & 5  & 6  \\
    Number of neurons   &  8 & 16 & 32 & 64 \\
    Dropout rate        & 0.1& 0.2 & 0.3 & 0.4 \\
    Learning rate       & 0.0001 & 0.0005& 0.001& 0.005\\
    \hline
    \end{tabular}
\end{table}

\subsection{Data Application for Human Mortality Data}
\label{supp:mortality}
We illustrate the proposed method using the age-at-death distributions of 162 countries in 2015, obtained from United Nations databases (\url{http://data.un.org/}) and UN World Population Prospects 2022 Databases (\url{https://population.un.org/wpp/downloads}). For each country and age group, the life table reports the number of deaths aggregated in five-year intervals. These can be viewed as histograms with equal bin widths of five years, representing the number of deaths per age interval. We applied the \texttt{frechet} package \citep{chen:20} to obtain densities by local linear smoothing of the histograms, followed by standardization via trapezoid integration.

The age-at-death distribution in each country is influenced by various factors, such as economics, healthcare systems, and social and environmental factors. To investigate the impact of these factors, we considered nine predictors, as used in \citet{zhan:21:1}, including demographic, economic and environmental factors in 2015; see Table \ref{tab:pre_mor}.

\begin{table}[t]
    \single
    \centering
    \caption{Predictors of human mortality data, quantified as age-at-death distributions for various countries.}
\begin{tabular}{p{0.15\linewidth} | p{0.31\linewidth} | p{0.45\linewidth}}
\hline 
Category & Variables & Explanation \\
\hline 
\multirow{5}{*}{Demography} & 1. population Density & Population per square Kilometer \\
\cline{2-3}
& \multirow{2}{*}{2. Sex Ratio} & number of males per 100 females in the population \\
\cline{2-3}
& \multirow{2}{*}{3. Mean Childbearing Age} & average age of mothers at the birth of their children \\
\hline
\multirow{10}{*}{Economics} & 4. GDP & gross domestic product per capita \\
\cline{2-3}
& \multirow{3}{*}{5. GVA by Agriculture} & percentage of agriculture, hunting, forestry, and fishing activities of gross value added \\
\cline{2-3}
 & \multirow{2}{*}{6. CPI} & consumer price index treating 2010 as the base year\\
\cline{2-3}
& \multirow{2}{*}{7. Unemployment Rate} & percentage of unemployed people in the labor force\\ 
\cline{2-3}
& \multirow{2}{*}{8. Health Expenditure} & percentage of expenditure on health of GDP\\
\hline
Environment & 9. Arable Land & percentage of total land area \\
\hline 
\end{tabular}
    \label{tab:pre_mor}
\end{table}

The MSPE was calculated through leave-one-out cross-validation. From Table \ref{tab:mor}, the proposed method achieves the most accurate prediction results, with a 16\% and 4\% improvement compared to GFR and SDR, respectively. These results demonstrate that despite the data not residing on a two-dimensional manifold and having a small sample size, the proposed method is superior and robust. 

\begin{table}[t]
\single
\centering
\caption{Average mean squared prediction error of deep \f regression (DFR), global \f regression (GFR) \citep{mull:19:6} and sufficient dimension reduction (SDR) \citep{zhan:21:1} for human mortality data.}
\label{tab:mor}
\begin{tabular}{ccc}
\hline
DFR & GFR & SDR\\
\hline
26.377 & 31.322 & 27.602\\
\hline
\end{tabular}
\end{table}

Additionally, our proposed method also allows for the investigation of the effects of predictors. Here, we further investigate the effect of GDP and health expenditure on mortality distributions while holding other predictors constant at their median values. Figure \ref{fig:mor} (left) shows that the fitted age-at-death densities shift towards the right as GDP increases, indicating an increase in life expectancy. Additionally, the probability of death before age 5 declines with increasing GDP, suggesting a lower infant mortality rate. Similarly, as the percentage of health expenditure in GDP increases, the age-at-death densities also shift rightward, indicating improved life expectancy. Conversely, as the percentage of agricultural GVA increases, the age-at-death densities shift leftward, indicating a decrease in life expectancy. This trend is consistent with the observation that as a country develops, GDP increases and the agricultural share of GVA commonly decreases.

\begin{figure}[t]
    \single
    \centering
    \begin{subfigure}[]{0.33\textwidth}
        \centering
        \includegraphics[width=\linewidth]{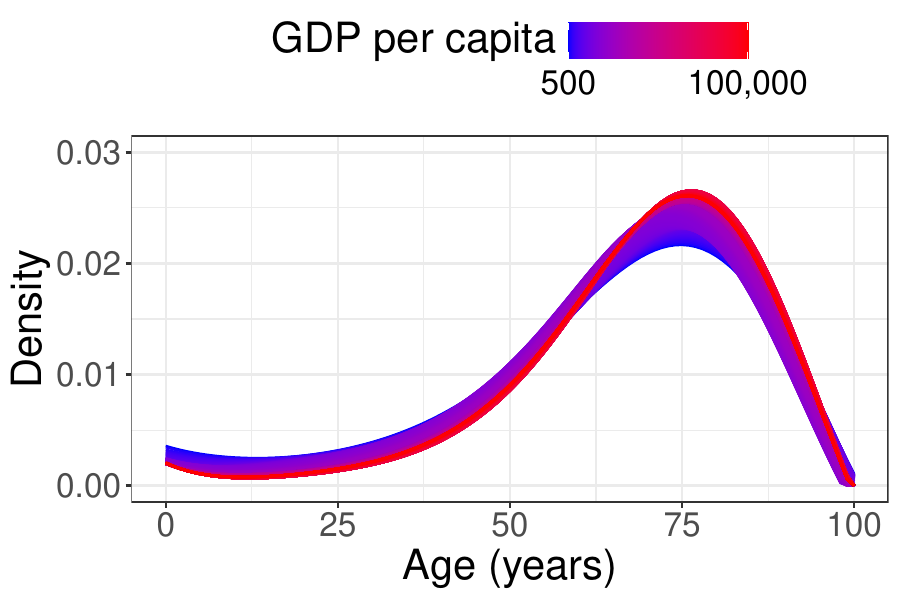}
    \end{subfigure}%
    \begin{subfigure}[]{0.33\textwidth}
        \centering
        \includegraphics[width=\linewidth]{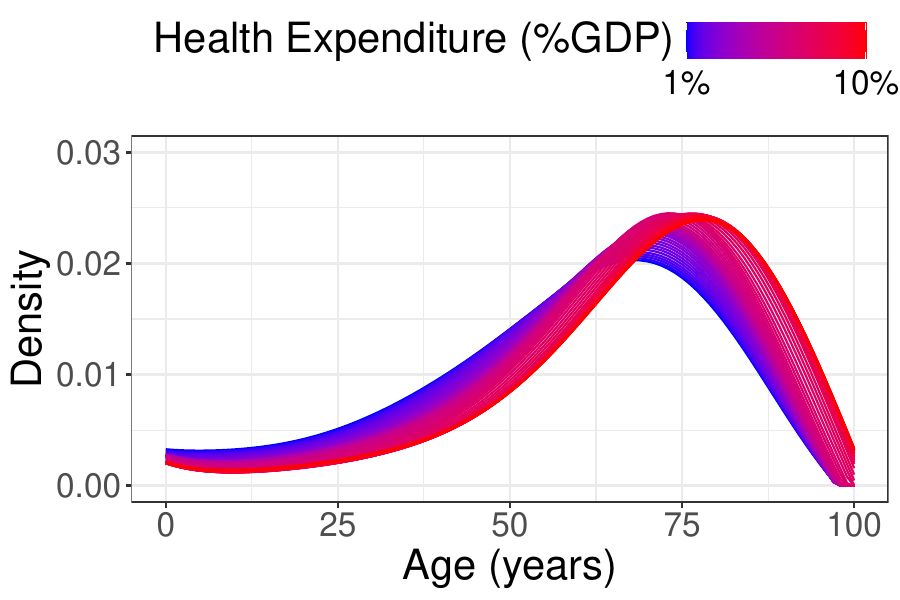}
    \end{subfigure}
    \begin{subfigure}[]{0.33\textwidth}
        \centering
        \includegraphics[width=\linewidth]{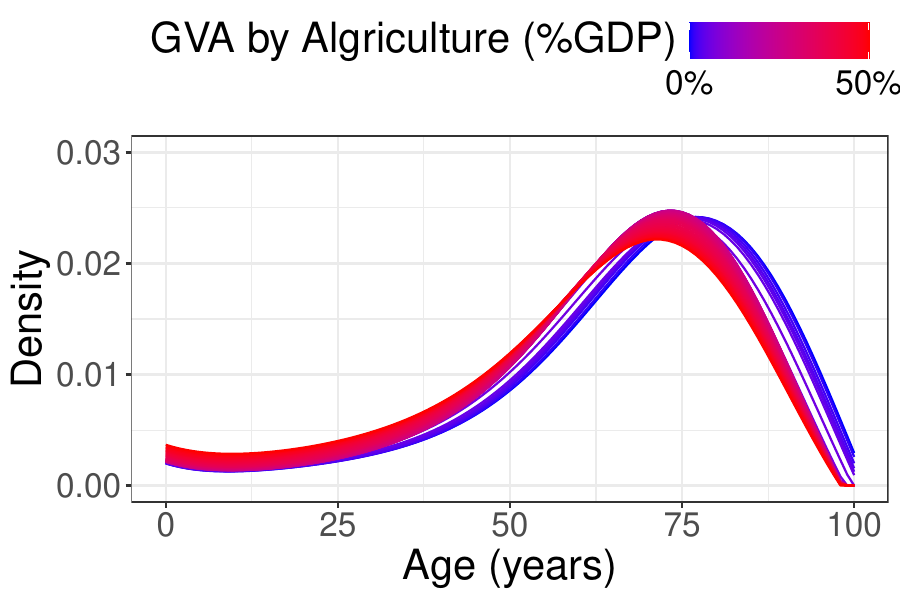}
    \end{subfigure}
    \caption{Age-at-death densities at different levels of GDP (left), health expenditure (middle) and agricultural GVA (right).}
    \label{fig:mor}
\end{figure}
\end{document}